%% file: paper_b_theory_spine.tex
\documentclass[11pt]{article}

\usepackage[margin=1in]{geometry}
\usepackage{amsmath,amssymb,amsthm}
\usepackage{bm}
\usepackage{booktabs}
\usepackage{enumitem}
\usepackage{hyperref}
\usepackage{natbib}
\usepackage[T1]{fontenc}
\usepackage[utf8]{inputenc}
\usepackage[sc,osf]{mathpazo}

\newcommand{\E}{\mathbb{E}}
\newcommand{\Pp}{\mathbb{P}}
\newcommand{\R}{\mathbb{R}}
\newcommand{\one}{\mathbf{1}}
\newcommand{\ind}{\mathbf{1}}
\newcommand{\sgn}{\operatorname{sgn}}
\newcommand{\logit}{\operatorname{logit}}
\newcommand{\Var}{\operatorname{Var}}
\newcommand{\Cov}{\operatorname{Cov}}
\newcommand{\BO}{\mathrm{BO}}

\newtheorem{theorem}{Theorem}[section]
\newtheorem{proposition}[theorem]{Proposition}
\newtheorem{corollary}[theorem]{Corollary}
\newtheorem{lemma}[theorem]{Lemma}
\theoremstyle{definition}
\newtheorem{assumption}[theorem]{Assumption}
\newtheorem{definition}[theorem]{Definition}
\theoremstyle{remark}
\newtheorem{remark}[theorem]{Remark}

\title{\bf Activity-Conditioned Residual Association\\
from Aggregated Relational Data}
\author{Yen-hsuan Tseng}
\date{}

\hypersetup{
  hidelinks,
  pdftitle={Activity-Conditioned Residual Association from Aggregated Relational Data},
  pdfauthor={Yen-hsuan Tseng},
  pdfkeywords={aggregated relational data, node activity, conditional inference, network diagnostics, sampled networks}
}

\begin{document}
\maketitle

\begin{abstract}
Aggregated relational data (ARD) record how many ties sampled respondents have
to prespecified groups without revealing individual dyads. We develop a
conditional test for prespecified cross-group concordance beyond an
additive-activity network model. When the groups form an exhaustive
partition, respondent degree is observed exactly. For arbitrary fixed activity
values under an independent-Bernoulli additive-logit null, conditioning two
respondents on equal degree yields a finite nonpositive sign restriction for
their two disjoint group-count differences. Analyst-randomized pair thinning
gives a conservative baseline. Under a stronger fixed-cell design
with group-specific smooth activity profiles and positive cross-role overlap,
an observable clipped full-pair statistic gives conservative one-network
inference with graph-independent sampled respondents. Finite and growing
matched-law constructions establish that the target contains bivariate
information absent from a collapsed binary count. The full-pair procedure is
uniformly consistent on a primitive relative-open neighborhood of a specified
rank-one alternative. Fixed-dimensional lattice local limits make the
conditioning and availability requirements explicit.
\end{abstract}

\noindent\textbf{Keywords:} aggregated relational data; node activity;
conditional inference; network specification diagnostics; sampled networks.

\section{Introduction}

ARD replace named dyads by respondent-level counts over trait groups. They
support personal-network-size estimation, population-size estimation, and
model-based inference about latent network structure
\citep{mccormick2010personal,breza2020ard,Breza2023}. These counts also provide
an observable basis for assessing whether node activity explains prespecified
patterns of group contact. We develop a conditional test of cross-group
concordance using the counts and degrees available from ARD.

Node activity is standard in network formation
\citep{graham2017econometric}; ARD models study correlated counts and residual
diagnostics \citep{laga2023correlated,laga2026diagnostics}; and recent work
examines both low-rank trait informativeness and flexible correlated ARD models
\citep{hayes2026spectral,seymour2026neural}. Building on these literatures, we
develop conditional inference for residual association from ARD. The method
combines exact activity conditioning and a finite sign restriction with one-network
calibration for sampled respondents. Under stronger smooth-design assumptions,
an observable full-pair studentizer improves information use. Matched-law
constructions establish the information retained by separate group queries.
An open-class power result and a separated-support counterexample link test
performance to the availability of degree-matched pairs.

A motivating use case is a closed organization with two named subgroups $G$
and $H$ inside a broader role $B=G\cup H$, and an exhaustive outside role $O$.
Before inspecting the network, an analyst may ask whether, among equally
connected respondents drawn from $B$ and $O$, the difference in reported ties
to $G$ tends to move in the same direction as the difference in reported ties
to $H$ more often than additive activity permits. A rejection provides evidence
of cross-group concordance beyond additive activity.

\paragraph{Design conditions.}
The finite sign restriction holds for arbitrary fixed activities under the
independent-Bernoulli additive-logit null.
The randomized baseline adds a repeated-cell activity array,
graph-independent positive-share respondent sampling, and analyst thinning.
The full-pair theorem uses fixed cell multiplicity, group-specific
increasing $C^{1,1}$ activity profiles, role-separated observed-degree
smoothing, and positive broad/outside activity overlap.  The local
interpretation uses a deterministic marked-profile assumption with a common
activity density.  The power theorem uses fixed-slot rank-one
$n^{-1/2}$ logit scaling and establishes uniform consistency on a primitive
relative-open neighborhood of the prespecified $c=6$ design.

The rest of the paper follows that logical order. Section~\ref{sec:experiment}
defines the observable experiment and the collision-normalized target.
Section~\ref{sec:exact} derives the exact finite null restriction.
Section~\ref{sec:baseline} gives the randomized baseline test, and
Section~\ref{sec:full-pair} gives the stronger full-pair procedure.
Section~\ref{sec:meaning} establishes finite nonredundancy and a local-response
interpretation. Section~\ref{sec:power} adds the growing matched-law
construction and primitive open-class power. Section~\ref{sec:finite} reports
finite-sample behavior under two null profiles, the declared path, two
respondent shares, and an unavailable design.

\section{Observable ARD Experiment and Target}
\label{sec:experiment}

Let $V_n=\{1,\ldots,n\}$ be a closed population with an unobserved undirected
network $A=(A_{ij})$, where $A_{ij}=A_{ji}$ and $A_{ii}=0$. The fixed,
disjoint groups $G_n,H_n,O_n$ form an
exhaustive partition of $V_n$, and $B_n=G_n\cup H_n$. For every sampled
respondent $i$, ARD reveal
\[
X_i=\sum_{k\in G_n}A_{ik},\qquad
Z_i=\sum_{k\in H_n}A_{ik},\qquad
Y_{iO}=\sum_{k\in O_n}A_{ik}.
\]
The sums may include the respondent's own group label; the zero diagonal makes
the self term identically zero.
Exhaustiveness makes the full closed-population degree observable:
\[
D_i=X_i+Z_i+Y_{iO}.
\]
Write $\Lambda(t)=(1+e^{-t})^{-1}$ for the logistic cdf.

For a pair of respondents, define
\[
h_{ij}
=
\ind\{D_i=D_j\}
\sgn(X_i-X_j)\sgn(Z_i-Z_j)
\]
and the prespecified broad/outside pair frame
\[
A_{ij}^{\BO}
=
\ind\{i\in B_n,j\in O_n\}
+\ind\{i\in O_n,j\in B_n\}.
\]
The frame is fixed before observing the network. Write
\[
K_n^{\BO}=\sum_{i<j}A_{ij}^{\BO}h_{ij},\qquad
M_n^{\BO}=\sum_{i<j}A_{ij}^{\BO}\ind\{D_i=D_j\}.
\]

\begin{definition}[Collision-normalized residual association]
Whenever $\E_P M_n^{\BO}>0$, define
\[
\tau_n(P)=\frac{\E_P K_n^{\BO}}{\E_P M_n^{\BO}}.
\]
The object summarizes the signed association between two disjoint group-count
differences among degree-matched broad/outside pairs.
\end{definition}

The sign records concordance: it is positive when the two group-count
differences point together and negative when they point apart. Under the
additive null, degree conditioning removes respondent activity from the
conditional leave-one-out row law. The pair frame is specified from role labels
before observing the network; degree matching uses the observed group counts.

Let $R_i$ indicate respondent inclusion. Respondents are sampled independently
of the graph under either common-probability Bernoulli sampling or fixed-size
simple random sampling with limiting share $\rho\in(0,1)$. The observable pair
weight is $W_{ij}=R_iR_jA_{ij}^{\BO}$.  Write
$S=(R_1,\ldots,R_n)$ for the realized respondent frame; all superscripts $S$
below refer to conditional one-network inference given that frame.

\section{Exact Activity Removal}
\label{sec:exact}

The null model allows arbitrary node activity:
\begin{equation}
\logit(p_{ij})=\mu+\alpha_i+\alpha_j,
\qquad i<j,
\label{eq:additive-null}
\end{equation}
with dyads conditionally independent given node attributes.

\begin{theorem}[Exact finite activity-null sign]
\label{thm:finite-sign}
Under \eqref{eq:additive-null}, fixed disjoint $G_n,H_n$, a common eligible
alter pool after deleting the focal pair, and any deterministic or
graph-independent respondent/pair frame,
\[
\E_0 h_{ij}\le0
\]
for every fixed pair. Consequently,
\[
\E_0\sum_{i<j}W_{ij}h_{ij}\le0,
\qquad
\tau_n(P_0)\le0
\]
whenever the target denominator is positive.
\end{theorem}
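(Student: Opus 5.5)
The plan is to condition on the common degree, show that the two focal rows then become i.i.d.\ draws from an activity-free law, and then prove a Kendall-type sign inequality for that law using negative association.

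\textbf{Step 1: activity drops out given the degree.} Fix a pair $i<j$ and let $\mathcal K$ be the common eligible alter pool left after deleting the focal pair. Following the hypothesis of the theorem, I read $h_{ij}$ as built from row counts over $\mathcal K$, so the focal dyad $A_{ij}$ either is excluded or enters both rows identically and can be conditioned on. Under \eqref{eq:additive-null}, row $i$ restricted to $\mathcal K$ has law
\[
\Pp(A_{i\cdot}=a)\propto \exp\Bigl(\alpha_i\sum_k a_k\Bigr)\prod_{k\in\mathcal K}\bigl(e^{\mu+\alpha_k}\bigr)^{a_k}.
\]
Conditioning on the row degree $d$ therefore removes $\alpha_i$. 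The conditional law is $\propto\prod_k w_k^{a_k}$ with $w_k=e^{\mu+\alpha_k}$, and this is the same for $i$ and $j$. Rows $i$ and $j$ over $\mathcal K$ involve disjoint dyads, so they are independent. Hence, given $D_i=D_j=d$, the triples $(X_i,Z_i,Y_{iO})$ and $(X_j,Z_j,Y_{jO})$ are i.i.d.

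Their common law has a concrete form. Let $\tilde X,\tilde Z,\tilde Y$ be independent Poisson-binomial counts over $G_n\cap\mathcal K$, $H_n\cap\mathcal K$ and $O_n\cap\mathcal K$ with success odds $w_k$. The common law is that of $(\tilde X,\tilde Z,\tilde Y)$ conditioned on $\tilde X+\tilde Z+\tilde Y=d$. Poisson-binomial laws are log-concave, so by the Joag-Dev--Proschan theorem this conditioned vector is negatively associated.

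\textbf{Step 2: the sign inequality (main step).} Let $(X_1,Z_1)$ and $(X_2,Z_2)$ be i.i.d.\ negatively associated pairs. I will show $\E\,\sgn(X_1-X_2)\sgn(Z_1-Z_2)\le0$.

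First condition on $(X_2,Z_2)=(x,z)$. Write $\sgn(X_1-x)=\ind\{X_1>x\}-\ind\{X_1<x\}$, and similarly for $Z_1$. Expanding the product gives four indicator products. Negative quadrant dependence makes each covariance term carry a nonpositive sign: increasing$\times$increasing and decreasing$\times$decreasing covariances are $\le0$, and the mixed ones enter with a minus sign. Therefore
\[
\E[\sgn(X_1-x)\sgn(Z_1-z)]\le a(x)\,b(z),
\]
where $a(x)=\E\,\sgn(X_1-x)$ and $b(z)=\E\,\sgn(Z_1-z)$ are both nonincreasing.

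Next average over $(x,z)\sim(X_2,Z_2)$. Negative association applied to the two nonincreasing functions gives
\[
\E[a(X_2)b(Z_2)]\le \E a(X_2)\,\E b(Z_2).
\]
By exchangeability, $\E a(X_2)=\E\,\sgn(X_1-X_2)=0$, so the right-hand side is $0$.

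This step is where the real content lies, and the exhaustive-partition constraint $X+Z+Y=d$ is essential to it. I expect the main care to be in verifying the Joag-Dev--Proschan conditions, namely log-concavity of the Poisson-binomial laws and disjointness of the groups within $\mathcal K$.

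\textbf{Step 3: aggregation.} Summing Step 2 over $d$ gives $\E_0[h_{ij}\ind\{D_i=D_j\}]\le0$. The event $\{D_i\neq D_j\}$ contributes $0$ by the definition of $h_{ij}$, so $\E_0h_{ij}\le0$.

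The weights $W_{ij}$ are nonnegative and either deterministic or independent of the graph. Hence
\[
\E_0\sum_{i<j} W_{ij}h_{ij}=\sum_{i<j}\E W_{ij}\,\E_0h_{ij}\le0.
\]
Taking $W=A^{\BO}$ gives $\E_0K_n^{\BO}\le0$. Dividing by $\E_0M_n^{\BO}>0$ yields $\tau_n(P_0)\le0$.
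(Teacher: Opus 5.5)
There is a genuine gap in Step~1, and it carries into Step~2. The statistic $h_{ij}$ in the theorem uses the full counts $X_i=\sum_{k\in G_n}A_{ik}$ and $Z_i=\sum_{k\in H_n}A_{ik}$. These include the focal dyad whenever the partner belongs to the queried group. The common-alter-pool hypothesis concerns the rows after deletion; it does not redefine $h_{ij}$.

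The shared edge does enter both degrees identically: given $A_{ij}=e$, $\{D_i=D_j\}=\{D_i^-=D_j^-\}$. It does not enter the group counts identically. Write $X^-,Z^-$ for counts over the pool after deleting both focal nodes. On $A_{ij}=e$,
\[
X_i-X_j=X_i^--X_j^-+a,\qquad Z_i-Z_j=Z_i^--Z_j^-+b,
\]
with $a=e\{\ind(j\in G_n)-\ind(i\in G_n)\}$ and $b=e\{\ind(j\in H_n)-\ind(i\in H_n)\}$. For a natural $B/O$ pair with $A_{ij}=1$, exactly one of $a,b$ equals $\pm1$. For a $G/H$ pair, both are nonzero and have opposite signs. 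Hence your claim that the full triples $(X_i,Z_i,Y_{iO})$ and $(X_j,Z_j,Y_{jO})$ are i.i.d.\ given the degree collision is false. Only the deleted triples are i.i.d.

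As a result, Step~2 covers only the case $a=b=0$. With a shift, your conditioning step still works, because $X_1\mapsto\sgn(X_1-x+a)$ remains increasing. But exchangeability no longer gives $\E\,\sgn(X_1-X_2+a)=0$. You need the symmetric-difference computation
\[
\E\,\sgn(\Delta_X+a)=a\{\Pp(\Delta_X=0)+\Pp(\Delta_X=1)\}=:a\,c_X,\qquad \Delta_X=X_1-X_2,\quad a\in\{-1,0,1\}.
\]
The final bound then becomes $ab\,c_Xc_Z$ with $c_X,c_Z\ge0$. When one shift vanishes, as for $B/O$ pairs, this bound is $0$. For $G/H$ pairs it is nonpositive only because disjointness of $G_n$ and $H_n$ forces $ab\le0$.

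That is a second, separate use of disjointness, and your argument never makes it. Your proof uses disjointness only to obtain negative association. Overlapping groups could produce same-sign shifts, as the paper's remark notes.

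With this repair, your Step~2 becomes the paper's shifted-sign lemma. Your Joag-Dev--Proschan route to negative association is a valid substitute for the paper's strongly Rayleigh argument: it conditions the log-concave Poisson-binomial group blocks on their sum.
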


\paragraph{Proof sketch.}
Condition on the shared edge and equal total degree. The common respondent
odds factor cancels from each leave-one-out row, leaving two independent
weighted fixed-size samples from the same alter pool. Counts into disjoint
groups are negatively associated. The shared edge induces opposite-signed or
zero integer shifts, preserving the nonpositive sign product. Appendix
\ref{app:exact-sign} gives the conditional-row and shifted-sign lemmas and the
complete proof.

\begin{remark}[Range of validity]
The queried groups must be disjoint, and respondent selection must be
independent of the graph. Overlapping groups can give the shared-edge shifts
the same sign, while graph-selected respondents can reweight network events.
\end{remark}

\section{Randomized Baseline Calibration}
\label{sec:baseline}

\begin{assumption}[Baseline regular array]
\label{ass:baseline}
Let $n=40m_n$. The population has $m_n$ activity cells of size 40, with cell
centers given by deterministic midpoint quantiles of a fixed distribution
$F$ whose $C^1$ density $f$ on a compact interval $I$ is bounded above and away
from zero. Every activity cell
contains the exact group ratio
$|G|:|H|:|O|=10:10:20$. Respondents follow one of the two graph-independent
positive-share designs above.
\end{assumption}

Given the observed network and respondent frame, independently draw analyst
masks for eligible pairs,
\[
J_{ij}\sim\operatorname{Bernoulli}(\pi_n),
\]
and define
\[
S_{2,R,n}^{\BO}=\sum_{i<j}W_{ij}h_{ij}^2,
\]
\[
Z_n=
\frac{\sum_{i<j}J_{ij}W_{ij}h_{ij}}
{\{\pi_n(1-\pi_n)S_{2,R,n}^{\BO}\}^{1/2}},
\]
with $Z_n=0$ when the denominator vanishes.

\begin{theorem}[Conservative asymptotic one-network baseline test]
\label{thm:thinning-null}
Under Assumption~\ref{ass:baseline}, the additive null, either declared
respondent design, and $\pi_n=n^{-1/2}$,
\[
\limsup_{n\to\infty}
\Pp_0\{Z_n>z_{1-\alpha}\}\le\alpha.
\]
More generally, for $\pi_n=n^{-\beta}$ with $0<\beta<1$, the proved
calibration envelope is
\[
\alpha+O\{n^{-\beta/2}+n^{-(1-\beta)/2}+n^{-1/2}\}.
\]
The choice $\beta=1/2$ balances the two leading terms in this calibration
envelope.
\end{theorem}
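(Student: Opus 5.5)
We decompose the thinned numerator as
\[
T_n=\sum_{i<j}J_{ij}W_{ij}h_{ij}
=\pi_n\sum_{i<j}W_{ij}h_{ij}+\sum_{i<j}(J_{ij}-\pi_n)W_{ij}h_{ij}
=:\pi_n K_{R,n}+U_n ,
\]
and treat the two pieces separately. The second piece is the analyst-randomization term. Conditional on $(A,S)$, it is a sum of independent centered bounded variables with conditional variance exactly $\pi_n(1-\pi_n)S_{2,R,n}^{\BO}$. This holds because $h_{ij}^2=h_{ij}^2\cdot 1$, and $W_{ij}\in\{0,1\}$ gives $W_{ij}^2h_{ij}^2=W_{ij}h_{ij}^2$.

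Conditionally on $(A,S)$, we apply a Berry--Esseen bound to $U_n/\{\pi_n(1-\pi_n)S_{2,R,n}^{\BO}\}^{1/2}$. The summands are bounded by one, so the Kolmogorov distance to $N(0,1)$ is at most a constant times $\{\pi_n S_{2,R,n}^{\BO}\}^{-1/2}$. Under Assumption~\ref{ass:baseline} with positive-share sampling, we will show $S_{2,R,n}^{\BO}\asymp n$ with probability tending to one, with exponentially small failure probability. The argument is as follows. The number of broad/outside sampled pairs is of order $n^2$. Within an activity cell of size 40, degrees are sums of $n$ Bernoulli variables with essentially equal means, so each degree lies within $O(\sqrt n)$ of the cell mean. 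The local limit theorem then gives $\Pp\{D_i=D_j\}\asymp n^{-1/2}$ for pairs in the same or nearby cells. Cells at quantile distance $O(n^{-1/2})$ contribute, there are $O(n^{1/2})$ such cells per row, and the density bounds on $f$ keep the activity profile regular. Given $D_i=D_j$, the event that both sign factors are nonzero has probability bounded below. Altogether $\E S_{2,R,n}^{\BO}\asymp n^2\cdot n^{-1/2}=n^{3/2}$. We correct the scale accordingly: $\pi_nS_{2,R,n}^{\BO}\asymp n^{3/2-\beta}$, and the Berry--Esseen term is $O(n^{-(3/2-\beta)/2})$. The concentration of $S_{2,R,n}^{\BO}$ itself will come from a bounded-differences or Efron--Stein argument over the dyads, since each dyad affects $O(n)$ pairs.

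For the signal piece we use Theorem~\ref{thm:finite-sign}, which gives $\E_0K_{R,n}\le0$. It remains to control the fluctuation $\pi_n(K_{R,n}-\E_0K_{R,n})$ relative to the normalizer $\{\pi_nS_{2,R,n}^{\BO}\}^{1/2}\asymp n^{(3/2-\beta)/2}$. We bound $\Var_0(K_{R,n})$ by splitting the pair sums into three cases.
\begin{itemize}[leftmargin=*]
\item Pairs sharing a respondent are handled directly.
\item Disjoint pairs depend on each other only through the $O(1)$ shared dyads and through the degrees. The covariance of the degree-collision indicators for disjoint pairs is of relative order $n^{-1}$.
\end{itemize}
This gives $\Var_0K_{R,n}=O(n^{5/2})$ or so. Hence
\[
\pi_n\,\mathrm{sd}(K_{R,n})/\{\pi_nS_{2,R,n}^{\BO}\}^{1/2}
=O\bigl(n^{-\beta}n^{5/4}/n^{3/4-\beta/2}\bigr)=O(n^{1/2-\beta/2}),
\]
which is too weak. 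The fluctuation therefore has to be sharpened. The $O(n^{-(1-\beta)/2})$ envelope term suggests that what is needed is $\mathrm{sd}(K_{R,n})=O(n^{3/2}\cdot n^{-1/2+\dots})$, that is, $\Var_0 K_{R,n}=O(n^{2})$ up to the right scale. This would follow if the sign variables $h_{ij}$ for disjoint pairs are nearly uncorrelated, with covariances $O(n^{-3/2})$ after conditioning on degree collisions.

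Combining the pieces, Chebyshev plus the nonpositive mean gives
\[
\Pp_0\{Z_n>z_{1-\alpha}\}\le \Pp\{U_n/\sigma_n>z_{1-\alpha}-\epsilon_n\}+\Pp\{\pi_n(K-\E K)/\sigma_n>\epsilon_n\},
\]
and optimizing $\epsilon_n$ yields the stated envelope. The final $n^{-1/2}$ term absorbs the discreteness and concentration errors of $S_{2,R,n}^{\BO}$. Balancing $n^{-\beta/2}$ against $n^{-(1-\beta)/2}$ gives $\beta=1/2$.

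The main obstacle is the sharp variance bound for $K_{R,n}$ under the null. This requires a local-limit decoupling of the joint law of $(D_i,D_j,D_k,D_l)$ and of the sign pairs for disjoint broad/outside pairs sharing only one dyad each way. I would attempt it by conditioning on the dyads incident to both pairs and applying the conditional-row representation from Appendix~\ref{app:exact-sign}, then using a bivariate local CLT with error $O(n^{-1})$ for the remaining independent rows.
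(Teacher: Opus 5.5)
Your skeleton matches the paper's decomposition. You separate the mask fluctuation, apply Berry--Esseen conditionally on the network and respondent frame, and discard the center $\pi_n\E_0K_{R,n}\le0$ by Theorem~\ref{thm:finite-sign}. The gap is underneath: the scale computation is wrong, which is why your variance step does not close.

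The expected collision mass is of order $n$, not $n^{3/2}$. The mean of $D_i-D_j$ is of order $n|\alpha_i-\alpha_j|$, against a standard deviation of order $\sqrt n$, so
\[
\Pp_0(D_i=D_j)\le Cn^{-1/2}\exp\{-c_0n(\alpha_i-\alpha_j)^2\},
\]
and only partners at activity gap $O(n^{-1/2})$ contribute. Your own count of $O(\sqrt n)$ nearby cells says the same thing. The packing bound $\sup_i\sum_je^{-cn(\alpha_i-\alpha_j)^2}=O(\sqrt n)$ leaves $O(1)$ expected collision mass per row, so $\E S_{2,R,n}^{\BO}\asymp n$. The lower bound comes from the local limit on bounded scaled gaps and the $10{:}10{:}20$ composition of every cell. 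Your product $n^2\cdot n^{-1/2}$ treats every broad/outside pair as a near-collision.

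At the correct scale, the mask Berry--Esseen error is $(n\pi_n)^{-1/2}=n^{-(1-\beta)/2}$, which is exactly the envelope term you could not reproduce. The network term $B_n=\{\pi_n/(1-\pi_n)\}^{1/2}(K_{R,n}-\E_0K_{R,n})/(S_{2,R,n}^{\BO})^{1/2}$ is of order $n^{-\beta/2}$ precisely when $\Var_0(K_{R,n})=O(n)$. Your targets, $\Var_0K_{R,n}=O(n^2)$ and disjoint covariances of order $n^{-3/2}$, are therefore miscalibrated.

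The substantive gap is that you never establish $\Var_0(K_{R,n})=O(n)$, nor the same bound for the collision count that you need for $\Pp_0\{S_{2,R,n}^{\BO}<cn\}\to0$. The tools you name cannot deliver it. A single dyad toggle changes $O(1)$ pair kernels in expectation, so Efron--Stein over the $\asymp n^2$ dyads gives only $O(n^2)$. That is the square of $\E S_{2,R,n}^{\BO}$ and leaves $B_n$ of order $n^{(1-\beta)/2}$. Bounded differences with worst-case influence $O(n)$ is worse.

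The paper gets $O(n)$ by splitting covariances by endpoint overlap. Same-pair terms are at most $\E S_{2,n}^{\BO}=O(n)$. Shared-row terms use the two-dimensional lattice bound $\Pp(D_i-D_j=d_1,D_i-D_k=d_2)\le Cn^{-1}e^{-c_0n\{(\alpha_i-\alpha_j)^2+(\alpha_i-\alpha_k)^2\}}$, which sums to $n^{-1}\sum_i\{\sum_je^{-cn(\alpha_i-\alpha_j)^2}\}^2=O(n)$. For four distinct respondents, conditioning on the four cross dyads makes the two kernels conditionally independent. Each conditional mean then moves across cross-edge states by at most $Cn^{-1}e^{-c_0n(\alpha_i-\alpha_j)^2}$, a lattice first-difference bound. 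The range-covariance inequality gives $Cn^{-2}$ times the product envelope per quadruple, summing to $n^{-2}\{\sum_{i,j}e^{-cn(\alpha_i-\alpha_j)^2}\}^2=O(n)$. Your closing plan is this route, aimed at the wrong order.

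The exact graph-independence covariance identity transfers the bound to the sampled frame; under fixed-size SRS the extra disjoint term is $O(n^{-1})\{\sum_{i<j}|\E_0h_{ij}|\}^2=O(n)$. Chebyshev on the collision count plus Markov on the $O(\sqrt n)$ tie defect gives $\Pp_0\{S_{2,R,n}^{\BO}<cn\}=O(n^{-1/2})$, which is the source of the last envelope term.

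Finally, your union bound with Chebyshev and an optimized $\epsilon_n$ gives at best $n^{-\beta/3}$ for the network term, not $n^{-\beta/2}$. To reach $n^{-\beta/2}$, note that $B_n$ and the nonpositive center are fixed given the network and frame. Apply the mask Berry--Esseen bound conditionally, and use Lipschitz continuity of the normal cdf. This gives $\alpha+C\E\{|B_n|\ind(\mathcal A_n)\}$ plus the Berry--Esseen and $\Pp_0(\mathcal A_n^c)$ terms, where $\mathcal A_n=\{S_{2,R,n}^{\BO}\ge cn\}$ and $\E\{|B_n|\ind(\mathcal A_n)\}\le C\sqrt{\pi_n}$.
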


\paragraph{Proof outline.}
The proof combines Theorem~\ref{thm:finite-sign}, a conditional
Berry--Esseen bound for the analyst mask, order-$n$ collision mass, order-$n$
network variance, and graph-independent respondent-frame transfer. The
randomization attenuates the unobserved network center and can be
conservative at practical sizes. Appendix~\ref{app:baseline-proof} states the
needed local-mass lemmas and proves both the null limit and the displayed
polynomial calibration envelope.

The analyst mask is an asymptotic calibration device used jointly with the
network law and graph-independent respondent design.

\section{Full-Pair One-Network Inference}
\label{sec:full-pair}

The thinning theorem isolates a simple calibration mechanism, but discards
most degree-matched pairs. Stronger design conditions permit use of the
entire natural pair frame with an observable correction for shared-row
dependence.

\begin{assumption}[Regular full-pair profile]
\label{ass:full-pair-profile}
Let $n=Lm_n$, where $L$ and the positive integers $c_G,c_H,c_O$ are fixed and
sum to $L$.  Every activity cell contains exactly these role counts. At
$u_{k,n}=(k-1/2)/m_n$, a group-$g$ node has activity
\[
\alpha_{g,k}=m_g(u_{k,n}),\qquad g\in\{G,H,O\},
\]
where the three fixed maps belong to a common $C^{1,1}$ ball,
\[
0<c_m\le m_g'\le C_m,\qquad
\operatorname{Lip}(m_g')\le L_m,
\]
and all additive-logit predictors lie in one compact interval. Respondents
follow either declared graph-independent design with limiting share
$\rho\in(0,1)$.

Put $\pi_g=c_g/L$, $I_g=m_g([0,1])$, and, with zero extension outside $I_g$,
\[
\lambda_g(a)
=\frac{\pi_g}{m_g'\{m_g^{-1}(a)\}}\ind\{a\in I_g\}.
\]
The broad/outside overlap satisfies
\[
\Omega_{BO}
=\int\{\lambda_G(a)+\lambda_H(a)\}\lambda_O(a)\,da>0.
\]
Within sampled $B$ and $O$ respondents separately, order nodes by observed
full degree using a deterministic prespecified tie-break and form consecutive
balanced blocks of nominal sampled-role size to the $2/3$ power, merging the
terminal remainder. Thus their block sizes $\ell_n$ satisfy
\[
\ell_n\to\infty,\qquad
\ell_n/n\to0,\qquad
\frac{\sqrt{n\log n}}{\ell_n}\to0.
\]
\end{assumption}

For a realized respondent frame, let
\[
\mathcal E_n^S
=\{\{i,j\}:i<j,\ R_iR_jA_{ij}^{\BO}=1\}
\]
be the sampled natural pair set and define
\[
U_n^S=\sum_{e\in\mathcal E_n^S}h_e,\qquad
s_i^S=\sum_{e\in\mathcal E_n^S:e\ni i}h_e,\qquad
A_n^S=\sum_{e\in\mathcal E_n^S}h_e^2,
\]
\[
T_n^S=\sum_i(s_i^S)^2-2A_n^S.
\]
Let $\mathcal C_i^S$ be the observed-degree block containing $i$.  The
leave-one-out row smoother is
\[
\widehat b_i^S
=\frac{1}{|\mathcal C_i^S|-1}
\sum_{j\in\mathcal C_i^S\setminus\{i\}}s_j^S,
\]
with $\widehat b_i^S=0$ off the sampled frame.

\begin{definition}[Observable clipped full-pair procedure]
\label{def:full-pair-procedure}
Define
\[
\widehat V_n^S
=A_n^S+T_n^S-\sum_i(\widehat b_i^S)^2.
\]
With the deterministic scale window
\[
a_n=\{\log(n+e)\}^{-2},\qquad b_n=\log(n+e),
\]
set
\[
\widehat V_{n,\dagger}^S
=\{\widehat V_n^S\vee na_n\}\wedge nb_n,\qquad
\mathcal T_{n,\dagger}^S
=\frac{U_n^S}{(\widehat V_{n,\dagger}^S)^{1/2}}.
\]
The full-pair test rejects at one-sided level $\alpha$ when
$\mathcal T_{n,\dagger}^S>z_{1-\alpha}$.
\end{definition}

The role-separated blocks, degree ordering, tie-break, and clipping window
are prespecified parts of the procedure.

\begin{theorem}[Conservative full-pair one-network inference]
\label{thm:full-pair-null}
\leavevmode\par\noindent
Under Assumption~\ref{ass:full-pair-profile} and the additive-logit null with
independent dyads, there are graph-independent
respondent-frame events $\mathcal G_n$, with
$\Pp_S(\mathcal G_n)\to1$, and constants $0<c<C<\infty$ such that, uniformly
over $S\in\mathcal G_n$,
\[
cn\le \sigma_{n,S}^2
:=\Var_0(U_n^S\mid S)\le Cn,
\]
\[
\frac{\widehat V_n^S}{\sigma_{n,S}^2}
\longrightarrow_{P_0(\cdot\mid S)}1,
\]
and
\[
\frac{U_n^S-\E_0(U_n^S\mid S)}{\sigma_{n,S}}
\Rightarrow N(0,1).
\]
Since $\E_0(U_n^S\mid S)\le0$ exactly, for every fixed
$\alpha\in(0,1/2)$,
\[
\limsup_n\sup_{S\in\mathcal G_n}
\Pp_0\{\mathcal T_{n,\dagger}^S>z_{1-\alpha}\mid S\}
\le\alpha.
\]
The same conservative bound holds after averaging over either declared
respondent design.
\end{theorem}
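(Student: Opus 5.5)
The plan is to treat $U_n^S$, conditionally on the frame $S$, as a generalized U-statistic of the independent dyads. First I will fix the good frame events $\mathcal G_n$. On $\mathcal G_n$, every activity cell and every block of $\ell_n$ consecutive cells contains sampled $B$ and $O$ counts within $o(\ell_n)$ of their expected shares $\rho c_g$. Because $R$ is independent of the graph and the share is positive, Hoeffding or hypergeometric tail bounds with a union bound over $O(n^2)$ windows give $\Pp_S(\mathcal G_n)\to1$ under both declared designs.

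Next I will Hoeffding-decompose $h_{ij}$ as a function of the rows $(A_{ik})_k$ and $(A_{jk})_k$. Because $D_i$, $X_i$ and $Z_i$ are sums of independent Bernoullis with compact logits, a local CLT gives a collision probability $\Pp_0(D_i=D_j)\asymp n^{-1/2}$ whenever the two activities differ by $O(n^{-1/2})$. It decays rapidly when they are far apart. The broad/outside overlap $\Omega_{BO}>0$ together with the $C^{1,1}$ profiles then gives $\asymp n^{3/2}\cdot n^{-1/2}=n$ effectively matched pairs. For the variance, the diagonal term $\sum_e\Var(h_e)$ is of order $n$. The shared-row covariance terms $\sum_i\Var(\E[s_i\mid\text{row } i])$ are also $O(n)$, and the shared-dyad corrections from the edge $A_{ij}$ are lower order. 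The lower bound $cn$ comes from the diagonal part, which is nonnegative and $\gtrsim$ the expected collision mass. Theorem~\ref{thm:finite-sign} already gives $\E_0(U_n^S\mid S)\le0$ pairwise.

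For the CLT I will use a dependency-graph or martingale argument. One route is Stein's method with a decomposable dependency structure: $h_e$ and $h_{e'}$ depend on overlapping dyads through the whole rows, so I will instead write $U_n^S$ as a martingale over dyads revealed in a fixed order. An alternative is to condition on the row-sum projections and apply a Lindeberg CLT to $\sum_i\E[s_i\mid\text{row } i]$ plus a degenerate remainder, whose variance is also $O(n)$ and is handled by a fourth-moment (de Jong) criterion. I would try the de Jong route first, since each summand is bounded and the Lyapunov-type ratios are $O(n^{-1/2}\,\mathrm{polylog})$.

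The main obstacle is the ratio consistency $\widehat V_n^S/\sigma_{n,S}^2\to1$. The variance decomposes as $\sigma^2=\sum_e\E h_e^2-\sum_e(\E h_e)^2+\sum_i\{\E s_i^2-\dots\}$, and $\E(T_n^S)$ estimates $\sum_i(\E s_i)^2+\sum_i\Var(s_i)-(\text{pair terms})$. Subtracting $\sum_i\widehat b_i^2$ should remove the squared conditional means $\sum_i(\E s_i\mid S)^2$. The key step is to show $\widehat b_i^S-\E_0 s_i=o_P(\sqrt{\cdot})$ uniformly, in the sense $\sum_i(\widehat b_i-\E s_i)^2=o_P(n)$. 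Because $\E s_i$ is a Lipschitz function of activity, the smoothing bias over a block of width $\ell_n/n$ in quantile is $O(\ell_n/n)$ per node with $\E s_i=O(1)$, contributing $O(n(\ell_n/n)^2)=o(n)$. The variance of a leave-one-out average of $\ell_n$ weakly dependent $s_j$ is $O(1/\ell_n)$, contributing $O(n/\ell_n)=o(n)$.

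The delicate part is that the blocks are formed from observed degrees, which are random. I will control this by showing that degree order matches activity order up to displacement $O(\sqrt{n\log n})$ in rank, using Bernstein bounds on $D_i-\E D_i$ and the monotonicity of the profiles. The condition $\sqrt{n\log n}/\ell_n\to0$ makes this misassignment negligible. Cross terms between $\widehat b_i$ and $s_i$ are avoided by the leave-one-out construction, up to covariance terms of order $O(1/\sqrt n)$ per pair.

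Finally, $\sigma^2\asymp n$ and consistency imply $\widehat V_n^S\in[na_n,nb_n]$ with probability tending to one, so the clipping is inactive. Then $\mathcal T_{n,\dagger}^S\le(U_n^S-\E U_n^S)/\sigma_{n,S}\cdot(1+o_P(1))$ because the mean is nonpositive, and Slutsky gives the uniform $\limsup\le\alpha$ for $\alpha<1/2$. Averaging over $S$ with $\Pp_S(\mathcal G_n^c)\to0$ gives the unconditional statement.
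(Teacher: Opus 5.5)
There is a genuine gap. You treat the rows $(A_{ik})_k$ as independent apart from the focal edge $A_{ij}$. In an undirected graph, however, rows $u$ and $w$ share the dyad $A_{uw}$ for \emph{every} pair of nodes. So two disjoint natural pairs $e=\{u,v\}$ and $f=\{w,x\}$ are correlated through the four cross dyads $A_{uw},A_{ux},A_{vw},A_{vx}$. Toggling one cross dyad moves a degree by one lattice step and a group count by at most one. Its effect on $\E_0h_e$ is therefore of order $n^{-1}e^{-cn(\alpha_u-\alpha_v)^2}$ (collision mass times a lattice first difference), and each disjoint-pair covariance is $O(n^{-2})$ with Gaussian envelopes in both pair gaps. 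There are order $n^3$ pairs of close pairs, so the crude bound on this four-distinct class is $O(n)$. That is exactly the order of $\sigma_{n,S}^2$, with no sign control.

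Your variance decomposition in the ratio-consistency step omits this class, yet $\widehat V_n^S$ estimates only same-pair and shared-row terms. With $\mu_e=\E_0(h_e\mid S)$ and $b_i^S=\E_0(s_i^S\mid S)$, one has exactly
\[
\sigma_{n,S}^2-\E_0\{A_n^S+T_n^S-\sum_i(b_i^S)^2\mid S\}=V_{4,n}^S+\sum_e\mu_e^2 .
\]
So $\widehat V_n^S/\sigma_{n,S}^2\to1$ requires $V_{4,n}^S=o(n)$, which no crude bound delivers. The same omission undercuts two other steps. First, your lower bound: a Hoeffding decomposition into nonnegative degenerate and row-projection parts is exact only for independent rows, and an $O(n)$ four-distinct sum of unknown sign could offset the diagonal. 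Second, your de Jong route, which also presupposes independent row variables.

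The paper supplies the missing ideas as follows.
\begin{enumerate}[label=(\roman*)]
\item \emph{Directed comparator.} It builds a directed-row comparator with independent ordered-pair coordinates. For this comparator, (B.7) is an exact Hoeffding decomposition, and the degenerate part alone gives $\E r_e^2\ge cn^{-1/2}$ on order $n^{3/2}$ sampled close pairs.
\item \emph{Comparison.} Focal deletion and one-step lattice bounds show that the actual same-pair and shared-row classes differ from the comparator's by $O(\sqrt n)$.
\item \emph{Four-distinct remainder.} It proves $V_{4,n}^S=O_{P_S}(\sqrt n)$ by a product-Bernoulli ANOVA over the four cross dyads. Supports two through four are summable by higher-order lattice differences. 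The support-one terms aggregate exactly into $\sum_{u<v}p_{uv}(1-p_{uv})R_{u\mid v}^SR_{v\mid u}^S$, up to an $O(1)$ correction, where $R_{u\mid v}$ is the total effect of toggling $A_{uv}$ on all pair kernels through $u$.
\item \emph{Reflection cancellation.} A unit degree shift summed over a smooth band of partners integrates a first difference of the Gaussian collision kernel to nearly zero. This gives $R_{u\mid v}=O(n^{-1})$ away from the finitely many support endpoints, instead of the naive $O(n^{-1/2})$. The respondent-mask fluctuation $\xi_{u\mid v}^S$ is handled by a separate second-moment bound.
\item \emph{Normal limit.} The paper avoids row independence by bounding fixed-order cumulants, $|\kappa_r(U_n^S\mid S)|\le C_rn$, through a connected-diagram M\"obius--ANOVA argument.
\end{enumerate}

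The rest of your plan matches the paper: the good-frame construction, degree-rank displacement, block-smoothing bias and variance bookkeeping, clipping, and the conservativeness step. One caveat there is that $b_i$ is only piecewise Lipschitz, with jumps at support endpoints.
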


\paragraph{Proof outline.}
Fixed cell multiplicity yields exact group-specific activity intensities and
a continuous local collision kernel away from finitely many support endpoints.
Absolute same-row and four-distinct bounds control block noise and the sampled
four-distinct remainder. Observed degree ranks consistently order the smooth
row-mean curve, so the leave-one-out block smoother removes the first-order
row projection. Positive broad/outside overlap supplies order $n^{3/2}$
close sampled pairs and hence an order-$n$ variance lower bound. Fixed-order
cumulants then give the centered conditional normal limit; the exact
nonpositive null mean turns it into a feasible conservative test. Appendix
\ref{app:full-pair-proof} gives the complete proof.

\begin{remark}[Activity overlap]
The test calibrates against zero and estimates its variance from observed
pair scores and smoothed row sums. Positive $\Omega_{BO}$ ensures sufficient
overlap between the two respondent roles for the variance lower bound.
If $\Omega_{BO}=0$, the nuisance controls can remain valid while root-$n$
full-pair information fails.
\end{remark}

\section{Scientific Meaning Beyond a Collapsed Binary Count}
\label{sec:meaning}

If an analyst queries only the broad role $B=G\cup H$, the observable core
count is $C_i=X_i+Z_i$ and the allocation of ties between $G$ and $H$ is lost.
The next results compare this natural coarser ARD design with separate queries
for $G$ and $H$, holding the joint law of the shared edge, degrees, and
collapsed counts fixed.
The comparison identifies the bivariate information retained by separate
queries.

\begin{proposition}[Finite bivariate nonredundancy]
\label{prop:nonredundancy}
There exist two finite logistic-Gram probability surfaces with the same joint
law of the shared edge, both respondents' total degrees, and both respondents'
collapsed $G\cup H$ counts, but with pair-sign targets of opposite sign.
Hence the target is not determined by the corresponding binary-core/degree
experiment.
\end{proposition}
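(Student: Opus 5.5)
The proof will be an explicit finite construction on the smallest configuration that makes the target nontrivial. Take a small population with one broad respondent $i\in B$, one outside respondent $j\in O$, and a handful of alters split into $G$, $H$ and $O$; each group needs at least two alters so that $\sgn(X_i-X_j)\sgn(Z_i-Z_j)$ can take both signs. A \emph{logistic-Gram surface} will mean edge probabilities $p_{kl}=\Lambda(\mu+\langle v_k,v_l\rangle)$ with latent vectors $v_k$. This class contains the additive null and also allows group-specific (rank-two) interactions, so we have the freedom needed.

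The key observation is that the collapsed experiment only sees $C_i=X_i+Z_i$, $C_j$, $D_i$, $D_j$ and the shared edge $A_{ij}$. Conditional on these, the split of $C_i$ into $(X_i,Z_i)$ is governed by how the row probabilities are allocated between $G$ and $H$ alters. We therefore build two surfaces $P^+$ and $P^-$ that coincide on all dyads involving $O$ alters and on the shared edge. On the $B$-alters they differ only by a permutation of the $G/H$ labels applied to \emph{one} respondent's row. Concretely, under $P^+$ both $i$ and $j$ prefer the same subgroup, while under $P^-$ respondent $j$'s row probabilities toward $G$ and $H$ alters are swapped. Because the alters in $G$ and $H$ come in matched pairs with identical probability vectors up to relabelling, the multiset of Bernoulli probabilities in $j$'s row toward $B$ is unchanged. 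Hence the law of $(C_j,Y_{jO})$ is unchanged, and by independence of rows given the surface, so is the joint law of $(A_{ij},D_i,D_j,C_i,C_j)$.

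It remains to check that $\E h_{ij}$ has opposite signs under the two surfaces. Under $P^+$, the concentrated preferences make $X_i-X_j$ and $Z_i-Z_j$ comparatively concordant. Under $P^-$, the swap makes them discordant: on the event $D_i=D_j$, a relatively large $X$ for one respondent tends to pair with a relatively small $Z$ for that respondent. To keep the computation verifiable we use extreme probabilities, near $0$ and $1$, so that each law is close to a point mass. The sign of $\E h_{ij}$ can then be read from the dominant configuration and confirmed by a finite enumeration with explicit tolerances.

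The main obstacle is realizing both surfaces inside the logistic-Gram class while keeping the $O$ rows and the degree law identical. The Gram structure ties $p_{jk}$ to $p_{ik}$ through the shared alter vectors $v_k$, so swapping only $j$'s row is not automatic. I would handle this by giving each $B$-alter a two-dimensional latent $v_k=(a_k,b_k)$, with $a_k$ large for $G$ and $b_k$ large for $H$. The respondent vector $v_j$ is then reflected across the diagonal between the two surfaces, and $O$-alters are placed on the diagonal so that their inner products with $v_j$ are invariant. Symmetric placement of the $G$ and $H$ alters ensures that the reflection permutes $j$'s $B$-row probabilities.

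Finally, the indicator $\ind\{D_i=D_j\}$ must have positive probability. This holds by choosing moderate rather than fully degenerate probabilities, and the sign comparison then follows by continuity from the near-degenerate limit.
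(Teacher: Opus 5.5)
\textbf{Verdict: the law-matching idea is sound, but the proposal has two genuine gaps --- the Gram realization fails to fix the shared edge, and the sign reversal is asserted rather than shown.}

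Your law-matching step is the same as the paper's. A row's law of $(C^-,Y_O^-)$ depends only on its multisets of core and outside probabilities, and the two external rows are independent given $A_{ij}$. The paper in fact permutes both focal rows, not just one.

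\textbf{Gap 1: the Gram realization does not fix the shared edge.} Let $\sigma$ swap the two coordinates. Reflecting $v_j$ changes the shared-edge logit by $\langle v_i,v_j\rangle-\langle v_i,\sigma v_j\rangle=(v_{i1}-v_{i2})(v_{j1}-v_{j2})$. This is nonzero whenever $i$ has a subgroup preference and the swap is nontrivial; it is strictly positive under your $P^+$. So the law of $A_{ij}$ changes, which the statement forbids.

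Placing $v_i$ on the diagonal to repair this kills the construction. With your mirror-symmetric alter placement, $i$'s row becomes $G/H$-exchangeable. The $P^-$ law is then the simultaneous $G\leftrightarrow H$ relabelling of the $P^+$ law, and $h_{ij}$ is invariant under it. The two targets would then differ only in whether the unit shared-edge increment is added to $X_j$ or to $Z_j$, so your mechanism disappears.

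The obstacle you flag is actually illusory. As in the paper, give the focal respondents the vectors $e_1,e_2$ and set $\mu=\logit p_{ij}$. Then take $u_k=(\logit p_{ik}-\mu,\logit p_{jk}-\mu)$. This realizes any two focal rows with a fixed shared edge in a rank-two Gram surface.

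\textbf{Gap 2: the sign reversal is not established, and the heuristic for $P^+$ points the wrong way.} A common subgroup preference does not create concordance. If the two rows have identical probability vectors, then given $A_{ij}$ and $D_i=D_j$ they are iid conditioned product-Bernoulli rows, hence negatively associated. Lemma~\ref{lem:shifted-sign} then gives $\E h_{ij}\le0$.

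Concordance has to come from a $B$-versus-$O$ contrast between the rows: one respondent richer in both $G$ and $H$ ties, the other compensating in $O$. This is what the paper's broad construction encodes.

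Your near-degenerate route can be made to work, but only with an explicit configuration. For example, take dominant values $(X_i,Z_i,Y_{iO})=(3,1,0)$ and $(X_j,Z_j,Y_{jO})=(2,0,2)$ under $P^+$. Under $P^-$, $j$'s values become $(0,2,2)$, so $h_{ij}=+1$ versus $-1$. You would still need a separate check of the $A_{ij}=1$ state and a bound on the non-dominant mass.

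The paper supplies exactly this missing content. It fixes rational probabilities and computes both targets exactly, obtaining $\approx-0.794$ and $\approx0.054$.
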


\paragraph{Proof sketch.}
Keep each focal respondent's multiset of core-edge probabilities fixed while
reassigning those probabilities between $G$ and $H$. The collapsed core and
degree law is then unchanged, whereas exact finite enumeration reverses the
pair-sign numerator. Appendix~\ref{app:finite-witness} gives the rational
construction and a rank-two logistic-Gram embedding.

\subsection{Local interpretation after activity profiling}

The finite witness establishes bivariate information beyond the collapsed
count. A separate question is how a small residual interaction
moves the target once total-degree information has removed the local activity
direction.  Consider the rank-one sequence
\[
\logit(p_{ij,c})
=\mu+\alpha_i+\alpha_j+\frac{c}{\sqrt n}x_ix_j.
\]
Appendix~\ref{app:local-response} states the deterministic marked-profile
assumption and defines $b_G,b_H$ as the standardized residual loading shifts
after total-count projection, $\varrho$ as the resulting conditional Gaussian
correlation, and $\chi_{BO}$ as the focal loading contrast. Let $R(c)$ denote
the leading Gaussian expectation of the pair-sign kernel conditional on a
degree collision in the natural $B/O$ frame. Central symmetry makes this
response even in $c$.

\begin{proposition}[Activity-profiled local response]
\label{prop:curvature}
Under Assumption~\ref{ass:marked-profile}, at a regular common-support
collision midpoint, the
collision-conditioned Gaussian response satisfies
\[
R(c)=R(0)+\frac12\kappa c^2+o(c^2),
\]
where $\kappa$ is an explicit quadratic functional of the projected loading
masses and the degree-conditioned Gaussian covariance:
\[
\kappa
=\chi_{BO}(\bar a)\frac{2}{\pi\sqrt{1-\varrho^2}}
\{2b_Gb_H-\varrho(b_G^2+b_H^2)\}.
\]
For the declared broad and split loading profiles, the curvature is strictly
positive and negative, respectively, throughout the compact activity range,
and the two profiles therefore belong to open response classes.
\end{proposition}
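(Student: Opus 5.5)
The plan is to reduce $R(c)$ to an orthant-type Gaussian computation and expand it in $c$. Under Assumption~\ref{ass:marked-profile}, the pair differences $(X_i-X_j,\,Z_i-Z_j)$, conditional on the collision $D_i=D_j$ near a common-support midpoint $\bar a$, should satisfy a local CLT. After standardization they are approximately bivariate Gaussian with correlation $\varrho$ (the degree-conditioned correlation of Appendix~\ref{app:local-response}) and mean shift $c\,\chi_{BO}(\bar a)^{1/2}(b_G,b_H)$, up to the normalization fixed there. The rank-one term $cn^{-1/2}x_ix_j$ moves each row's expected $G$ and $H$ counts by order $c\sqrt n\,x_i\times$(loading mass). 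Projecting out the total-count direction, which the degree collision removes, leaves the residual loading shifts $b_G,b_H$. The focal contrast enters because the two collided respondents come from $B$ and $O$ with different marks. Since the differences enter symmetrically under $i\leftrightarrow j$, I expect the leading law to be a centered Gaussian mixture over the random sign of the focal contrast. This gives evenness in $c$, consistent with central symmetry, and makes the leading term quadratic.

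The key step is then an explicit computation. For $(U,V)\sim N(\theta,\Sigma_\varrho)$ with unit variances, compute $\E\,\sgn U\sgn V=4\Pp(U>0,V>0)-1-\ldots$, which equals $\frac{2}{\pi}\arcsin\varrho$ at $\theta=0$. I will differentiate twice in $\theta$ under the integral. Writing $g(\theta)=\E\,\sgn(U)\sgn(V)$, we have $\nabla^2 g(0)=\E[\sgn u\sgn v\,(\Sigma^{-1}ww^\top\Sigma^{-1}-\Sigma^{-1})]$ with $w=(u,v)$ centered. Equivalently, $\partial_{\theta_1}\partial_{\theta_2}g(0)=4\phi_2(0,0;\varrho)=\frac{2}{\pi\sqrt{1-\varrho^2}}$. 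The diagonal second derivatives follow from $\partial_{\theta_1}^2 g=-\varrho\,\partial_{\theta_1}\partial_{\theta_2}g$ at $0$, by differentiating $2\phi(\theta_1)(2\Phi(\cdot)-1)$ in $\theta_1$. The quadratic form $\theta^\top\nabla^2g(0)\theta$ with $\theta=c(b_G,b_H)$ therefore gives $\frac{2}{\pi\sqrt{1-\varrho^2}}\{2b_Gb_H-\varrho(b_G^2+b_H^2)\}c^2$. Multiplying by the contrast factor $\chi_{BO}(\bar a)$ and taking the $\frac12$ yields the displayed $\kappa$. The $o(c^2)$ remainder follows from smoothness of $g$.

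The main obstacle is justifying the Gaussian reduction uniformly under the collision conditioning. Showing that conditioning on the lattice event $D_i=D_j$ produces exactly the projected shifts $b_G,b_H$ and the correlation $\varrho$, with the lattice local limit error being $o(c^2)$-negligible, is where the fixed-dimensional local limit from the marked-profile assumption must be invoked. I would first try a three-dimensional lattice local CLT for $(X_i-X_j,Z_i-Z_j,D_i-D_j)$ and then condition.

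For the sign claims, I will evaluate $b_G,b_H,\varrho,\chi_{BO}$ for the declared profiles. For the broad profile, $b_G=b_H$ (equal loadings), so the bracket is $2b^2(1-\varrho)>0$ when $\chi_{BO}>0$. For the split profile, $b_G=-b_H$ gives $-2b^2(1+\varrho)<0$. Continuity of these functionals in $\bar a$ on the compact activity range, with $|\varrho|<1$ and $b\ne0$ bounded away from zero, gives strict signs throughout. By continuity again, small perturbations keep the sign, which gives the open response classes.
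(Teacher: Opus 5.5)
\textbf{Genuine gap: the split-profile sign.} Your route is the paper's. You condition on the degree collision by a Schur complement, use the information projection $\widetilde M_A=M_A-(V_A/V_T)M_T$, take the second directional derivative of $\theta\mapsto\E\{\sgn(N_G+\theta_1)\sgn(N_H+\theta_2)\}$ at the origin, and average the squared focal contrast to produce $\chi_{BO}$. Your explicit values $\partial_{12}g(0)=4\phi_2(0,0;\varrho)=2/\{\pi\sqrt{1-\varrho^2}\}$ and $\partial_{11}g(0)=-\varrho\,\partial_{12}g(0)$ are correct, and they fill in what the paper only calls direct differentiation.

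The problem is that you assert $b_G=-b_H$ for the split profile, and that is false. Under the common-activity specialization with the $10{:}10{:}20$ shares, $(V_G,V_H,V_O)=(q/4,q/4,q/2)$ with $q=\int_I v_{\bar a}f>0$. Hence $\Sigma_{GG}=\Sigma_{HH}=3q/8$ and $\varrho=-1/3$. The split loadings $(m_G,m_H,m_O)=(1,-0.2,-0.4)$ give $(M_G,M_H,M_O)=(q/4,-q/20,-q/5)$. So $M_T=0$, $\widetilde M=M$, and $b_G=-5b_H$. The bracket is then $b_H^2\{-10+\tfrac{26}{3}\}=-\tfrac43 b_H^2<0$, which reproduces the paper's $\kappa_{\mathrm{split}}=-116q/(1875\pi\sqrt2)$.

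The margin is narrow. With $r=b_H/b_G=-1/5$, the bracket $b_G^2\{2r-\varrho(1+r^2)\}$ is negative only when $-\varrho<5/13$. The conclusion therefore rests on the computed value $\varrho=-1/3$ and cannot come from a difference-axis shortcut. You need the explicit substitution, including the check $M_T=0$.

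For the broad profile, $b_G=b_H$ is right. You should still record that $b\neq0$, namely $b=0.1q/\sqrt{3q/8}$ (using $M_T=0$), and that $\chi_{BO}=58/25>0$. Because $q$ factors out of everything, the signs then hold at every regular midpoint with no separate compactness argument.

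\textbf{Two smaller corrections.} First, for a fixed focal pair the standardized mean shift is $c\,d\,(b_G,b_H)$ with $d=x_I-x_J$, not $c\chi_{BO}^{1/2}(b_G,b_H)$. The contrast is also not symmetric: $\E d=m_B-m_O=0.8$ for the declared profiles. So the limit is not a centered mixture over a random sign. Evenness instead comes from $g(-\theta)=g(\theta)$, which is central symmetry of the conditional Gaussian for each fixed contrast. Then $R(c)=\E_d\,g(cd\,b)=g(0)+\tfrac12c^2\,\E(d^2)\,b^\top\nabla^2g(0)\,b+o(c^2)$ with $\E(d^2)=\chi_{BO}$, which is what you ultimately use.

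Second, $R(c)$ is defined as the limiting Gaussian functional, so you do not need the lattice error to be ``$o(c^2)$-negligible.'' Lemma~\ref{lem:marked-conditional} only identifies $\varrho$ and $b$. The expansion in $c$ is then a calculus statement about a smooth function of $\theta$ with bounded $d$.
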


\begin{remark}[Local mean information]
\label{rem:curvature-information}
At the same collision midpoint, put $b=(b_G,b_H)^\top$ and
\[
R_{\varrho}=\begin{pmatrix}1&\varrho\\\varrho&1\end{pmatrix},
\qquad I_{\mathrm{mean}}=b^\top R_{\varrho}^{-1}b.
\]
This is the Fisher information for $t$ in the Gaussian mean experiment
$N(tb,R_{\varrho})$, with the degree-conditioned covariance fixed. The
curvature satisfies the sharp bound
\[
|\kappa|
\le\frac{2\chi_{BO}(\bar a)}{\pi}
\sqrt{1-\varrho^2}\,I_{\mathrm{mean}}.
\]
Equality is attained on the sum axis $b_G=b_H$ and the difference axis
$b_G=-b_H$, with nonnegative and nonpositive curvature, respectively.
Thus local mean information bounds the response magnitude,
whereas its allocation between the sum and difference modes determines the
sign. Appendix~\ref{app:local-response} gives the modal decomposition.
\end{remark}

\paragraph{Proof sketch.}
The Gaussian row-difference limit is conditioned on total count by a Schur
complement. The same projection removes any information-weighted loading
profile common to all three groups. Differentiating the resulting bivariate
normal sign probability twice at the balanced origin gives the quadratic
form displayed in Appendix~\ref{app:local-response}.

\section{Primitive Open-Class Power}
\label{sec:power}

Fix $L=40$ cell slots with group counts
$|G|:|H|:|O|=10:10:20$ and fixed slot labels $g(r)$. A primitive parameter is
\[
\vartheta=(m_G,m_H,m_O,x,c,\mu)
\]
in the relative space
\[
(C^1[0,1])^3
\times\{x\in\R^{40}:\one^\top x=0\}
\times\R^2,
\]
intersected with the common $C^{1,1}$, monotonicity, bounded-loading, compact
$c,\mu$, and compact-predictor restrictions. The activity-profile part of this
admissible class is the one in Assumption~\ref{ass:full-pair-profile}; the
loading and $c$ bounds are additional alternative-class restrictions. For slot
$r$ in cell $k$, put
\[
\alpha_{r,k}=m_{g(r)}\{(k-1/2)/m_n\},
\]
and define
\[
\logit(p_{ij,\vartheta})
=\mu+\alpha_i+\alpha_j
+\frac{c}{\sqrt n}x_{r(i)}x_{r(j)}.
\]
The centering condition fixes an additive loading gauge; the scale
representation $(c,x)$ remains nonunique.

The canonical parameter $\vartheta_0$ has the common map
$m_0=F^{-1}$ from Assumption~\ref{ass:baseline}, $\mu=0$, $c=6$, binary
loadings, and prespecified counts
\[
(G_+,G_-,H_+,H_-,O_+,O_-)=(7,3,7,3,6,14).
\]
For $\varepsilon>0$, let $\Theta_\varepsilon$ be the relative-open primitive
ball
\[
\max_g\|m_g-m_0\|_{C^1}<\varepsilon,\qquad
\|x-x^0\|_\infty<\varepsilon,\qquad
|c-6|<\varepsilon,\qquad|\mu|<\varepsilon,
\]
within the admissible class. It permits unequal smooth group activity profiles
and nonbinary centered loadings near $\vartheta_0$. The operational frame
remains every observable $B/O$ pair; latent loadings are never used to select
pairs.

For a standard bivariate normal vector $(N_1,N_2)$ with correlation $\varrho$,
write
\[
g_\varrho(\theta)
=\E\{\sgn(N_1+\theta)\sgn(N_2+\theta)\}.
\]

\begin{assumption}[Fixed-path response margin]
\label{ass:response-margin}
Let
\[
v(a)=\int\Lambda'(\mu+a+t)\,dF(t),
\qquad
v_*=\inf_{a\in I}v(a).
\]
At $c=6$, assume
\[
0.58g_{-1/3}\{2.4\sqrt{2v_*/3}\}
+0.42g_{-1/3}(0)>0.
\]
\end{assumption}

\begin{remark}
The margin is nonvacuous.  For example, $\mu=0$ and uniform $F$ on
$[-0.6,0.6]$ give $v_*\ge0.2236888$; the corresponding positivity threshold
for $c$ is approximately $4.606<6$.
\end{remark}

\subsection{Growing nonredundancy on the declared path}

\begin{proposition}[Growing nonredundancy]
\label{prop:growing-nonredundancy}
Under Assumptions~\ref{ass:baseline} and~\ref{ass:response-margin}, there are
two growing logistic-Gram arrays with rank-one positive-semidefinite residuals.
For matched $B/O$ focal pairs, at every $n$ the full collapsed law
of
\[
(A_{ij},C_i,C_j,D_i,D_j),
\qquad C_r=X_r+Z_r,
\]
is identical across the two arrays. On the declared $c=6$ path, their
degree-conditioned $G/H$ pair-sign responses have opposite limiting signs.
Both arrays are nonadditive; their matched collapsed laws establish the
additional information carried by the bivariate experiment.
\end{proposition}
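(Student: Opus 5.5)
The plan is to build the two arrays on the declared $c=6$ path so that they differ only in how the loadings are split between $G$ and $H$. Both arrays use the activity map $m_0=F^{-1}$ of Assumption~\ref{ass:baseline} and $\mu=0$. Array~I is the canonical alternative $\vartheta_0$, with slot loadings $x^{\mathrm I}\in\{+1,-1\}$ and counts $(G_+,G_-,H_+,H_-,O_+,O_-)=(7,3,7,3,6,14)$. In Array~II the outside loadings are unchanged, but within each cell we relabel which $B$-slots carry $G$ versus $H$. We keep the total core composition of $7+7=14$ plus and $3+3=6$ minus slots, but reassign it to $(G_+,G_-,H_+,H_-)=(10,0,4,6)$, or to some other split with opposite-signed projected loadings. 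Both residual matrices are $(c/\sqrt n)xx^\top$ with $c>0$, so they are rank-one positive semidefinite. This yields logistic-Gram arrays with embedding coordinates $(\text{activity},\sqrt{c/\sqrt n}\,x)$. Both arrays are nonadditive because $x$ is nonconstant.

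The first step is the exact matching of the collapsed law, following the pattern of Proposition~\ref{prop:nonredundancy}. Fix matched focal nodes $i\in B$, $j\in O$ in the same slot positions in both arrays; this requires the relabeling to preserve the focal slots' loadings, which we choose to do. The shared edge $A_{ij}$, the core count $C_r=\sum_{k\in B\setminus\{i,j\}}A_{rk}+A_{ij}\ind\{\cdot\}$, and the outside count $Y_{rO}$ are then functions of independent Bernoullis. For each focal row, the multiset of probabilities $p_{rk}$ over $k\in B$ and over $k\in O$ is the same in both arrays, because only the $G/H$ labels of alters move while every alter keeps its activity and loading. Since $D_r=C_r+Y_{rO}$, the joint law of $(A_{ij},C_i,C_j,D_i,D_j)$ depends on the Bernoulli parameters only through these multisets and the shared edge. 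Given $A_{ij}$, the two rows use disjoint dyads, so the laws agree at every $n$. The one subtlety is the shared edge. Conditioning on $A_{ij}$ makes the rows independent, so the argument goes through.

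The second step computes the limiting degree-conditioned pair-sign response in each array. Conditional on $A_{ij}$ and $D_i=D_j$, the vector of row differences $(X_i-X_j,Z_i-Z_j)$ has means shifted by $c n^{-1/2}(x_i-x_j)\sum_{k\in g}\Lambda'(\cdot)x_k$. These shifts are of order $\sqrt n$ relative to the $\sqrt n$ scale, hence $O(1)$ after standardization. A local CLT on the lattice, conditioned on the total by the Schur complement as in Proposition~\ref{prop:curvature}, gives the limit. The limit is a bivariate normal with correlation $-1/3$, arising from groups of sizes $10:10:20$ with the total removed. The mean is $\theta(b_G,b_H)$, where $b_g$ is the projected loading mass times $v(a)$.

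For Array~I the means are equal, so the response is $g_{-1/3}(\theta)$ mixed over focal loading types. The weights $0.58$ and $0.42$ are the probabilities that the focal loading contrast is nonzero or zero. This gives exactly the expression in Assumption~\ref{ass:response-margin}, which is positive once one uses monotonicity of $g_\varrho$ in $|\theta|$ and the bound $v\ge v_*$ for the $2.4\sqrt{2v_*/3}$ shift. For Array~II the projected means have opposite signs, so the response is $\E\sgn(N_1+\theta_1)\sgn(N_2-\theta_2)$ with $\varrho=-1/3<0$. This quantity is at most $g_{-1/3}(0)<0$, giving a strictly negative limit.

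The main obstacle is the rigorous conditional local limit. We need a uniform lattice LCLT for $(X_i-X_j,Z_i-Z_j,D_i-D_j)$ over matched pairs, with $n^{-1/2}$ tilts and activity heterogeneity, and we need to average the limit over the collision measure in the $B/O$ frame. I would first try a Fourier inversion with a Lindeberg–Edgeworth bound. Aperiodicity follows from the Bernoulli probabilities being bounded away from $0$ and $1$.
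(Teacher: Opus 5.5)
Your construction and the law-matching step follow the paper: a bijection that preserves role, activity and loading and only moves the $G/H$ allocation, equal row multisets in $B$ and in $O$, and independence of the two focal rows given $A_{ij}$. Your broad-array limit also matches. It is $0.58\,g_{-1/3}\{2.4q(a)\}+0.42\,g_{-1/3}(0)$ with $q(a)=\sqrt{2v(a)/3}$, and it is positive by monotonicity of $g_{-1/3}$ in $|\theta|$ and $v\ge v_*$. The paper obtains the same limit from Lemmas~\ref{lem:uniform-lattice} and~\ref{lem:marked-conditional}, which is the Fourier-inversion local limit you name as the main obstacle.

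The step that fails is the split array. You claim $\E\{\sgn(N_1+\theta_1)\sgn(N_2-\theta_2)\}\le g_{-1/3}(0)$ because the projected shifts have opposite signs. That inequality is false for general opposite-signed shifts. Write $\mathfrak g_\varrho(s,t)=\E\{\sgn(N_1+s)\sgn(N_2+t)\}$. Then $\partial_s\mathfrak g_\varrho(s,0)=2\phi(s)\{2\Phi(-\varrho s/\sqrt{1-\varrho^2})-1\}>0$ for $\varrho<0<s$. Hence $\mathfrak g_{-1/3}(s,0)>\mathfrak g_{-1/3}(0,0)=2\arcsin(-1/3)/\pi\approx-0.216$, and letting $\theta_1\to\infty$, $\theta_2\downarrow0$ drives the value to $0$. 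So the sign pattern alone cannot justify your bound. At the specific shifts $(6q,-1.2q)$ it would need a separate numerical check over the admissible range of $q(a)$, which you do not give.

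The idea you are missing is negative quadrant dependence of a Gaussian pair with $\varrho\le0$. Since $\sgn(\cdot+\theta_1)$ and $\sgn(\cdot-\theta_2)$ are nondecreasing, $\mathfrak g_\varrho(\theta_1,-\theta_2)\le\{2\Phi(\theta_1)-1\}\{2\Phi(-\theta_2)-1\}\le0$. The split response is therefore at most $0.42\,g_{-1/3}(0)<0$, uniformly in $a$. This is how the paper argues.

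A second, smaller omission: you treat the frame weights $0.58/0.42$ as the collision-conditioned weights without justification. That requires the leading degree-collision mass to be the same for opposite-sign and same-sign focal pairs. It holds here because the common activity map and the cellwise $20/20$ loading balance make $\sum_k\Lambda'(\mu+\alpha_i+\alpha_k)x_k$ vanish up to focal deletions. The residual tilt then does not shift $D_i-D_j$ at first order; this is the paper's plus/minus generating-polynomial identity. Without it the mixture would be reweighted, and Assumption~\ref{ass:response-margin} would not apply directly.

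A minor slip: the coordinates $(\alpha,\sqrt{c/\sqrt n}\,x)$ do not reproduce $\alpha_i+\alpha_j$ as an inner product. Only the rank-one positive-semidefinite residual claim is needed.
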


\paragraph{Construction.}
Repeat 40-node activity cells over an increasingly fine midpoint-quantile
grid. In each cell, the broad allocation
$(G_+,G_-,H_+,H_-,O_+,O_-)=(7,3,7,3,6,14)$ and the split allocation
$(10,0,4,6,6,14)$ contain the same multiset of activity/loading pairs in
$B=G\cup H$ and the same multiset in $O$. For matched focal identities,
activities, and loadings, the Bernoulli probability-pair generating functions
therefore agree separately on $B$ and $O$, as does the shared-edge law. The
group allocation retained by the bivariate ARD counts produces positive broad
and negative split response. Appendix~\ref{app:growing-witness} records the
law-matching argument. The broad array anchors the power theorem below, which
uses the observable natural pair frame.

For slot types $r\in B$ and $s\in O$, activity anchor $a$, and scaled activity
gap $t$, let $H_\vartheta(r,s,a,t)$ and
$J_\vartheta(r,s,a,t)$ denote the limiting kernels for
$\sqrt n\,\E_\vartheta h_{ij}$ and
$\sqrt n\,\E_\vartheta h_{ij}^2$.  Appendix
\ref{app:full-pair-power} constructs them by information-weighted recentering
of the degree constraint.  With slot intensity
\[
\lambda_{r,\vartheta}(a)
=\frac{\ind\{a\in m_{g(r)}([0,1])\}}
{L\,m'_{g(r)}\{m_{g(r)}^{-1}(a)\}},
\]
define
\[
\mathcal M(\vartheta)
=\sum_{r\in B}\sum_{s\in O}
\int\!\!\int
\lambda_{r,\vartheta}(a)\lambda_{s,\vartheta}(a)
H_\vartheta(r,s,a,t)\,dt\,da,
\]
\[
\mathcal C(\vartheta)
=\sum_{r\in B}\sum_{s\in O}
\int\!\!\int
\lambda_{r,\vartheta}(a)\lambda_{s,\vartheta}(a)
J_\vartheta(r,s,a,t)\,dt\,da.
\]
These population response and collision constants enter the power proof. The
operational statistic uses only observed counts, degrees, role labels, and
respondent indicators.

Write $\mathsf S_n=(R_1,\ldots,R_n)$ for the random respondent frame, reserving
$S$ for a fixed realization as in Section~\ref{sec:experiment}.

\begin{theorem}[Open-class full-pair consistency]
\label{thm:full-pair-open-power}
Under Assumption~\ref{ass:response-margin}, there exist
$\varepsilon_0>0$ and $\gamma_0>0$ such that
\[
\inf_{\vartheta\in\Theta_{\varepsilon_0}}
\frac{\mathcal M(\vartheta)}{\mathcal C(\vartheta)}
\ge\gamma_0>0
\]
and, under either joint respondent-design/network law,
\[
\inf_{\vartheta\in\Theta_{\varepsilon_0}}
\Pp_\vartheta\{
\mathcal T_{n,\dagger}^{\mathsf S_n}>z_{1-\alpha}\}
\longrightarrow1.
\]
\end{theorem}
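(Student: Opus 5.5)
The plan has two parts. First, a deterministic margin computation at $\vartheta_0$ extended to a neighborhood by continuity. Second, a stochastic transfer showing that $U_n^S$ has mean of order $\sqrt n$ times $\mathcal M$, while the clipped studentizer stays of order $n$.

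\emph{Step 1 (margin at $\vartheta_0$).} At the canonical parameter the common map $m_0=F^{-1}$ gives $\lambda_{r,\vartheta_0}=f/L$ for every slot. The information-weighted recentering of the degree constraint, which Appendix~\ref{app:full-pair-power} uses to build $H_\vartheta$, reduces the kernel to a bivariate normal sign probability $g_\varrho(\theta)$ with $\varrho=-1/3$. For a $B$ slot of loading sign $\pm$ paired with an $O$ slot, the shift $\theta$ is proportional to $c\,x_r x_s$ times the local variance $v(a)$. Averaging over the allocation $(7,3,7,3,6,14)$ should produce a convex combination of a shifted term and an unshifted term. The weights come out as $0.58$ and $0.42$, and the shifted argument is bounded below by $2.4\sqrt{2v_*/3}$.

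Since $g_\varrho$ is increasing in $|\theta|$ on the relevant range (I will check this by differentiating the bivariate orthant formula), Assumption~\ref{ass:response-margin} gives $H_{\vartheta_0}(r,s,a,t)$ an aggregated integrand that is positive uniformly in $a\in I$. Integrating against $f^2/L^2$ and over $t$ then gives $\mathcal M(\vartheta_0)>0$. We also have $\mathcal C(\vartheta_0)\le C<\infty$ because $|h|\le1$ and the collision kernel integrates finitely in $t$.

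\emph{Step 2 (open neighborhood).} I will show that $\vartheta\mapsto(\mathcal M,\mathcal C)$ is continuous in the primitive topology on the admissible class. The intensities $\lambda_{r,\vartheta}$ depend continuously on $m_g$ in $C^1$, because $m_g'\ge c_m$. The kernels depend on $(m,x,c,\mu)$ through Gaussian means and covariances that are smooth functions of $\Lambda'$ evaluated at compact predictors. The overlap $\Omega_{BO}$ is continuous and positive at $\vartheta_0$. Dominated convergence, with Gaussian tails in $t$ uniform over the class, yields continuity. Choosing $\varepsilon_0$ then gives $\mathcal M/\mathcal C\ge\gamma_0$ and $\mathcal C\ge c_0>0$ on $\Theta_{\varepsilon_0}$. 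The main obstacle is the support endpoints, where the sets $m_g([0,1])$ move with $\vartheta$. I will handle these by splitting off $\delta$-neighborhoods of the finitely many endpoints, whose contribution is $O(\delta)$ uniformly.

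\emph{Step 3 (stochastic transfer).} Under $\vartheta$, I will redo the local-mass computations of Theorem~\ref{thm:full-pair-null} with $n^{-1/2}$-perturbed probabilities, uniformly over the compact class. On graph-independent frame events $\mathcal G_n$ these give $\E_\vartheta(U_n^S\mid S)=\rho^2 n\,\mathcal M(\vartheta)(1+o(1))$: there are order $n^{3/2}$ degree-close $B/O$ pairs, each contributing $n^{-1/2}H$. The same bounds give $\Var_\vartheta(U_n^S\mid S)=O(n)$. The first-order mean is uniform because the Edgeworth/local-CLT error for lattice row counts is uniform over compact predictor ranges.

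Next, $\widehat V_{n,\dagger}^S\le nb_n$ by construction. Hence $\mathcal T_{n,\dagger}^S\ge U_n^S/(nb_n)^{1/2}$, which is of order $\sqrt{n/\log n}\,\mathcal M\to\infty$ with probability tending to one. Chebyshev's inequality gives deviations $O_P(\sqrt n)$, which are negligible against $n\gamma_0c_0$. This clipping trick avoids any need for consistency of $\widehat V_n^S$ under the alternative. Averaging over the respondent design, with $\Pp(\mathcal G_n)\to1$, yields the uniform power statement.
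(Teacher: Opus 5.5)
Your overall route is the paper's: the $0.58/0.42$ latent decomposition at $\vartheta_0$ combined with Assumption~\ref{ass:response-margin} (Lemma~\ref{lem:positive-response}), continuity of $(\mathcal M,\mathcal C)$ through the recentered kernels to obtain the relative-open margin (Lemmas~\ref{lem:primitive-kernels} and~\ref{lem:primitive-margin}), a uniform order-$n$ mean with $O(n)$ variance (Lemma~\ref{lem:open-numerator}), and the deterministic cap $\widehat V_{n,\dagger}^S\le nb_n$ plus Chebyshev. Working conditionally on good frames rather than with joint-law moments is harmless. The absolute covariance bounds hold for every frame, so only the sampled mean needs a frame event. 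Your planned check that $g_\varrho$ increases in $|\theta|$ works and supplies a step the paper leaves implicit when it passes from $v(a)$ to $v_*$. With $\phi,\Phi$ the standard normal density and cdf, $g_\varrho$ is even and $g_\varrho'(\theta)=4\phi(\theta)\{1-2\Phi(-\theta\sqrt{(1-\varrho)/(1+\varrho)})\}>0$ for $\theta>0$.

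However, the mechanism you state in Step 1 is wrong, and taken literally it does not produce the mixture you then quote. The residual term in the row difference toward alter $k$ is $cn^{-1/2}(x_i-x_j)x_k$. So the mark mean is driven by the loading contrast $d_{rs}=x_r-x_s$, as in $m_{GH}=c\,d_{rs}[(W_G,W_H)-(W/Q)(Q_G,Q_H)]$, not by $c\,x_rx_s$. With $x_rx_s$, same-sign pairs would also be shifted and no unshifted component would appear. It is exactly $d_{rs}=0$ for same-sign pairs that yields the $0.42\,g_{-1/3}(0)$ term.

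For opposite-sign pairs, $|d_{rs}|=2$, $W_G=W_H=v(a)/10$, and $W=0$, and the $(G,G)$ entry of $\Gamma$ is $3v(a)/8$. The standardized shift is therefore $0.2c\,v(a)/\sqrt{3v(a)/8}=0.4c\sqrt{2v(a)/3}$. It scales with $\sqrt{v(a)}$, not $v(a)$, and at $c=6$ it equals $2.4\sqrt{2v(a)/3}\ge2.4\sqrt{2v_*/3}$.

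Also correct your opening summary: the mean of $U_n^S$ is of order $n$, as your Step 3 says, not $\sqrt n$. A mean of order $\sqrt n$ would not beat the $\sqrt{nb_n}$ cap, and the clipping argument would fail.
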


\paragraph{Proof outline.}
At $\vartheta_0$, the observable $B/O$ frame decomposes into a mixture of
opposite-sign and same-sign focal pairs with weights $0.58$ and $0.42$, and
Assumption~\ref{ass:response-margin} makes its response strictly positive.
For nearby group-specific activity profiles, an information-weighted
$n^{-1/2}$ correction recenters the degree constraint. The resulting marked
local-limit kernels are jointly continuous in the activity maps, centered
loadings, $c$, and $\mu$; thus the canonical response has a positive margin on
a primitive relative-open neighborhood. Uniformly on that neighborhood, the
full-pair numerator has a positive order-$n$ mean and $O(n)$ variance. The
deterministic upper clipping cap converts these moment bounds into consistency.
Appendix
\ref{app:full-pair-power} gives the details.

\begin{corollary}[Canonical thinning consistency]
\label{cor:thinning-power}
Under Assumptions~\ref{ass:baseline} and~\ref{ass:response-margin}, the
canonical mixed-sign design at $c=6$, either respondent design, and
$\pi_n=n^{-1/2}$, there exists $\gamma_6>0$ such that
\[
\liminf_n\tau_n(P_{n,6})\ge\gamma_6,\qquad
\Pp_6\{Z_n>z_{1-\alpha}\}\longrightarrow1.
\]
\end{corollary}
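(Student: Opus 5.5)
The corollary is the canonical-point specialization of Theorem~\ref{thm:full-pair-open-power}, with the full-pair numerator replaced by the thinned one, and calibrated as in Theorem~\ref{thm:thinning-null}. I will prove the two claims in order: first the target lower bound, then consistency of the thinned statistic.

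\emph{Target lower bound.} The canonical design $P_{n,6}$ is the primitive point $\vartheta_0$. All three activity maps equal $m_0=F^{-1}$, which satisfies the $C^{1,1}$ and monotonicity restrictions because $f$ is $C^1$ and bounded away from zero on $I$. Hence $\vartheta_0\in\Theta_{\varepsilon_0}$, and the first display of Theorem~\ref{thm:full-pair-open-power} gives $\mathcal M(\vartheta_0)/\mathcal C(\vartheta_0)\ge\gamma_0$.
\begin{itemize}[leftmargin=*]
\item The kernels $H_\vartheta$ and $J_\vartheta$ are by definition the limits of $\sqrt n\,\E h_{ij}$ and $\sqrt n\,\E h_{ij}^2$.
\item Since $h_{ij}^2=\ind\{D_i=D_j\}$, summing over the full $B/O$ frame gives $n^{-3/2}\E K_n^{\BO}\to\mathcal M(\vartheta_0)$ and $n^{-3/2}\E M_n^{\BO}\to\mathcal C(\vartheta_0)$. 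This uses the same local-limit Riemann-sum argument as the appendix power proof, with all respondents included ($R\equiv1$).
\item $\mathcal C(\vartheta_0)>0$ by the overlap: for a common map, $\lambda_G\lambda_O>0$ on $I$.
\end{itemize}
Therefore $\liminf_n\tau_n(P_{n,6})\ge\gamma_6:=\gamma_0$, or any smaller positive constant.

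\emph{Consistency of the thinned test.} Write $N_n=\sum J_{ij}W_{ij}h_{ij}$. Let $S_n=S_{2,R,n}^{\BO}=\sum W_{ij}h_{ij}^2$ denote the thinned-test variance sum (not the respondent frame $S$). Then $\E(N_n\mid A,S)=\pi_n\sum W_{ij}h_{ij}$.

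\begin{enumerate}[leftmargin=*]
\item \textbf{Sampled numerator.} With graph-independent positive-share sampling, pair inclusion has probability $\rho^2+o(1)$, uniformly enough over the frame under either declared design. Hence $\E\sum W_{ij}h_{ij}=\rho^2\,\E K_n^{\BO}(1+o(1))\asymp n$.
\item \textbf{Sampled collision sum.} In the same way, $\E S_n\asymp n$.
\item \textbf{Concentration.} I will control both variances at $O(n)$ by reusing the local-mass lemmas from the appendix proof of Theorem~\ref{thm:thinning-null}. These lemmas cover same-row covariances plus the four-distinct covariances that vanish under dyad independence; the alternative perturbs logits only at order $n^{-1/2}$. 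Chebyshev's inequality then gives, with probability tending to one,
\[
\sum W_{ij}h_{ij}\ge c_1n,\qquad S_n\le C_1n .
\]
\item \textbf{Mask step.} Conditional on $(A,S)$, the mask noise $N_n-\pi_n\sum W_{ij}h_{ij}$ has variance $\pi_n(1-\pi_n)S_n$.
\end{enumerate}
Combining these,
\[
Z_n\ \ge\ \frac{\pi_n c_1 n}{\{\pi_n C_1 n\}^{1/2}}+O_P(1)\asymp (\pi_n n)^{1/2}=n^{1/4}\to\infty,
\]
so $\Pp_6\{Z_n>z_{1-\alpha}\}\to1$. The degenerate-denominator case $S_n=0$ occurs with vanishing probability because $S_n\ge c\,n$ w.h.p.

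The main obstacle is step 3, the $O(n)$ variance of $\sum W_{ij}h_{ij}$ under the alternative, since the Theorem~\ref{thm:thinning-null} lemmas were stated under the null. I would first check that those lemmas use only bounded predictors and dyad independence, not additivity. The rank-one term keeps dyads independent with predictors in a compact set, so the same row-local-limit bounds should apply verbatim. If that check fails, a fallback is to import the $O(n)$ variance bound established for the full-pair numerator in the proof of Theorem~\ref{thm:full-pair-open-power}. That bound holds uniformly on $\Theta_{\varepsilon_0}$, and in particular at $\vartheta_0$ for the repeated-cell array with $L=40$.
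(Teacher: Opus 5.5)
Your overall route matches the paper's. The positive canonical response gives a positive collision-normalized ratio, equal-probability sampling cancels the inclusion factor, an $O(n)$ variance bound concentrates the sampled numerator, and the standardized center grows like $\sqrt{n\pi_n}=n^{1/4}$. Routing the target through $\mathcal M(\vartheta_0)/\mathcal C(\vartheta_0)$ from Theorem~\ref{thm:full-pair-open-power} is harmless. Lemma~\ref{lem:primitive-margin} obtains that ratio at $\vartheta_0$ from Lemma~\ref{lem:positive-response}, which is what the paper invokes directly. Your conditional-Chebyshev treatment of the mask is a slightly more elementary substitute for Lemma~\ref{lem:mask-clt}. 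Since $h^2\ge h$, the lower bound on $S_{2,R,n}^{\BO}$ even follows automatically from the numerator bound. Several individual steps are wrong as written, however.

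\emph{The identity for $h_{ij}^2$.} In fact $h_{ij}^2=\ind\{D_i=D_j\}\ind\{X_i\ne X_j\}\ind\{Z_i\ne Z_j\}$, not $\ind\{D_i=D_j\}$. So $\sum A^{\BO}_{ij}h_{ij}^2$ is $S_{2,n}^{\BO}$, not $M_n^{\BO}$. To reach the denominator of $\tau_n$ you need the tie-defect bound $\E(M_n^{\BO}-S_{2,n}^{\BO})=O(\sqrt n)$ from Lemma~\ref{lem:collision-mass}. The two extra ties cost $n^{-1}$ rather than $n^{-1/2}$ per pair. This is exactly the step the paper's proof supplies.

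\emph{The normalization.} It should be $n^{-1}$, not $n^{-3/2}$. There are $O(n^{3/2})$ close pairs, each contributing about $n^{-1/2}H_\vartheta$, so $\E K_n^{\BO}=\{\mathcal M(\vartheta_0)+o(1)\}n$. This agrees with $\E_cM_n^{\BO}\asymp n$ and with your own step 1. Your display contradicts Lemma~\ref{lem:collision-mass}, although the ratio is unaffected.

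\emph{Four-distinct covariances.} These do not vanish under dyad independence. $h_{ij}$ and $h_{kl}$ share the four cross dyads $A_{ik},A_{il},A_{jk},A_{jl}$ through their rows. They are precisely what makes the variance bound nontrivial. A crude bound of order $n^{-1}$ per covariance over the $O(n^3)$ close quadruples gives only $O(n^2)$, which is not even enough for Chebyshev. The paper conditions on the cross dyads and uses Lemma~\ref{lem:state-range} to get $O(n^{-2})$ per covariance.

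\emph{Your ``main obstacle''.} This is not actually an obstacle. Lemmas~\ref{lem:network-variance} and~\ref{lem:respondent-transfer} are stated at $c=6$ as well as $c=0$. For $c\neq0$, Lemma~\ref{lem:state-range} uses the cellwise $20/20$ loading balance, not just bounded predictors and dyad independence, so ``verbatim'' reuse is not the right justification. Moreover, $K_{R,n}^{\BO}$ is literally $U_n^{\mathsf S_n}$. Your fallback, Lemma~\ref{lem:open-numerator} evaluated at $\vartheta_0$, therefore applies directly.
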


The theorem gives uniform consistency over the fixed-slot rank-one
neighborhood under the joint respondent-design/network law. The split design
has negative response, while separated activity supports can make collision
mass collapse. These two examples distinguish the direction of the residual
association from the availability of degree-matched pairs.

\section{Finite-Sample Evidence}
\label{sec:finite}

This section reports a prespecified finite-sample study of the full-pair
procedure and its thinning baseline. The design crosses common and group-specific
smooth activity profiles, Bernoulli and fixed-size-SRS respondent sampling,
respondent shares $0.35$ and $0.60$, and
$n\in\{80,160,320,640\}$.  The two null designs set $c=0$; the two P1 designs
use the prespecified broad rank-one loading profile at $c=6$. Every cell uses
the same 300-replication design.

\input{pair_sign_minimal_t5_finite_evidence_table.tex}

The full-pair rule is conservative in both null profiles. Under P1 it
is materially more sensitive than thinning, although finite rejection remains
moderate at the smaller respondent share and under the group-specific stress.
At the prespecified largest design, all eight paired full-minus-thinning
intervals have positive lower endpoints. The intervals describe paired
finite-sample differences and are unadjusted for multiple comparisons.

The separated-support design isolates the role of activity overlap. It
retains the broad loading proportions and smooth activity profiles while
placing the two respondent roles on supports separated by a fixed gap.

\input{pair_sign_minimal_t5_availability_table.tex}

As $n$ grows, nonzero pair density collapses, the deterministic lower scale
floor activates, and rejection disappears. The finite experiment therefore
illustrates the loss of usable degree-matched pairs when the two roles have
disjoint activity supports.

\section{Related Work}

Several neighboring literatures answer different parts of the problem.
Model-based ARD methods recover network-model parameters or probability
structure \citep{breza2020ard,Breza2023,alidaee2020recovering}, while recent
work studies low-rank trait informativeness and flexible likelihood-free ARD
estimation \citep{hayes2026spectral,seymour2026neural}. Our question is
complementary: which target-specific implication of an additive-activity null
survives ARD compression?

ARD model checking is also established. \citet{laga2026diagnostics} provide a
general diagnostic workflow that includes residual correlation checks, and
\citet{lubold2025spectral} develop spectral goodness-of-fit tests for complete
and partial network data, including ARD. \citet{laga2023correlated} model
cross-group dependence in network-scale-up counts through correlated
respondent-level random effects. We condition on
an exactly observed degree and test one prespecified bivariate sign restriction
without fitting a latent residual or correlation structure.

Finally, conditioning on degree to remove node heterogeneity is classical when
the dyads themselves are observed. For example, \citet{karwa2024montecarlo}
construct finite-sample conditional goodness-of-fit tests for degree-corrected
blockmodels with known block assignments, using sufficient statistics and
Markov-basis sampling. The present
experiment carries degree conditioning through ARD group counts rather than an
observed network fiber. The procedure combines an exact null
restriction, sampled-respondent one-network calibration, and an explicit
collision requirement, with a separated-support design showing when the
procedure becomes unavailable.
Pair thinning is a randomization device, while uniform local expansions for
nonidentical Bernoulli sums and general multivariate lattice local-limit theory
are classical
\citep{arratia2005local,gamkrelidze2015local}; the appendix verifies the
specific maximal-lattice and marked-conditioning conditions needed for the
target-specific ARD restriction and its nested sampled-respondent calibration
procedures, including observable full-pair studentization under the
regular-profile design.

\section{Discussion}

Conditioning on equal degree yields an exact nonpositive sign restriction
under the additive-activity null. With an exhaustive group partition and a
common alter pool, this restriction turns a prespecified cross-group question
into an observable ARD diagnostic. Randomized thinning provides one-network
calibration under the repeated-cell design. Under the regular full-pair
profile, observed-degree smoothing accounts for shared-row dependence and
allows inference using all eligible pairs. The resulting test is uniformly
consistent on a primitive open neighborhood of the declared rank-one design.

The analysis links a prespecified cross-group contrast, a null restriction
preserved by aggregation, and calibration for network-induced dependence.
The overlap condition identifies when the survey supplies enough
degree-matched respondents for inference.

A rejection provides evidence of residual concordance beyond additive
activity. The matched-law constructions show why separate group queries matter:
the same joint law of collapsed counts and degrees can accompany opposite
pair-sign responses. Failure to reject remains compatible with
nonadditive structure, including alternatives with negative response and
designs with insufficient degree collisions. Interpreting the underlying
social mechanism requires information beyond this prespecified contrast.

\section*{Code and Data Availability}

All finite evidence in this paper is synthetic. Replication materials include
the simulation code, machine-readable outputs, table-generation code, and a
reproducibility manifest sufficient to regenerate the reported tables. These
materials are available from the author upon request.

\appendix

\section{Proof of the Exact Finite Sign Restriction}
\label{app:exact-sign}

Fix two focal respondents $i$ and $j$, condition on their shared edge
$A_{ij}=e$, and delete the focal nodes from the eligible alter pool.

\begin{lemma}[Conditional row invariance]
\label{lem:row-invariance}
Under the additive-logit null, conditional on an external row size $d$, a
neighbor set $S$ in the common alter pool has law
\[
 \Pp_i(S\mid |S|=d)
 \ \propto
 \prod_{k\in S}e^{\alpha_k}.
\]
This law does not depend on $\mu+\alpha_i$. Hence, conditional on
$A_{ij}=e$ and $D_i=D_j=d+e$, the two external rows are independent and
identically distributed weighted fixed-size samples of common size $d$.
\end{lemma}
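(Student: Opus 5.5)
Under the additive-logit null \eqref{eq:additive-null} with independent dyads, the row of respondent $i$ restricted to the common alter pool $\mathcal P$ (all nodes except $i$ and $j$) is a vector of independent Bernoulli variables. We first compute the joint probability of a neighbor set $S\subseteq\mathcal P$ by direct factorization:
\[
\Pp_i(S)=\prod_{k\in S}p_{ik}\prod_{k\in\mathcal P\setminus S}(1-p_{ik})
=\prod_{k\in\mathcal P}(1-p_{ik})\prod_{k\in S}\frac{p_{ik}}{1-p_{ik}}.
\]
The additive logit gives $p_{ik}/(1-p_{ik})=e^{\mu+\alpha_i}e^{\alpha_k}$, so
\[
\Pp_i(S)=\prod_{k\in\mathcal P}(1-p_{ik})\, e^{|S|(\mu+\alpha_i)}\prod_{k\in S}e^{\alpha_k}.
\]
On the event $|S|=d$, the first two factors are constant in $S$. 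They therefore cancel on normalization, which yields $\Pp_i(S\mid |S|=d)=\prod_{k\in S}e^{\alpha_k}/\sum_{|S'|=d}\prod_{k\in S'}e^{\alpha_{k}}$. This is the conditional Bernoulli (weighted fixed-size sampling) law with weights $e^{\alpha_k}$. It depends only on the alter activities and the pool, and not on $\mu+\alpha_i$.

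For the second claim we use independence across dyads. The edge $A_{ij}$, row $i$ on $\mathcal P$, and row $j$ on $\mathcal P$ involve disjoint sets of dyads, so they are mutually independent. Since $D_i=A_{ij}+|S_i|$ and $D_j=A_{ij}+|S_j|$, the event $\{A_{ij}=e,\ D_i=D_j=d+e\}$ is the intersection of $\{A_{ij}=e\}$, $\{|S_i|=d\}$ and $\{|S_j|=d\}$. Each of these is measurable with respect to one of the three independent blocks. Conditioning a product measure on a product event preserves independence, so $S_i$ and $S_j$ are conditionally independent, each with its own conditional law given $|S|=d$. By the first part these two laws coincide: both are the weighted $d$-subset law on $\mathcal P$ with weights $e^{\alpha_k}$, and the focal activities $\alpha_i,\alpha_j$ have dropped out. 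Hence the rows are i.i.d.

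The one point needing care is the role of the common alter pool. The pool must be identical for $i$ and $j$ after deleting both focal nodes, and the diagonal must be zero so that no self term enters. We also use that the sums defining $X_i,Z_i,Y_{iO}$ run over all of $V_n$, so $D_i$ equals $A_{ij}$ plus the external row size. Once this bookkeeping is in place, the proof is the odds factorization above.
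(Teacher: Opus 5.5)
Your proof is correct and follows the paper's argument. The odds factorization $p_{ik}/(1-p_{ik})=e^{\mu+\alpha_i}e^{\alpha_k}$ makes the focal factor constant on $|S|=d$, and the external edge sets are disjoint. Your product-event argument for conditioning on $\{A_{ij}=e\}\cap\{|S_i|=d\}\cap\{|S_j|=d\}$ simply spells out the independence step that the paper states in one line.
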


\begin{proof}
For one external row realization $S$, independence of the incident dyads gives
\[
 \Pp_i(S)
 \ \propto
 \prod_{k\in S}\frac{p_{ik}}{1-p_{ik}}
 =\exp\{|S|(\mu+\alpha_i)\}\prod_{k\in S}e^{\alpha_k}.
\]
The first factor is constant on $|S|=d$ and cancels after conditioning.  The
same conditional law holds for $j$, and the two external edge sets are
disjoint once the shared edge is fixed.
\end{proof}

The generating polynomial of the product-weighted fixed-size subset law in
Lemma~\ref{lem:row-invariance} is a weighted elementary symmetric polynomial.
It is real stable, so the law is strongly Rayleigh and hence negatively
associated \citep{borcea2009negative}.  Counts over disjoint subsets inherit
that property.

\begin{lemma}[Shifted sign under negative association]
\label{lem:shifted-sign}
Let $(X,Z)$ and $(X',Z')$ be iid integer-valued vectors.  If $(X,Z)$ is
negatively associated, then, for $a,b\in\{-1,0,1\}$ with $ab\le0$,
\[
 \E\{\sgn(X-X'+a)\sgn(Z-Z'+b)\}\le0.
\]
\end{lemma}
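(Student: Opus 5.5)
The plan is to decouple the two independent copies in two stages. First I use negative association of $(X,Z)$ with $(X',Z')$ held fixed. Then I use negative association of $(X',Z')$ with the first copy averaged out. What remains is a product of two one-dimensional shifted-sign means, and its sign is read off from the symmetry of $X-X'$ and $Z-Z'$. Throughout, negative association of a pair means $\Cov\{f(X),g(Z)\}\le0$ for bounded coordinatewise nondecreasing $f,g$. The same inequality then holds when both $f$ and $g$ are nonincreasing, by replacing them with $-f,-g$. All functions below take values in $[-1,1]$, so integrability is automatic.

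\emph{Step 1 (condition on the second copy).} Fix a realization $(x',z')$ of $(X',Z')$. By independence, the conditional law of $(X,Z)$ is its unconditional law. The maps $x\mapsto\sgn(x-x'+a)$ and $z\mapsto\sgn(z-z'+b)$ are nondecreasing, so negative association gives
\[
\E\{\sgn(X-x'+a)\sgn(Z-z'+b)\}\le\varphi_a(x')\,\psi_b(z'),
\]
where $\varphi_a(x')=\E\,\sgn(X-x'+a)$ and $\psi_b(z')=\E\,\sgn(Z-z'+b)$. Integrating over $(X',Z')$ bounds the target by $\E\{\varphi_a(X')\psi_b(Z')\}$.

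\emph{Step 2 (decouple the second copy).} Both $\varphi_a$ and $\psi_b$ are nonincreasing. Since $(X',Z')$ has the same negatively associated law, the remark above gives
\[
\E\{\varphi_a(X')\psi_b(Z')\}\le\E\varphi_a(X')\cdot\E\psi_b(Z').
\]
By Fubini, $\E\varphi_a(X')=\E\,\sgn(X-X'+a)$ and $\E\psi_b(Z')=\E\,\sgn(Z-Z'+b)$.

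\emph{Step 3 (sign of the marginal factors).} Put $W=X-X'$. It is integer valued and symmetric, because $X$ and $X'$ are iid. Hence $\E\,\sgn(W)=0$. For $a=1$, integrality gives
\[
\E\,\sgn(W+1)=\Pp(W\ge0)-\Pp(W\le-2)=\Pp(W=0)+\Pp(W=1)\ge0,
\]
using $\Pp(W\le-2)=\Pp(W\ge2)$. By the reflection $W\mapsto-W$, the case $a=-1$ gives a nonpositive value. So $\E\,\sgn(X-X'+a)$ is either zero or has the sign of $a$, and the same holds for $\E\,\sgn(Z-Z'+b)$ with $b$. The condition $ab\le0$ makes the product of the two factors nonpositive, which proves the lemma.

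The step I expect to need the most care is Step 3. The identity $\sgn(W+1)=\ind\{W\ge0\}-\ind\{W\le-2\}$ uses integrality essentially: for non-lattice variables a shift of size one would not align with the atoms. Step 2 also needs care, because negative association must be invoked for a pair of nonincreasing functions. Beyond these two points, the argument is a direct iteration of the covariance inequality.
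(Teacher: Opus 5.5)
Your proposal is correct and follows the paper's proof essentially step for step. It applies negative association conditionally on $(X',Z')$ to the nondecreasing sign maps, then again (via $-\varphi_a,-\psi_b$) to the nonincreasing functions of the second copy, and finally uses symmetry of the integer difference $W=X-X'$ to get $\E\,\sgn(W+a)=a\{\Pp(W=0)+\Pp(W=1)\}$, which is exactly the paper's $a\,c_X$ and gives the bound $ab\,c_Xc_Z\le0$.
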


\begin{proof}
Conditional on $(X',Z')$, both sign maps are increasing in $(X,Z)$.
Negative association therefore bounds the conditional product expectation by
$f_a(X')g_b(Z')$, where
\[
 f_a(x')=\E\{\sgn(X-x'+a)\},\qquad
 g_b(z')=\E\{\sgn(Z-z'+b)\}.
\]
Both functions are decreasing.  Applying negative association to
$-f_a(X')$ and $-g_b(Z')$ gives
\[
 \E\{f_a(X')g_b(Z')\}\le \E f_a(X')\,\E g_b(Z').
\]
If $\Delta_X=X-X'$, symmetry of the iid difference implies
\[
 \E\sgn(\Delta_X+a)=a c_X,
 \qquad
 c_X=\Pp(\Delta_X=0)+\Pp(\Delta_X=1)\ge0,
\]
with the same formula for $Z$.  The final upper bound is $ab c_Xc_Z\le0$.
\end{proof}

\begin{proof}[Proof of Theorem~\ref{thm:finite-sign}]
Write $(X_i^-,Z_i^-)$ and $(X_j^-,Z_j^-)$ for counts after deleting the two
focal nodes.  On $A_{ij}=e$, the observed differences equal
\[
 X_i-X_j=X_i^--X_j^-+a,
 \qquad
 Z_i-Z_j=Z_i^--Z_j^-+b,
\]
where
\[
 a=e\{\ind(j\in G_n)-\ind(i\in G_n)\},\qquad
 b=e\{\ind(j\in H_n)-\ind(i\in H_n)\}.
\]
The disjointness of $G_n$ and $H_n$ gives $ab\le0$.  Conditional on
$A_{ij}=e$ and $D_i=D_j=d+e$,
Lemma~\ref{lem:row-invariance} makes the two external rows iid weighted
fixed-size samples.  Their $G_n$ and $H_n$ counts are negatively associated,
so Lemma~\ref{lem:shifted-sign} gives the desired conditional nonpositive
expectation.  Summing over $d$ and the shared-edge state proves
$\E_0h_{ij}\le0$.  A deterministic or graph-independent frame only
multiplies each pair contribution by a nonnegative factor independent of the
graph, which proves the sum and target statements.
\end{proof}

\section{Local Mass, Network Center, and Baseline Calibration}
\label{app:baseline-proof}

This section records the technical claims used by the baseline theorem for
the exact repeated-cell experiment in Assumption~\ref{ass:baseline}.  All
bounds are uniform for $c$ in a fixed compact set and for bounded lattice
offsets. Theorem~\ref{thm:thinning-null} uses these bounds at $c=0$, and
Corollary~\ref{cor:thinning-power} uses them at $c=6$ under the cellwise
mixed-sign loading balance from Section~\ref{sec:power}.

\subsection{Uniform lattice and marked-conditioning tools}

Let $\omega$ index the finite phase, focal-label, conditioned-edge, bounded
offset, and compact local-coefficient choices in the declared arrays.

\begin{lemma}[Deterministic marked-profile summation]
\label{lem:marked-profile-summation}
Under Assumption~\ref{ass:marked-profile}, for every bounded-variation
function $q$ and $r\in\{0,1,2\}$,
\[
\frac1n\sum_{i:g_i=A}x_i^r q(\alpha_i)
\longrightarrow
\int_Iq(a)\nu_{A,r}(a)\,da.
\]
At every regular interior point $a$ and fixed $s<t$,
\[
\frac1{\sqrt n}\sum_{i:g_i=A}x_i^r
\ind\left\{a+\frac{s}{\sqrt n}<\alpha_i
\le a+\frac{t}{\sqrt n}\right\}
\longrightarrow(t-s)\nu_{A,r}(a).
\]
Moreover, for every fixed $M$ and $c>0$,
\[
\sup_i\sum_j
\{1+|\sqrt n(\alpha_i-\alpha_j)|\}^M
e^{-cn(\alpha_i-\alpha_j)^2}=O(\sqrt n).
\]
For disjoint roles $A,C$, marked orders $r,s\in\{0,1,2\}$, and a bounded-
variation kernel $\phi$ with compact support or Gaussian tail,
\[
\frac1{n^{3/2}}
\sum_{i:g_i=A}\sum_{j:g_j=C}
x_i^r x_j^s\phi\{\sqrt n(\alpha_j-\alpha_i)\}
\longrightarrow
\left\{\int_{\mathbb R}\phi(u)\,du\right\}
\int_I\nu_{A,r}(a)\nu_{C,s}(a)\,da.
\]
The repeated midpoint-quantile arrays satisfy these statements with the
intensities displayed after Assumption~\ref{ass:marked-profile}.
\end{lemma}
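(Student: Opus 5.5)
The plan treats the four claims of Lemma~\ref{lem:marked-profile-summation} in order. Each reduces to a Riemann-sum statement for the deterministic marked array of Assumption~\ref{ass:marked-profile}. We take that assumption to supply marked empirical measures whose role-$A$, order-$r$ intensity $\nu_{A,r}$ is bounded, piecewise $C^1$ with finitely many breakpoints, and has a discrepancy bound of order $n^{-1/2}$, up to a vanishing factor, at the $\sqrt n$ window scale. For the repeated midpoint-quantile arrays this is explicit. Within each cell the slot marks $x_r^r$ are fixed, and the activities $m_{g}\{(k-1/2)/m_n\}$ are spaced at distance $\asymp 1/m_n\asymp 1/n$. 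Counting points in an interval is therefore exact up to $O(1)$ per role and slot, and the intensity is $\nu_{A,r}(a)=\sum_{s\in A}x_s^r\ind\{a\in I_s\}/\{L\,m_s'(m_s^{-1}(a))\}$, matching $\lambda_{r,\vartheta}$ in Section~\ref{sec:power}. I would verify this last sentence of the lemma first, because the general statements only need it as input.

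\textbf{First display.} For $q$ bounded variation, write $\frac1n\sum x_i^r q(\alpha_i)=\int q\,d\mu_{A,r,n}$, where $\mu_{A,r,n}$ is the normalized marked empirical measure. Integrate by parts against the difference of distribution functions. The total variation of $q$ times the Kolmogorov distance, which is $O(1/n)$ for the quantile grid, gives convergence. No continuity of $q$ is needed because the discrepancy is uniform.

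\textbf{Window count.} In the second display, $(a+s/\sqrt n,\,a+t/\sqrt n]$ contains about $\sqrt n\,(t-s)\nu_{A,r}(a)+O(1)$ marked mass per role. This uses the $1/n$ grid spacing together with continuity of $m_g'$ at a regular interior point, meaning a point away from the finitely many support endpoints. Dividing by $\sqrt n$ gives the limit.

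\textbf{Gaussian sum.} For the third display, fix $i$ and split $j$ by shells $|\alpha_i-\alpha_j|\in[k/\sqrt n,(k+1)/\sqrt n)$. Since the intensities are bounded, each shell holds $O(\sqrt n)$ nodes uniformly in $i$ and in $k$, including near the endpoints. The shell weights are summable, $\sum_k (1+k)^M e^{-ck^2}<\infty$, which gives the uniform $O(\sqrt n)$ bound.

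\textbf{Double sum.} This step is the main obstacle. I would first handle $\phi=\ind_{(s,t]}$. For fixed $i$ of role $A$, the inner sum equals $\sqrt n(t-s)\nu_{C,s}(\alpha_i)+o(\sqrt n)$ by the window count, and this holds uniformly for $\alpha_i$ outside an $\eta$-neighborhood of the breakpoints. The remaining indices $i$ number $O(\eta n)$, and each contributes $O(\sqrt n)$ by the shell bound. The leftover outer sum $\frac1n\sum_i x_i^r(t-s)\nu_{C,s}(\alpha_i)$ then converges by the first display, taking $q=\nu_{C,s}$, which is bounded variation. Letting $\eta\to0$ finishes the indicator case.

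Extension to general $\phi$ goes in two stages. Compactly supported bounded-variation $\phi$ can be approximated in $L^1$, and sandwiched, by step functions. The argument is linear in $\phi$, and the error is controlled by $\|\phi-\phi_\varepsilon\|_{L^1}$ times a uniform window-count constant. For Gaussian tails, truncate at $|u|\le T$. The tail is bounded by the third display with $M=0$, giving a remainder of order $n\cdot O(\sqrt n)\,e^{-cT^2}/n^{3/2}$, which vanishes as $T\to\infty$. The delicate point throughout is uniformity of the window counts near the finitely many support endpoints; the $\eta$-excision above is the device I would use to handle it.
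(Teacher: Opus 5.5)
Your arguments for the first three displays match the paper's: Stieltjes integration by parts against the cumulative discrepancy, differencing the marked counting function over a $\sqrt n$-window, and unit shells on the $\sqrt n$ scale with summable weights $(1+k)^Me^{-ck^2}$. The lemma is stated under Assumption~\ref{ass:marked-profile}, so these steps should rest on $\Delta_n=o(\sqrt n)$ and the packing bound, not on the grid's $O(1)$ discrepancy and $1/n$ spacing. Each step survives that substitution (window errors become $o(\sqrt n)$ rather than $O(1)$), and piecewise Lipschitz suffices where you wrote piecewise $C^1$.

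The cross-role double sum is where your route genuinely differs. The paper works with the given kernel directly. It replaces the inner marked empirical measure by $n\nu_{C,s}(y)\,dy$; the cost is a constant times $\Delta_n$, uniformly in $i$, because the variation of $y\mapsto\phi\{\sqrt n(y-\alpha_i)\}$ does not depend on $n$. The resulting outer integrand has variation $O(\sqrt n)$, so a second replacement is negligible at the $n^{3/2}$ scale. A change of variables and dominated convergence then finish. This handles compactly supported and Gaussian-tailed kernels in one pass, and breakpoints enter only through almost-everywhere continuity of $\nu_{C,s}$.

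Your route uses indicator windows with uniform counts off an $\eta$-neighborhood of the breakpoints and support endpoints, the first display with $q=\nu_{C,s}$, a step-function sandwich, and truncation via the third display. It is more elementary and reuses the earlier displays, at the price of three approximation layers. In the sandwich, the marks $x_i^rx_j^s$ can be negative. So bound the error by the unmarked double sum of a nonnegative step function dominating $|\phi-\phi_\varepsilon|$, to which your indicator case applies; an $L^1$ bound on $\phi-\phi_\varepsilon$ alone does not control a discrete sum. You also use $s$ both as a marked order and as a window endpoint.
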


\begin{proof}
Let
\[
R_{A,r,n}(a)
=F_{A,r,n}(a)-n\int_{I_-}^{a}\nu_{A,r}(t)\,dt.
\]
Stieltjes integration by parts bounds the discrepancy in the first display by
$n^{-1}\|R_{A,r,n}\|_\infty$ times the variation and endpoint size of $q$,
which is $o(n^{-1/2})$. Taking differences of $F_{A,r,n}$ over a
root-$n$ interval gives the second display at continuity points. For the last
display, split the real line into unit annuli on the $\sqrt n$ scale.
Assumption~\ref{ass:marked-profile}'s packing bound supplies $O(\sqrt n)$
nodes in every such annulus, while the polynomially weighted Gaussian factors
are summable over the annulus index. For the cross-role sum, replace the inner marked empirical measure
by $n\nu_{C,s}(y)dy$ using the same Stieltjes bound. The resulting outer
integrand has variation $O(\sqrt n)$, so a second replacement contributes
$o(n^{3/2})$; change variables and dominated convergence give the display.
Midpoint-quantile counting has uniformly bounded cumulative discrepancy, and
the density bounds give the required packing envelope.
\end{proof}

\begin{lemma}[Uniform unimodular lattice bounds]
\label{lem:uniform-lattice}
For $d\in\{1,2\}$, let
\[
S_{n,\omega}=\sum_{\ell=1}^{N_{n,\omega}}X_{n\ell,\omega},
\qquad X_{n\ell,\omega}\in\mathbb Z^d,
\]
be a sum of independent uniformly bounded increments with
$N_{n,\omega}\asymp n$. Suppose there are disjoint index blocks
$J_{1,n},\ldots,J_{d,n}$, each of order $n$, and integer vectors
$v_1,\ldots,v_d$ such that every increment in $J_{r,n}$ assigns probability
at least $\epsilon>0$ to two support atoms differing by $v_r$, and
$|\det(v_1,\ldots,v_d)|=1$. Then, uniformly in $\omega$,
\[
cnI_d\preceq\Sigma_{n,\omega}:=\Var(S_{n,\omega})
\preceq CnI_d,
\]
and, for $t\in[-\pi,\pi]^d$,
\[
\left|\E e^{it^\top(S_{n,\omega}-\E S_{n,\omega})}\right|
\le
\exp\{-cn\operatorname{dist}(t,2\pi\mathbb Z^d)^2\}.
\]
The maximal lattice is therefore $\mathbb Z^d$. If
$\phi_{\Sigma}$ denotes the centered Gaussian density with covariance
$\Sigma$, then for every fixed $M<\infty$ and
$\|z-\E S_{n,\omega}\|\le M\sqrt n$,
\[
\Pp(S_{n,\omega}=z)
=\phi_{\Sigma_{n,\omega}}(z-\E S_{n,\omega})
+O\{n^{-(d+1)/2}\}.
\]
Consequently,
\[
\sup_z\Pp(S_{n,\omega}=z)\le Cn^{-d/2},
\]
the mass is bounded above and below by constant multiples of $n^{-d/2}$
on every bounded standardized region, and, for every fixed lattice direction
$v$,
\[
\sup_z|\Pp(S_{n,\omega}=z+v)-\Pp(S_{n,\omega}=z)|
\le Cn^{-(d+1)/2}.
\]

These conclusions survive the bounded exponential tilts used below. More
precisely, if the change-of-measure factor at a target $z$ is bounded above by
$\exp(-cnQ+C)$ and the tilt has norm at most
$C\{Q^{1/2}+n^{-1/2}\}$, then the point-mass and first-difference bounds gain
the common envelope $\exp(-c'nQ)$. If, in addition, the standardized tilted
target is bounded and the reverse change-of-measure bound
\[
K_n(\lambda)-\lambda^\top z\ge-C_M
\]
holds whenever $nQ\le M$, then a matching local lower bound holds on that
region.
\end{lemma}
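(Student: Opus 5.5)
The plan has three stages. First I will obtain the variance sandwich and the characteristic-function bound from the two hypotheses: each block contains order $n$ increments that carry mass, and the vectors $v_r$ form a unimodular basis. Next I will run the standard Fourier-inversion local limit argument, keeping every constant uniform in $\omega$. Last, I will extend the argument to exponential tilts.

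\textbf{Variance and characteristic function.} The upper bound $\Sigma_{n,\omega}\preceq CnI_d$ is immediate from uniform boundedness, independence, and $N_{n,\omega}\asymp n$. For the lower bound, fix a unit vector $u$. Every increment $\ell\in J_{r,n}$ puts mass at least $\epsilon$ on two atoms $y$ and $y+v_r$. Hence $\Var(u^\top X_{n\ell,\omega})\ge \tfrac{\epsilon}{2}(u^\top v_r)^2$, since a variable taking each of two values with probability at least $\epsilon$ has variance at least $\epsilon(\text{gap})^2/2$. Summing over the blocks gives $u^\top\Sigma u\ge c'n\sum_r(u^\top v_r)^2\ge c n$. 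The last inequality holds because $V=(v_1,\dots,v_d)$ is a fixed invertible matrix with finitely many possible choices.

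For the characteristic function, the same two atoms give
\[
|\E e^{it^\top X_{n\ell,\omega}}|\le 1-2\epsilon\{1-\cos(t^\top v_r)\}\le\exp[-c\operatorname{dist}(t^\top v_r,2\pi\mathbb Z)^2].
\]
Multiplying over the blocks bounds the modulus by $\exp\{-cn\sum_r\operatorname{dist}(t^\top v_r,2\pi\mathbb Z)^2\}$. Because $|\det V|=1$, the equation $V^\top t\in2\pi\mathbb Z^d$ holds exactly when $t\in2\pi\mathbb Z^d$. A compactness argument on the torus, combined with linearity near the lattice points, then gives $\sum_r\operatorname{dist}(t^\top v_r,2\pi\mathbb Z)^2\ge c\operatorname{dist}(t,2\pi\mathbb Z^d)^2$. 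This proves the displayed bound and shows the maximal lattice is $\mathbb Z^d$.

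\textbf{Local expansion.} I will write
\[
\Pp(S=z)=(2\pi)^{-d}\int_{[-\pi,\pi]^d}e^{-it^\top(z-\E S)}\varphi_c(t)\,dt,
\]
where $\varphi_c$ is the centered characteristic function, and split the integral at $|t|=\delta$.
\begin{itemize}
\item On $|t|>\delta$, the bound from the first stage gives $e^{-cn\delta^2}$, which is negligible.
\item On $|t|\le\delta$, I expand the cumulants: $\log\varphi_c(t)=-\tfrac12t^\top\Sigma t+\kappa_3(t)+O(n|t|^4)$, with $|\kappa_3(t)|\le Cn|t|^3$. Substituting $t=s/\sqrt n$ and using the sandwich $cnI\preceq\Sigma\preceq CnI$ yields the Gaussian density plus a cubic Edgeworth correction of relative order $n^{-1/2}$.
\end{itemize}
The cubic correction is an odd polynomial times the Gaussian density. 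On the region $\|z-\E S\|\le M\sqrt n$ it contributes $O(n^{-d/2}\cdot n^{-1/2})$, which gives the error $O\{n^{-(d+1)/2}\}$. The sup bound follows from $\int|\varphi_c|\le Cn^{-d/2}$, and the two-sided bound on bounded standardized regions follows from the expansion because $\phi_\Sigma\asymp n^{-d/2}$ there. For the first difference I use the factor $|e^{-it^\top v}-1|\le C|t|$ inside the inversion integral. This adds an extra factor $n^{-1/2}$ everywhere, uniformly in $z$ and without needing the expansion.

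\textbf{Tilts (the main obstacle).} I will write $\Pp(S=z)=\exp\{K_n(\lambda)-\lambda^\top z\}\,\Pp_\lambda(S=z)$. The tilted increments are still independent and bounded. For a tilt of size $O(Q^{1/2}+n^{-1/2})=O(1)$ small, they keep two-atom masses at least $\epsilon/2$, so the sup and first-difference bounds from the previous stage apply to $\Pp_\lambda$. Multiplying by the assumed factor $e^{-cnQ+C}$ gives the envelope $e^{-c'nQ}$, absorbing polynomial terms. When $nQ\le M$, the standardized tilted target is bounded, so $\Pp_\lambda(S=z)\gtrsim n^{-d/2}$, and the reverse bound $K_n(\lambda)-\lambda^\top z\ge-C_M$ gives the matching lower bound.

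The delicate step is uniformity. I need the tilt small enough that the atom masses and the variance sandwich survive uniformly in $\omega$. For $Q$ not small I will handle this by first restricting attention to $Q\le q_0$, and then using the crude bound $e^{-cnQ+C}$ together with the trivial estimate $\Pp_\lambda\le1$ when $Q>q_0$, where the exponential factor alone dominates.
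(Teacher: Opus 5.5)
Your route matches the paper's: a two-atom characteristic-function bound on each block, unimodularity of $(v_1,\ldots,v_d)$ to make $2\pi\mathbb Z^d$ the only common zero set, Fourier inversion with a major/minor arc split, the factor $e^{-it^\top v}-1$ for first differences, and $\Pp(S=z)=e^{K_n(\lambda)-\lambda^\top z}\Pp_\lambda(S=z)$ for tilts. The gap is in the tilted first-difference bound.

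The change-of-measure factor depends on the target, so you cannot multiply the tilted first-difference bound by the factor at $z$. With the tilt $\lambda$ used for $z$,
\[
\Pp(S=z+v)-\Pp(S=z)
=e^{K_n(\lambda)-\lambda^\top z}\Bigl[\{\Pp_\lambda(S=z+v)-\Pp_\lambda(S=z)\}
+(e^{-\lambda^\top v}-1)\,\Pp_\lambda(S=z+v)\Bigr].
\]
Your argument controls the first bracketed term. Your bounds give only $C\|\lambda\|n^{-d/2}e^{-cnQ}$ for the second, which is not $O\{n^{-(d+1)/2}e^{-c'nQ}\}$ unless $\|\lambda\|$ is tied to $Q$. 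This is exactly where the hypothesis $\|\lambda\|\le C\{Q^{1/2}+n^{-1/2}\}$ is needed, through $(Q^{1/2}+n^{-1/2})e^{-cnQ}\le Cn^{-1/2}e^{-c'nQ}$, which the paper applies ``for adjacent targets.''

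You used the tilt-size hypothesis only to preserve the two-atom masses, but any bounded tilt already does that. For increments bounded by $B$, tilted atom probabilities differ from the originals by at most a factor $e^{\pm 2B\|\lambda\|}$. So your split at $Q=q_0$ is unnecessary, and the one essential use of the hypothesis is missing. As written, your tilt step proves the point-mass envelope but not the first-difference envelope. Also, in the lower bound, boundedness of the standardized tilted target is an assumption of the lemma, not a consequence of $nQ\le M$.

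There is also a slip in your first stage: the bound $|\E e^{it^\top X}|\le1-2\epsilon\{1-\cos(t^\top v_r)\}$ is false. Take $X$ uniform on two atoms differing by $v_r$, so $\epsilon=1/2$, and put $\theta=t^\top v_r$. The left side is $|\cos(\theta/2)|$ and the right side is $\cos\theta$, which fails for every small $\theta\neq0$. Two correct versions:
\begin{itemize}[leftmargin=*]
\item With atom masses $p_1,p_2\ge\epsilon$, the estimate $|p_1+p_2e^{i\theta}|\le p_1+p_2-\frac{p_1p_2}{p_1+p_2}(1-\cos\theta)$ gives $1-\frac{\epsilon}{2}(1-\cos\theta)$.
\item The paper's independent-copy bound $|\E e^{it^\top X}|^2=\E\cos\{t^\top(X-X')\}\le1-2\epsilon^2\{1-\cos(t^\top v_r)\}$.
\end{itemize}
Only constants change downstream, and the rest of your local-limit argument is sound. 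Your Edgeworth term is harmless but not needed. The cruder expansion $\log\varphi_c(t)=-\frac12t^\top\Sigma t+O(n\|t\|^3)$ already gives the $O\{n^{-(d+1)/2}\}$ error uniformly in $z$ after integrating $n\|t\|^3e^{-cn\|t\|^2}$.
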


\begin{proof}
For one anchor increment with atoms $a$ and $a+v_r$, an independent-copy
identity gives
\[
|\E e^{it^\top X}|^2
\le1-2\epsilon^2\{1-\cos(t^\top v_r)\}.
\]
Multiplication over the disjoint anchor blocks yields
\[
|\E e^{it^\top S_n}|
\le
\exp\left[-cn\sum_{r=1}^d
\sin^2(t^\top v_r/2)\right].
\]
Unimodularity makes $2\pi\mathbb Z^d$ the only common zero set and also gives
the covariance lower bound; bounded increments give the upper bound.

Fourier inversion on the maximal lattice, together with the uniform expansion
\[
\log\E e^{it^\top(S_n-\E S_n)}
=-\tfrac12t^\top\Sigma_nt+O(n\|t\|^3),
\]
gives the local Gaussian approximation on the major arc. The characteristic-
function bound makes the minor arc exponentially small. Inserting
$e^{-it^\top v}-1$ into the inversion integral gives the extra
$n^{-1/2}$ factor for a first lattice difference. Finally, exponential
tilting gives
\[
\Pp(S_n=z)=e^{K_n(\lambda)-\lambda^\top z}
\Pp_\lambda(S_n=z).
\]
Applying the local bounds under the tilted law, and using
$(Q^{1/2}+n^{-1/2})e^{-cnQ}\le Cn^{-1/2}e^{-c'nQ}$ for adjacent targets,
proves the upper envelope and first-difference bound. The central Gaussian
lower mass together with the stated reverse change-of-measure bound proves the
local lower bound.
\end{proof}

\begin{lemma}[Marked conditional Gaussian limit]
\label{lem:marked-conditional}
Let
\[
(S_n,M_n)=\sum_\ell(X_{n\ell},Y_{n\ell}),
\qquad S_n\in\mathbb Z^d,\quad M_n\in\mathbb R^k,
\]
where $d\in\{1,2\}$ and $k$ are fixed, the joint increments are independent
and uniformly bounded, and a bounded tilt centers the constraint exactly at
$z_n$. Suppose, uniformly over the declared compact parameter class, that the
Fourier product after adjoining a complex mark frequency $iu/\sqrt n$ has the
same central Gaussian majorant and exponentially small minor arc for $u$ in
compact sets, and that
\[
\frac1{\sqrt n}\E_\lambda M_n\to m,
\qquad
\frac1n\Var_\lambda
\begin{pmatrix}S_n\\M_n\end{pmatrix}
\to
\begin{pmatrix}
\Sigma_{SS}&\Sigma_{SM}\\
\Sigma_{MS}&\Sigma_{MM}
\end{pmatrix},
\]
with $\Sigma_{SS}$ uniformly positive definite. Then, uniformly over the
declared compact parameter class,
\[
\mathcal L_\lambda\left(
\frac{M_n}{\sqrt n}\ \middle|\ S_n=z_n
\right)
\Longrightarrow
N\left(m,
\Sigma_{MM}-\Sigma_{MS}\Sigma_{SS}^{-1}\Sigma_{SM}
\right).
\]
For moment convergence, assume additionally that each bounded real mark tilt
$u/\sqrt n$ can be accompanied by an $O(n^{-1/2})$ change in the constraint
tilt that recenters $z_n$, while the joint Hessian, anchor probabilities,
covariance eigenvalue bounds, and displayed limits remain uniform. Under this
additional condition, conditional exponential moments are locally bounded and
the same conclusion holds for every fixed polynomial moment.
\end{lemma}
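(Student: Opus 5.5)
The plan is a characteristic-function (Fourier inversion) argument for the joint lattice--continuous vector $(S_n,M_n)$. Everything will run under the tilted law $\Pp_\lambda$, which centers the constraint exactly at $z_n$. For a mark frequency $u\in\R^k$ in a compact set, I will represent the conditional characteristic function as a ratio
\[
\E_\lambda\bigl\{e^{iu^\top M_n/\sqrt n}\,\ind(S_n=z_n)\bigr\}
=\frac{1}{(2\pi)^d}\int_{[-\pi,\pi]^d}
\E_\lambda e^{it^\top(S_n-z_n)+iu^\top M_n/\sqrt n}\,dt,
\]
and divide by the same expression at $u=0$, which is $\Pp_\lambda(S_n=z_n)$. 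By Lemma~\ref{lem:uniform-lattice} the denominator is of exact order $n^{-d/2}$ uniformly over the parameter class.

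\textbf{Step 1 (minor arc).} I will split the $t$-integral into the major arc $\|t\|\le\delta$ and its complement. By hypothesis, the Fourier product with the adjoined mark frequency keeps the exponentially small minor-arc bound uniformly for compact $u$. The minor arc therefore contributes $o(n^{-d/2})$.

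\textbf{Step 2 (major arc).} Substitute $t=s/\sqrt n$. Bounded independent increments give a uniform cumulant expansion
\[
\log\E_\lambda e^{i(s,u)^\top(S_n-z_n,\,M_n-\E_\lambda M_n)/\sqrt n}
=-\tfrac12 (s,u)^\top\Sigma_n (s,u)+O\{n^{-1/2}\|(s,u)\|^3\}.
\]
Here $\Sigma_n$ is the normalized covariance, and exact centering means there is no linear term in $s$. The central Gaussian majorant supplies an integrable dominating function in $s$. Dominated convergence then gives
\[
n^{d/2}\times\text{numerator}\to (2\pi)^{-d}\int e^{iu^\top m-\frac12(s,u)^\top\Sigma(s,u)}\,ds .
\]
The $s$-integral is computed by completing the square. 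It produces a factor proportional to $\det\Sigma_{SS}^{-1/2}$ times $\exp\{iu^\top m-\tfrac12u^\top(\Sigma_{MM}-\Sigma_{MS}\Sigma_{SS}^{-1}\Sigma_{SM})u\}$. The $u=0$ case gives the same prefactor, so the ratio converges to the Schur-complement Gaussian characteristic function. Lévy's continuity theorem then yields the weak limit. Uniformity over the compact class follows because every bound used is uniform and the limits $m$ and $\Sigma$ are attained uniformly; the positive definiteness of $\Sigma_{SS}$ keeps the prefactor bounded away from zero.

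\textbf{Step 3 (moments).} I will repeat the computation with a real mark tilt $u/\sqrt n$ in place of $iu/\sqrt n$. The conditional moment generating function is a ratio of $\E_\lambda\{e^{u^\top M_n/\sqrt n}\ind(S_n=z_n)\}$ to $\Pp_\lambda(S_n=z_n)$. Absorbing $e^{u^\top M_n/\sqrt n}$ into a new tilt $(\lambda',u)$ with an $O(n^{-1/2})$ recentering of the constraint tilt expresses the numerator as a cumulant-generating-function factor, which is bounded, times a local probability at a target still centered. By the uniform anchor and covariance bounds, that local probability is again of exact order $n^{-d/2}$ via Lemma~\ref{lem:uniform-lattice}. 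This makes the conditional MGF locally bounded uniformly. Uniform integrability of every polynomial moment follows, and together with the weak convergence this gives moment convergence.

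The main obstacle I expect is Step 2's uniform error control. The error must be controlled jointly in $(s,u)$ with a Gaussian majorant valid on the whole major arc $\|s\|\le\delta\sqrt n$ once the mark frequency is adjoined; this is exactly what the stated hypothesis is designed to provide. A secondary difficulty is verifying that the tilt recentering in Step 3 preserves the Hessian lower bounds, which the additional uniformity assumption supplies.
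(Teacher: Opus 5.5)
Your proposal is correct and takes essentially the same route as the paper. Both arguments use Fourier inversion in the $d$ lattice coordinates only, remove the minor arc with the stipulated complex-frequency majorant, and apply the $t=s/\sqrt n$ Gaussian expansion on the major arc, dividing by the local mass to obtain the Schur complement; the moment part then uses a real $u/\sqrt n$ mark tilt with an $O(n^{-1/2})$ constraint-tilt recentering to bound conditional exponential moments and get uniform integrability. Your version simply writes out the completing-the-square computation and the change-of-measure bookkeeping that the paper leaves implicit.
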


\begin{proof}
Fourier inversion is required only in the $d$ lattice coordinates. For fixed
$u$, the numerator of the conditional characteristic function is
\[
\frac1{(2\pi)^d}
\int_{[-\pi,\pi]^d}
\E_\lambda\exp\left[
it^\top(S_n-z_n)
+iu^\top\{M_n-\E_\lambda M_n\}/\sqrt n
\right]dt.
\]
Putting $t=s/\sqrt n$ gives the joint Gaussian expansion on the major arc;
the stipulated complex-frequency bound dominates the integral and removes the
minor arc. Dividing
by the local mass from Lemma~\ref{lem:uniform-lattice} gives the Schur-
complement covariance. Under the additional moment condition, a real
$O(n^{-1/2})$ mark tilt is accompanied by the stipulated $O(n^{-1/2})$
correction to the constraint tilt. The stable Hessian and local-mass bounds
then give uniform integrability and moment convergence.
\end{proof}

For the actual ARD arrays, the one-pair degree difference has anchor $1$.
The joint degree/$G$-tie array has anchors $(1,1)$ and $(1,0)$, whose
determinant is $-1$; the $H$ tie is identical. The shared-row triple has
anchors $(1,0)$ and $(0,1)$. For the fixed-path response, only degree is
conditioned on: the $G/H$ differences are marks, with joint support directions
$(1,0,0)$, $(1,1,0)$, and $(1,0,1)$. Positive group shares and compact
logistic predictors keep every anchor probability uniformly positive after
the finitely many focal deletions and bounded tilts. Bounded joint increments
and uniformly interior Bernoulli probabilities retain the anchor minor-arc
bound after compact complex mark frequencies. After the midpoint activity
tilt, cellwise sign balance leaves only a bounded mismatch in the degree
constraint, and an $O(n^{-1})$ common correction centers it exactly. Under a
bounded real mark tilt, the degree-constraint mean then moves by $O(\sqrt n)$;
the degree Hessian is order $n$, so an $O(n^{-1/2})$ constraint-tilt adjustment
recenters the target. The joint Hessian, anchors, covariance bounds, and marked
limits remain uniform under that adjustment.

\begin{lemma}[Collision and tie mass]
\label{lem:collision-mass}
There are constants $C,c_0>0$ such that, for every focal pair,
\[
 \Pp_c(D_i-D_j=d)
 \le Cn^{-1/2}\exp\{-c_0n(\alpha_i-\alpha_j)^2\},
\]
and
\[
 \begin{split}
 &\Pp_c(D_i=D_j,X_i=X_j)
 +\Pp_c(D_i=D_j,Z_i=Z_j)\\
 &\qquad\le Cn^{-1}\exp\{-c_0n(\alpha_i-\alpha_j)^2\}.
 \end{split}
\]
The matching lower local limit on $O(n^{-1/2})$ activity gaps implies
\[
 \E_cM_n^{\BO}\asymp n,
 \qquad
 \E_c(M_n^{\BO}-S_{2,n}^{\BO})=O(\sqrt n).
\]
\end{lemma}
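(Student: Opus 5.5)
The argument rests on Lemma~\ref{lem:uniform-lattice}, applied to the degree difference $D_i-D_j$ and to the pair $(D_i-D_j,X_i-X_j)$ (and likewise $(D_i-D_j,Z_i-Z_j)$). The lattice bounds supply the point masses, an exponential tilt supplies the Gaussian activity-gap envelope, and the summation bounds of Lemma~\ref{lem:marked-profile-summation} turn these pair-level bounds into the order-$n$ statements for $M_n^{\BO}$.

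\emph{Collision bound.} Fix a focal pair and condition on the shared edge $A_{ij}$. Then $D_i-D_j$ is, up to a bounded offset, a sum of independent increments $A_{ik}-A_{jk}$ over the common alter pool, which has size of order $n$. Each increment puts probability bounded below on the atoms $0$ and $1$, since the logistic predictors are compact and the Bernoulli probabilities are uniformly interior. This is the one-dimensional anchor with $v_1=1$ noted after Lemma~\ref{lem:marked-conditional}, so $\sup_d\Pp_c(D_i-D_j=d)\le Cn^{-1/2}$. For the Gaussian envelope, the mean of $D_i-D_j$ is $\sum_k\{p_{ik}-p_{jk}\}$. Because $\Lambda'$ is bounded below on the compact predictor range and the residual $c x_ix_j/\sqrt n$ contributes only $O(\sqrt n)$, this mean equals $n(\alpha_i-\alpha_j)\bar v+O(\sqrt n)$ with $\bar v$ bounded away from zero. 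When $n|\alpha_i-\alpha_j|$ dominates $\sqrt n$, I will tilt toward any fixed target $d$ and apply the tilted part of Lemma~\ref{lem:uniform-lattice} with $Q=(\alpha_i-\alpha_j)^2$. The cumulant function $K_n$ is strictly convex with Hessian of order $n$, so the change-of-measure factor is at most $\exp(-cnQ+C)$ and the tilt has norm $O(Q^{1/2}+n^{-1/2})$. In the complementary regime $nQ=O(1)$ the envelope is a constant factor, and the crude bound $Cn^{-1/2}$ already suffices.

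\emph{Tie bound.} The event $\{D_i=D_j,X_i=X_j\}$ is a point event for the two-dimensional vector $(D_i-D_j,X_i-X_j)$. Restricted to alters in $G$, the increments $(A_{ik}-A_{jk})(1,1)$ have atoms $(0,0)$ and $(1,1)$. Restricted to alters in $H\cup O$, they have atoms $(0,0)$ and $(1,0)$. Both alter blocks have positive share, and $\det\{(1,1),(1,0)\}=-1$, so Lemma~\ref{lem:uniform-lattice} with $d=2$ gives point mass $Cn^{-1}$. The same tilt, now acting only in the degree coordinate, adds the Gaussian factor in $\alpha_i-\alpha_j$. The $Z$ event is handled symmetrically with $H$ in place of $G$.

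\emph{Aggregation.} For the upper bound on $\E_cM_n^{\BO}$, sum the collision bound over $B\times O$ pairs. The annulus bound in Lemma~\ref{lem:marked-profile-summation}, with $M=0$, gives $\sum_j e^{-c_0n(\alpha_i-\alpha_j)^2}=O(\sqrt n)$ for each $i$, hence a total of $O(n\cdot\sqrt n\cdot n^{-1/2})=O(n)$. For the lower bound, use the local lower mass from Lemma~\ref{lem:uniform-lattice}. When $\sqrt n|\alpha_i-\alpha_j|\le 1$ the standardized target is bounded, and the reverse change-of-measure condition holds because the tilt is $O(n^{-1/2})$; this gives $\Pp_c(D_i=D_j)\ge cn^{-1/2}$. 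The cross-role display of Lemma~\ref{lem:marked-profile-summation}, with $\phi=\ind_{[-1,1]}$ and $r=s=0$, counts order $n^{3/2}$ such $B/O$ pairs, since every repeated cell contains both roles at a common activity and the overlap integral is positive. Together these give $\E_cM_n^{\BO}\ge cn$.

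For the last claim, $M_n^{\BO}-S_{2,n}^{\BO}=\sum A^{\BO}_{ij}\ind\{D_i=D_j\}(1-h_{ij}^2)$. Since $1-h_{ij}^2$ is the indicator that $X_i=X_j$ or $Z_i=Z_j$, each term is bounded by the tie events. Summing the $Cn^{-1}$ tie bound over the same $O(n^{3/2})$ effective pairs gives $O(\sqrt n)$.

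The main obstacle is proving the tilted Gaussian envelope uniformly in the pair and in $c$. This requires a quantitative lower bound on the degree-mean gap, of order $n|\alpha_i-\alpha_j|$ up to $O(\sqrt n)$, so that the tilt norm and the value $K_n(\lambda)-\lambda d$ satisfy the hypotheses of Lemma~\ref{lem:uniform-lattice}. I would obtain this from monotonicity of $\Lambda$ together with the compact predictor range.
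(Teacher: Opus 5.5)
Your proposal is correct and follows the paper's proof. Both use the anchor $1$ and the anchors $(1,1),(1,0)$ in Lemma~\ref{lem:uniform-lattice}, an exponential tilt for the Gaussian gap envelope, packing sums for the upper orders, the local lower mass on $O(n^{-1/2})$ gaps over order-$n^{3/2}$ close $B/O$ pairs, and the two-constraint bound for the tie defect. The one variation is that the paper tilts each alter pair to its logistic midpoint, so the penalty $\exp\{-c_0n(\alpha_i-\alpha_j)^2\}$ follows directly from strong convexity of $\log(1+e^u)$; this sidesteps the mean-gap estimate and two-regime split that you flag as the main obstacle.
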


\begin{proof}
Delete the focal nodes and exponentially tilt the Bernoulli difference array
to its midpoint.  Compact predictors and strong convexity of
$\log(1+e^u)$ give the affinity penalty
$\exp\{-c_0n(\alpha_i-\alpha_j)^2\}$. The one-constraint array uses the
anchor $1$; the joint total/$G$-tie array uses $(1,1)$ and $(1,0)$, and
similarly for $H$. Lemma~\ref{lem:uniform-lattice} therefore gives the
respective $n^{-1/2}$ and $n^{-1}$ factors. Because midpoint quantiles of a density
bounded above and away from zero satisfy
\[
 \sup_i\sum_j e^{-cn(\alpha_i-\alpha_j)^2}=O(\sqrt n),
 \qquad
 \sum_{i,j}e^{-cn(\alpha_i-\alpha_j)^2}=O(n^{3/2}),
\]
summing the local bounds gives the displayed upper orders. The exact
$10{:}10{:}20$ cell composition supplies positive $B/O$ mass in every local
activity neighborhood. On bounded scaled activity gaps, the exact logistic
overlap factor also gives the reverse bounded change-of-measure factor and the
tilted target remains in a bounded standardized region. The matching local
lower bound therefore gives the order-$n$ collision mass. A zero mark requires one of the two additional group-count
ties, and the two-constraint bound yields the $O(\sqrt n)$ tie defect.
\end{proof}

\begin{lemma}[Finite-state response range]
\label{lem:state-range}
For four distinct focal respondents, condition on the four cross dyads between
the two pairs.  If $m_{ij}(b)$ is the conditional mean of one pair kernel under
cross-edge state $b$, then
\[
 |m_{ij}(b)-m_{ij}(b')|
 \le Cn^{-1}\exp\{-c_0n(\alpha_i-\alpha_j)^2\}
\]
for any two cross-edge states $b,b'$.
\end{lemma}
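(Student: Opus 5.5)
The plan is to write the pair kernel $h_{ij}$ as a bounded function of the two focal rows and to isolate the four cross dyads $b=(A_{ik},A_{il},A_{jk},A_{jl})$. Here $\{k,l\}$ is the second pair. Conditional on $b$, the remaining dyads incident to $i$ and $j$ are independent of $b$. Changing $b$ to $b'$ therefore changes only a bounded integer offset in the external row sums: each of $D_i$, $D_j$, $X_i$, $X_j$, $Z_i$, $Z_j$ shifts by at most two. It also deletes two alters from each row, and this deletion is common to both states. So I would write
\[
m_{ij}(b)=\E\bigl[\ind\{\Delta_D+\delta_D(b)=0\}\sgn(\Delta_X+\delta_X(b))\sgn(\Delta_Z+\delta_Z(b))\bigr],
\]
where $(\Delta_D,\Delta_X,\Delta_Z)$ is the difference array of the two reduced rows. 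The shifts $\delta(b)$ lie in a finite set of bounded lattice vectors. The state-range bound then reduces to controlling the first lattice differences of a fixed function under bounded shifts.

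Next, decompose
\[
\sgn(\Delta_X+\delta_X)\sgn(\Delta_Z+\delta_Z)-\sgn(\Delta_X+\delta'_X)\sgn(\Delta_Z+\delta'_Z),
\]
evaluated on the collision events $\{\Delta_D=-\delta_D\}$ and $\{\Delta_D=-\delta'_D\}$, into two kinds of term. The first kind comes from moving the degree constraint. This is a first difference $\Pp(\Delta_D,\Delta_X,\Delta_Z\ \text{in a set})$ between two adjacent points of the degree lattice. The second kind comes from a sign flip at fixed degree. This is nonzero only when $|\Delta_X|\le2$ or $|\Delta_Z|\le2$, so it is a joint point mass of the two-dimensional array $(\Delta_D,\Delta_X)$ or $(\Delta_D,\Delta_Z)$ at boundedly many lattice points. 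For the second kind I would apply Lemma~\ref{lem:uniform-lattice} with $d=2$, using the anchors $(1,1)$ and $(1,0)$ exactly as in Lemma~\ref{lem:collision-mass}. This gives $Cn^{-1}$ per point, multiplied by the tilt envelope $\exp\{-c_0n(\alpha_i-\alpha_j)^2\}$.

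The first kind is the main obstacle. A one-dimensional first difference of the degree mass gives only $n^{-1}$ for the unsigned event. Here the integrand is the signed product, so I need the first-difference bound for the joint law of $(\Delta_D,\Delta_X,\Delta_Z)$ restricted to a half-space pattern. My approach is to sum over $(\Delta_X,\Delta_Z)$. I would write
\[
\sum_{x,z}\sgn(x)\sgn(z)\bigl[\Pp(\Delta_D=d+v,\Delta_X=x,\Delta_Z=z)-\Pp(\Delta_D=d,\Delta_X=x,\Delta_Z=z)\bigr].
\]
I would then use the shifted-lattice difference bound of Lemma~\ref{lem:uniform-lattice} in the full triple array, whose anchors are $(1,0,0)$, $(1,1,0)$ and $(1,0,1)$ and are unimodular. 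Applied pointwise, this gives $n^{-2}$ per point. Summing over the $O(n)$ relevant $(x,z)$ values in the bounded standardized window, with Gaussian tails outside it, gives $O(n^{-1})$. The tilt envelope again supplies $\exp\{-c_0n(\alpha_i-\alpha_j)^2\}$.

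The remaining care is uniformity. The deletion of $k$ and $l$ and the bounded shifts belong to the finite family $\omega$. Compact predictors keep every anchor probability at least some $\epsilon>0$. Taking the maximum over the finitely many pairs of states $b,b'$ therefore completes the bound.
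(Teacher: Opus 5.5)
Your proposal is correct in outline, but it handles the slice sum by a heavier route than the paper does.

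\textbf{The paper's route.} After focal deletion, the exact logistic midpoint identity factors the law of the three block differences as $C_{ij}e^{\eta(x+z+o)}q_G(x)q_H(z)q_O(o)$. Here the $q_A$ are independent symmetric block convolutions and $C_{ij}\le Ce^{-c_0n(\alpha_i-\alpha_j)^2}$. On the slice $x+z+o=-\delta_D$ the tilt is constant. A degree offset therefore costs $\|\Delta q_O\|_\infty\|q_G\|_1\|q_H\|_1=O(n^{-1})$, and a sign offset costs $O(\|q_G\|_\infty\|q_O\|_\infty)=O(n^{-1})$. The change in the exponential prefactor is absorbed by $|h|n^{-1/2}e^{-cnh^2}\le Cn^{-1}e^{-c'nh^2}$. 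Only one-dimensional sup and $\ell^1$ bounds enter, and no summation window is needed.

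\textbf{Your route and what needs tightening.} You treat $(\Delta_D,\Delta_X,\Delta_Z)$ as a black-box unimodular $\mathbb Z^3$ sum. That has two consequences.
\begin{enumerate}
\item It requires the three-dimensional extension of Lemma~\ref{lem:uniform-lattice}, which is stated only for $d\in\{1,2\}$.
\item The summation needs more than you state. A sup bound of $n^{-2}$ on a fixed-radius window, plus point-mass tails, leaves slice mass of order $n^{-1/2}e^{-cM^2}$ outside radius $M\sqrt n$. You would then lose a $\log n$ factor. Your \emph{Gaussian tails} must be a Gaussian envelope on the first differences themselves, obtained by also tilting in the $(x,z)$ directions.
\end{enumerate}
Both issues disappear once you observe that $(\Delta_D,\Delta_X,\Delta_Z)=(S_G+S_H+S_O,S_G,S_H)$ with independent blocks. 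Your term of the first kind then reduces to the paper's bound.

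\textbf{What each approach buys.} Your generic route does not depend on an exact product factorization. The paper needs a separate argument for bounded local $c$, which matters because the lemma is also used at $c=6$. There it normalizes the $O(n^{-1/2})$ rank-one tilts over the $(G,H,O)\times\{+,-\}$ blocks. Your tilted local-limit argument covers that case once you note that the rank-one terms shift the degree-difference mean by at most $O(\sqrt n)$, so the tilt and envelope hypotheses persist. You should state that case explicitly.
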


\begin{proof}
After deleting the focal nodes, decompose the row difference into the three
disjoint blocks.  At the null, the exact logistic midpoint identity gives
\[
 \Pp(S_G=x,S_H=z,S_O=o)
 =C_{ij}e^{\eta(x+z+o)}q_G(x)q_H(z)q_O(o),
\]
where $\eta=(\alpha_i-\alpha_j)/2$, the $q_A$ are symmetric ternary
convolutions, and
$C_{ij}\le C e^{-c_0n(\alpha_i-\alpha_j)^2}$.  Uniform Fourier bounds give
$\|q_A\|_\infty=O(n^{-1/2})$ and
$\|\Delta q_A\|_\infty=O(n^{-1})$.  Changing a cross-edge state changes only
bounded degree and sign offsets, so telescoping over those offsets proves the
claim at $c=0$.

For bounded local $c$, factor each alter increment with its own logistic
midpoint tilt.  Normalize the resulting $O(n^{-1/2})$ tilts separately over
the nonempty $(G,H,O)\times\{+,-\}$ blocks.  The cellwise $20/20$ loading
balance keeps the affinity cross term bounded, while every observed group
contains order-$n$ uniformly interior span-one increments.  The same Fourier
and telescoping bounds therefore hold for the normalized multitype
convolutions, with the same Gaussian activity-gap envelope. Returning to the
original law changes the exponential prefactor by at most
$C\{|\alpha_i-\alpha_j|+n^{-1/2}\}$. Multiplying by the normalized
$O(n^{-1/2})$ mass and using
\[
|h|n^{-1/2}e^{-cnh^2}
\le Cn^{-1}e^{-c'nh^2}
\]
shows that the prefactor change has the same $n^{-1}$ order as the lattice
first difference. This completes the bounded-state comparison under the
original law.
\end{proof}

\begin{lemma}[Shared-row triple collision]
\label{lem:triple-collision}
For distinct $i,j,k$, after conditioning on their three internal dyads,
\[
 \begin{split}
 &\Pp_c(D_i-D_j=d_1,D_i-D_k=d_2)\\
 &\quad\le \frac Cn
 \exp[-c_0n\{(\alpha_i-\alpha_j)^2
                    +(\alpha_i-\alpha_k)^2\}].
 \end{split}
\]
\end{lemma}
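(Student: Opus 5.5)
The plan follows the template of Lemma~\ref{lem:collision-mass}, with the one-constraint array replaced by a two-dimensional one.

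First, delete the three focal nodes $i,j,k$ and condition on the internal dyads $A_{ij},A_{ik},A_{jk}$. These fix a bounded integer offset, so it suffices to bound
\[
\Pp_c\{(D_i^--D_j^-,\,D_i^--D_k^-)=z\}
\]
uniformly over $z\in\mathbb Z^2$ in a bounded shift of $(d_1,d_2)$. Here $D_r^-$ is the external row degree over the common alter pool. The three external rows are independent given the internal dyads. Hence the pair
\[
(D_i^--D_j^-,\,D_i^--D_k^-)=\sum_{\ell}\bigl(A_{i\ell}-A_{j\ell},\,A_{i\ell}-A_{k\ell}\bigr)
\]
is a sum over alters $\ell$ of independent bounded $\mathbb Z^2$-valued increments. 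The increments are independent across $\ell$ because the three dyads $A_{i\ell},A_{j\ell},A_{k\ell}$ are independent.

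Second, verify the hypotheses of Lemma~\ref{lem:uniform-lattice} with $d=2$. For every alter the increment takes the values $(1,1)$ (row $i$ only), $(-1,0)$ (row $j$ only), $(0,-1)$ (row $k$ only), and $(0,0)$ (no edges). Each has probability bounded below by compact predictors. So on two disjoint order-$n$ alter blocks we can use anchors $v_1=(1,0)$ (atoms $(0,-1)$ and $(1,-1)$, the latter from rows $i,k$) and $v_2=(0,1)$. These are the anchors $(1,0),(0,1)$ already announced for the shared-row triple, with determinant $1$. The lemma then gives $\sup_z\Pp=O(n^{-1})$ and covariance of order $nI_2$, uniformly in focal choices and $c$ in a compact set. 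The local $c/\sqrt n$ term only perturbs probabilities within a compact interior range, so anchors and Fourier bounds remain uniform.

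Third, obtain the Gaussian affinity envelope by exponential tilting, as in Lemma~\ref{lem:collision-mass}. Choose a two-dimensional tilt $\lambda=(\lambda_1,\lambda_2)$ that multiplies each alter's joint law by $e^{\lambda^\top x_\ell}$. At the null, take the midpoint-type choice that equalizes the effective activities of the three rows at a common value $\bar\alpha$. Row $i$ receives weight $e^{\lambda_1+\lambda_2}$, row $j$ receives $e^{-\lambda_1}$, and row $k$ receives $e^{-\lambda_2}$. Solving
\[
\alpha_i+\lambda_1+\lambda_2=\alpha_j-\lambda_1=\alpha_k-\lambda_2
\]
gives $\lambda=O(|\alpha_i-\alpha_j|+|\alpha_i-\alpha_k|)$. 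The change-of-measure factor is
\[
\exp\Bigl\{\sum_\ell\bigl[\log(1+e^{\eta_{i\ell}})+\log(1+e^{\eta_{j\ell}})+\log(1+e^{\eta_{k\ell}})-3\log(1+e^{\bar\eta_\ell})\bigr]-\lambda^\top z\Bigr\}.
\]
Strong convexity of $u\mapsto\log(1+e^u)$ on the compact predictor range gives the Jensen gap as at most
\[
-c_0n\{(\alpha_i-\alpha_j)^2+(\alpha_i-\alpha_k)^2\}+C.
\]
The $\lambda^\top z$ term is bounded because $z$ is bounded, and the tilt norm is $O(Q^{1/2})$ with $Q$ the displayed quadratic. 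The tilted extension of Lemma~\ref{lem:uniform-lattice} then gives
\[
\Pp\le Cn^{-1}e^{-c'nQ}.
\]
Summing over the finitely many internal dyad states finishes the proof.

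The main obstacle I expect is the tilt step for $c\neq0$. There the rows are not exactly logistic in a common alter factor, so the midpoint identity is only approximate. I would handle this as in Lemma~\ref{lem:state-range}. Absorb the $c x_rx_\ell/\sqrt n$ terms into alter-specific $O(n^{-1/2})$ perturbations. Use the bounded loadings to show that the extra cross term in the Jensen gap is $O(\sqrt n\,Q^{1/2})+O(1)$. This is dominated by $c_0nQ/2$ plus a constant, after adjusting $c_0$.
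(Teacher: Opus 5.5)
Your proposal is correct and follows the paper's proof essentially step for step. You tilt the three external rows to their mean focal activity, take the Gaussian penalty from strong convexity of $\log(1+e^u)$, get the $C/n$ point mass from Lemma~\ref{lem:uniform-lattice} through the unimodular directions $(1,0),(0,1)$, and treat the internal dyads as bounded target offsets. Your absorption of the $O(\sqrt n\,Q^{1/2})$ cross term at $c\ne0$ into $\tfrac{c_0}{2}nQ+C$ is a slightly more explicit version of the paper's ``bounded local-Gram term.''

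Two slips need fixing, neither affecting the argument:
\begin{itemize}
\item $(1,-1)$ is not a support atom, because rows $i,k$ give $(1,0)$. Use, for example, the atoms $(-1,0),(0,0)$ for $v_1$ and $(0,-1),(0,0)$ for $v_2$, both taken from your own list.
\item Your displayed exponent has the wrong sign. The correct factor is $e^{K(\lambda)-\lambda^\top z}$ with $K(\lambda)=\sum_\ell\{3\log(1+e^{\bar\eta_\ell})-\sum_{r\in\{i,j,k\}}\log(1+e^{\eta_{r\ell}})\}$, and it is this quantity that strong convexity bounds by $-c_0nQ$ at $c=0$.
\end{itemize}
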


\begin{proof}
Tilt the three external rows to their mean focal activity.  Strong convexity
on the compact predictor set supplies the displayed Gaussian penalty, up to a
bounded local-Gram term. The support directions $(1,0)$ and $(0,1)$ verify the
two-dimensional unimodular condition, so Lemma~\ref{lem:uniform-lattice}
gives a uniform $C/n$ point-mass bound. Mixing over the eight internal edge
states changes only bounded target offsets.
\end{proof}

\begin{lemma}[Network-center variance]
\label{lem:network-variance}
At $c=0$ and at the declared path $c=6$,
\[
 \Var_c(K_n^{\BO})=O(n),
 \qquad
 \Var_c(M_n^{\BO})=O(n).
\]
\end{lemma}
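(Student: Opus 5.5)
I will expand both variances as sums of pair covariances and sort the pairs of frame pairs by how many focal nodes they share. Write $K_n^{\BO}=\sum_{e}A_e^{\BO}h_e$, so that
\[
\Var_c(K_n^{\BO})=\sum_{e,f}A_e^{\BO}A_f^{\BO}\Cov_c(h_e,h_f),
\]
and the same formula holds for $M_n^{\BO}$ with $h_e$ replaced by the collision indicator $\ind\{D_i=D_j\}$. There are three classes. In the diagonal class $e=f$, the covariance is at most the collision probability. In the shared-row class $|e\cap f|=1$, say $e=\{i,j\}$ and $f=\{i,k\}$, I bound $|\Cov|\le\Pp_c(D_i=D_j,D_i=D_k)+\Pp_c(D_i=D_j)\Pp_c(D_i=D_k)$. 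In the disjoint class $e\cap f=\emptyset$, the covariance is carried only by the four cross dyads and by the common alters.

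For the diagonal and shared-row classes I will use the local-mass bounds already proved. By Lemma~\ref{lem:collision-mass}, the diagonal sum is at most $\sum_{i,j}Cn^{-1/2}e^{-c_0n(\alpha_i-\alpha_j)^2}=O(n^{-1/2}\cdot n^{3/2})=O(n)$. For the shared-row class, Lemma~\ref{lem:triple-collision} gives a triple-collision probability of order $n^{-1}e^{-c_0n\{(\alpha_i-\alpha_j)^2+(\alpha_i-\alpha_k)^2\}}$. The product of the two marginal collision masses has the same order. Summing over $j$ and $k$ with the packing bound $\sup_i\sum_j e^{-cn(\alpha_i-\alpha_j)^2}=O(\sqrt n)$ gives $n^{-1}\cdot O(\sqrt n)^2=O(1)$ for each $i$, hence $O(n)$ in total. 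All of these bounds hold uniformly at $c=0$ and at $c=6$, because the lemmas were stated for $c$ in a compact set.

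The disjoint class is the main obstacle, since it contains $n^4$ terms. I first condition on the four cross dyads $b$ between $e=\{i,j\}$ and $f=\{k,l\}$. Given $b$, the external rows of the two pairs still share all alters, so they are not conditionally independent. I will handle this by a two-step decomposition.

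\begin{enumerate}
\item By the law of total covariance, $\Cov(h_e,h_f)=\E\,\Cov(h_e,h_f\mid b)+\Cov\{m_e(b),m_f(b)\}$. Lemma~\ref{lem:state-range} bounds the oscillation of each $m(b)$ by $Cn^{-1}e^{-c_0n(\alpha_i-\alpha_j)^2}$. The second term is therefore at most the product of two such envelopes. Summing over all pairs of pairs gives $n^{-2}\{O(n^{3/2})\}^2=O(n)$.
\item For the conditional covariance given $b$, I would reveal each shared alter $m$'s four edges to $\{i,j,k,l\}$ one at a time. Each revealed alter perturbs the row differences by bounded offsets. The first lattice-difference bound in Lemma~\ref{lem:uniform-lattice} makes the effect of one alter on each kernel of order $n^{-1}$ times the Gaussian envelope. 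Only the part correlated across the two pairs survives, which yields a covariance of order $n\cdot n^{-1}\cdot n^{-1}$ times the envelopes. At $c=0$, I expect the logistic midpoint factorization used in Lemma~\ref{lem:state-range} to make these influences centered, leaving $O(n^{-2})$ times the envelopes after summing over alters. This again sums to $O(n)$.
\end{enumerate}

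If the alter-by-alter argument proves delicate, the fallback is an Efron--Stein/Hoeffding projection of $K_n^{\BO}$ onto single dyads. Each dyad $\{m,i\}$ influences only the kernels through rows $i$ and $m$. Its influence is $O(n^{-1/2})$ times the number of collision partners, and its square summed over the $n^2$ dyads again gives $O(n)$. This route uses only the point-mass and first-difference bounds of Lemma~\ref{lem:uniform-lattice}. At $c=6$ it needs the block tilt normalization and the cellwise loading balance from the proof of Lemma~\ref{lem:state-range}, so that the Gaussian activity-gap envelopes still apply.

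The $M_n^{\BO}$ case follows the same steps with $h_e$ replaced by the collision indicator, and every bound above applies verbatim.
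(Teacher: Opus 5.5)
\textbf{There is a gap in the disjoint-pair class (step~2), and the Efron--Stein fallback does not close it.} Your overlap decomposition is the paper's argument: the diagonal bound, the shared-row bound through Lemma~\ref{lem:triple-collision} with the packing sum, and step~1 for disjoint pairs all match.

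Step~2 rests on a false premise. Under independent dyads (at $c=0$ and at $c=6$ alike), $h_{ij}$ depends only on the dyads incident to $i$ or $j$, and $h_{kl}$ only on those incident to $k$ or $l$. For a common alter $m$, the coordinates $A_{im},A_{jm}$ and $A_{km},A_{lm}$ are four distinct independent Bernoulli variables. Sharing alters therefore induces no dependence. The only dyads in both sets are the four cross dyads $b$.

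Hence $h_e$ and $h_f$ are conditionally independent given $b$, and $\Cov_c(h_e,h_f\mid b)=0$ identically. The law of total covariance leaves only your step~1 term $\Cov\{m_e(b),m_f(b)\}$. The range-covariance inequality and Lemma~\ref{lem:state-range} bound it by $Cn^{-2}$ times the two Gaussian envelopes, and that is how the paper finishes.

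As you wrote it, step~2 does not close. The order you obtain, $n\cdot n^{-1}\cdot n^{-1}=n^{-1}$ times the envelopes, sums over disjoint pairs to $n^{-1}\{O(n^{3/2})\}^2=O(n^2)$. The improvement to $n^{-2}$ is only anticipated at $c=0$, and nothing is offered at $c=6$. Your phrase ``the part correlated across the two pairs'' refers to something that is identically zero, since each alter's four edges to $\{i,j,k,l\}$ are independent.

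The Efron--Stein fallback fails for a separate reason. Toggling one dyad $A_{mi}$ moves $D_i$ by one. This changes $h_{ij}$ for every opposite-role $j$ whose $D_j$ equals the old or the new value of $D_i$. For $i$ in the overlap region the expected number of such $j$ is of order one. So the pointwise influence entering Efron--Stein has second moment of order one, and summing over $\Theta(n^2)$ dyads gives only an $O(n^2)$ bound.

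The $O(n^{-1/2})$ influence you quote is the size of the averaged single-dyad Hoeffding projection: a lattice first difference $O(n^{-1})$ summed over $O(\sqrt n)$ partners. That projection controls only the first-order part of the ANOVA variance and says nothing about the interaction terms.

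Replace step~2 and the fallback by the conditional-independence observation. Your proof is then complete and coincides with the paper's, and the same applies verbatim to $M_n^{\BO}$.
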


\begin{proof}
Decompose pair covariances by endpoint overlap.  Same-pair terms sum to at
most $\E S_{2,n}^{\BO}=O(n)$.  For pairs sharing one respondent, the product
of absolute marks is bounded by the triple degree-collision indicator;
Lemma~\ref{lem:triple-collision} and the packing bound give
\[
n^{-1}\sum_i
\left\{\sum_j e^{-cn(\alpha_i-\alpha_j)^2}\right\}^2
=O(n).
\]
For four distinct respondents, condition on the four cross
dyads.  The two residual kernels are then independent, and the
range-covariance inequality with Lemma~\ref{lem:state-range} bounds one
covariance by
\[
 Cn^{-2}\exp[-cn\{(\alpha_i-\alpha_j)^2
                         +(\alpha_k-\alpha_l)^2\}].
\]
Since
\[
n^{-2}\left\{\sum_{i,j}
e^{-cn(\alpha_i-\alpha_j)^2}\right\}^2=O(n),
\]
all four-distinct terms also sum to $O(n)$. The match-indicator proof is
identical without sign marks.
\end{proof}

\begin{lemma}[Graph-independent respondent transfer]
\label{lem:respondent-transfer}
Under either declared respondent design, at $c=0$ and $c=6$,
\[
 \Var_c(K_{R,n}^{\BO})=O(n),
 \qquad
 S_{2,R,n}^{\BO}/n
 \text{ is bounded away from zero in probability}.
\]
Equal-probability sampling also preserves both ratios of expected pair sums.
\end{lemma}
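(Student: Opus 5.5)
We treat the two claims separately. Both rest on the fact that the respondent indicators $R$ are independent of $A$, so every network moment can be computed conditionally on $R$ and then averaged over the design.

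\emph{Variance.} Write $K_{R,n}^{\BO}=\sum_{i<j}W_{ij}h_{ij}$ and apply the law of total variance:
\[
\Var_c(K_{R,n}^{\BO})=\E\{\Var_c(K_{R,n}^{\BO}\mid R)\}+\Var\{\E_c(K_{R,n}^{\BO}\mid R)\}.
\]
For the first term, fix $R$. The pair weights $W_{ij}\in\{0,1\}$ are then deterministic and only delete pairs. We rerun the overlap decomposition from Lemma~\ref{lem:network-variance} with every covariance bounded in absolute value. Same-pair terms are at most $\E S_{2,n}^{\BO}=O(n)$, by Lemma~\ref{lem:collision-mass}. Shared-row terms are controlled by Lemma~\ref{lem:triple-collision} together with the packing bound. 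Four-distinct terms are controlled by Lemma~\ref{lem:state-range} and the range-covariance inequality.

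Because the bounds are on absolute covariances, deleting pairs cannot increase them. Hence $\Var_c(K_{R,n}^{\BO}\mid R)\le Cn$ uniformly in $R$. This is the main point to check: the proof of Lemma~\ref{lem:network-variance} must bound absolute covariances, not rely on cancellation.

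For the second term, put $m_{ij}=\E_c h_{ij}$. Then $\E_c(K_{R,n}^{\BO}\mid R)=\sum_{i<j}A^{\BO}_{ij}R_iR_jm_{ij}$, which is a degree-two polynomial in the design indicators. By Lemma~\ref{lem:collision-mass}, $|m_{ij}|\le Cn^{-1/2}e^{-c_0n(\alpha_i-\alpha_j)^2}$.

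Under Bernoulli sampling, the Efron--Stein (Hoeffding) decomposition bounds the variance by a constant times
\[
\sum_i\Bigl(\sum_j|m_{ij}|\Bigr)^2+\sum_{i,j}m_{ij}^2 .
\]
The packing bound gives $\sup_i\sum_j|m_{ij}|=O(1)$, so the first sum is $O(n)$. The second sum is $O(n\cdot n^{-1}\cdot\sqrt n)=O(\sqrt n)$.

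Under fixed-size simple random sampling the indicators are negatively dependent with pairwise covariances of order $n^{-1}$. The same Hoeffding-type bound for $U$-statistics under sampling without replacement applies. Alternatively, we couple to Bernoulli sampling: conditioning on the sample size changes the variance by at most the squared slope times $O(n)$, and the slope is $O(n^{-1/2})$ by the same bounds. Either route gives $O(n)$.

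\emph{Denominator.} By independence, $\E S_{2,R,n}^{\BO}=\sum A^{\BO}_{ij}\Pp(R_iR_j=1)\E_c h_{ij}^2$. Here $\Pp(R_iR_j=1)=\rho^2+o(1)$ uniformly over pairs, for both designs. Combining this with $\E_c S_{2,n}^{\BO}\ge\E_cM_n^{\BO}-O(\sqrt n)\asymp n$ from Lemma~\ref{lem:collision-mass} gives $\E S_{2,R,n}^{\BO}\ge c\rho^2 n$.

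The same variance argument, applied with the nonnegative kernel $h_{ij}^2$ (the match-indicator version of Lemma~\ref{lem:network-variance} with tie defects), gives $\Var(S_{2,R,n}^{\BO})=O(n)$. Chebyshev's inequality then yields $S_{2,R,n}^{\BO}/n\ge c\rho^2/2$ with probability tending to one.

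\emph{Ratios of expected pair sums.} Under either equal-probability design, $\Pp(R_iR_j=1)=p_n$ is the same for all pairs $i\ne j$. It equals $\rho^2$ exactly under Bernoulli sampling and $s(s-1)/\{n(n-1)\}$ under SRS of size $s$. Thus $\E K_{R,n}^{\BO}=p_n\E_cK_n^{\BO}$, and the same holds for $M$ and $S_2$. The common factor $p_n$ cancels in every ratio, so the ratios are preserved exactly.
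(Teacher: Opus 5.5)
Your proposal is correct and is essentially the paper's argument. Your law of total variance is the summed form of the paper's identity $\Cov(I_eH_e,I_fH_f)=\E(I_eI_f)\Cov(H_e,H_f)+\Cov(I_e,I_f)\E H_e\E H_f$ with $I_e=R_iR_j$. The conditional part reuses the absolute covariance classes of Lemma~\ref{lem:network-variance}, exactly as the paper does, so the point you flag does hold. The design part reduces to the shared-row sum $\sum_i(\sum_j|\E h_{ij}|)^2=O(n)$ under Bernoulli sampling, and to $O(n^{-1})$ disjoint-pair inclusion covariances under SRS. The ratio claim is the same common-inclusion-factor cancellation.

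The differences are minor:
\begin{enumerate}
\item For SRS, the clean closure is the paper's direct bound $|\lambda_{4,n}-\lambda_{2,n}^2|\{\sum_{i<j}|\E h_{ij}|\}^2=O(n^{-1})\cdot O(n^2)$. Your coupling alternative is loose in two ways. With $N$ the sample size, the slope of $s\mapsto\E(K_{R,n}^{\BO}\mid N=s)$ is of order one, not $O(n^{-1/2})$. And an average over $N$ does not by itself bound the variance at a fixed size.
\item For the denominator, the paper applies Chebyshev to the collision count and Markov to the nonnegative $O(\sqrt n)$ tie defect. This avoids needing a separate variance bound for the $h^2$ kernel.
\end{enumerate}
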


\begin{proof}
For pair $e=\{i,j\}$ write $I_e=R_iR_j$.  Graph independence gives the exact
identity
\[
 \Cov(I_eH_e,I_fH_f)
 =\E(I_eI_f)\Cov(H_e,H_f)
  +\Cov(I_e,I_f)\E H_e\E H_f.
\]
Bounded pair-inclusion probabilities preserve the three covariance classes
in Lemma~\ref{lem:network-variance}.  Under Bernoulli sampling, the additional
disjoint term is zero and the shared term is bounded by
\[
 \sum_i\left\{\sum_j\Pp(D_i=D_j)\right\}^2=O(n).
\]
Under simple random sampling, the $s$-node inclusion probability is
$\lambda_{s,n}=(r_n)_s/(n)_s$ and
$|\lambda_{4,n}-\lambda_{2,n}^2|=O(n^{-1})$.  Hence the added disjoint term is
\[
 O(n^{-1})\left\{\sum_{i<j}|\E h_{ij}|\right\}^2=O(n).
\]
Applying the same calculation to collision indicators and combining
Lemma~\ref{lem:collision-mass} with the $O(\sqrt n)$ tie defect gives the
denominator statement.  Finally, every eligible pair receives the same
inclusion factor, so it cancels from ratios of expected sums.
\end{proof}

\begin{lemma}[Conditional thinning normal approximation]
\label{lem:mask-clt}
Conditionally on the respondent frame and network, on any event where
$S_{2,R,n}^{\BO}$ is order $n$,
\[
 \frac{\sum_{i<j}(J_{ij}-\pi_n)W_{ij}h_{ij}}
 {\{\pi_n(1-\pi_n)S_{2,R,n}^{\BO}\}^{1/2}}
 \Rightarrow N(0,1).
\]
More precisely, its conditional Berry--Esseen error is bounded by a constant
times
\[
 \frac{(1-\pi_n)^2+\pi_n^2}
 {\{\pi_n(1-\pi_n)S_{2,R,n}^{\BO}\}^{1/2}}.
\]
\end{lemma}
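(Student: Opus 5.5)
Conditionally on the network and respondent frame, the only randomness in the numerator is the analyst mask. The pair weights $W_{ij}h_{ij}$ are therefore fixed numbers in $\{-1,0,1\}$, and the centered sum $\sum_{i<j}(J_{ij}-\pi_n)W_{ij}h_{ij}$ is a sum of independent, mean-zero, bounded summands. Its conditional variance is exactly
\[
\pi_n(1-\pi_n)\sum_{i<j}W_{ij}h_{ij}^2=\pi_n(1-\pi_n)S_{2,R,n}^{\BO},
\]
because $h_{ij}^2=|h_{ij}|$ and $W_{ij}^2=W_{ij}$. The plan is to apply the classical Berry--Esseen inequality for independent, nonidentically distributed summands (Lyapunov form, e.g.\ with constant $C\le0.56$). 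Write $\xi_{ij}=(J_{ij}-\pi_n)W_{ij}h_{ij}$ and $s_n^2=\pi_n(1-\pi_n)S_{2,R,n}^{\BO}$. Then
\[
\sup_x\Bigl|\Pp\Bigl\{\textstyle\sum\xi_{ij}\le x s_n\,\Bigm|\,A,S\Bigr\}-\Phi(x)\Bigr|\le C\,\frac{\sum_{i<j}\E|\xi_{ij}|^3}{s_n^3}.
\]

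The key computation is the third absolute moment of a centered Bernoulli. For a pair with $W_{ij}h_{ij}^2=1$,
\[
\E|J-\pi_n|^3=\pi_n(1-\pi_n)^3+(1-\pi_n)\pi_n^3=\pi_n(1-\pi_n)\{(1-\pi_n)^2+\pi_n^2\}.
\]
Pairs with $W_{ij}h_{ij}=0$ contribute nothing. Summing over pairs gives
\[
\sum_{i<j}\E|\xi_{ij}|^3=\pi_n(1-\pi_n)\{(1-\pi_n)^2+\pi_n^2\}S_{2,R,n}^{\BO}=s_n^2\{(1-\pi_n)^2+\pi_n^2\}.
\]
Dividing by $s_n^3$ yields exactly the displayed bound $\{(1-\pi_n)^2+\pi_n^2\}/s_n$.

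On the event where $S_{2,R,n}^{\BO}\ge cn$, this bound is at most $C(n\pi_n)^{-1/2}$, since $(1-\pi_n)^2+\pi_n^2\le1$ and $1-\pi_n\ge1/2$ eventually. For $\pi_n=n^{-\beta}$ with $\beta<1$ this tends to zero, which gives the weak convergence to $N(0,1)$. I will also record the explicit rate $O(n^{-(1-\beta)/2})$, since it is the term reused in the calibration envelope of Theorem~\ref{thm:thinning-null}.

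The only delicate point is the measurability and conditioning bookkeeping. The mask must be drawn independently of $(A,S)$ given the eligible pair set, so that conditioning on the network and frame leaves the $J_{ij}$ i.i.d.\ Bernoulli$(\pi_n)$. We also need the convention that the statement is restricted to the event $\{S_{2,R,n}^{\BO}\asymp n\}$, where the denominator is positive. Lemma~\ref{lem:respondent-transfer} shows this event has probability tending to one, but that fact is needed only later, when unconditioning in the proof of Theorem~\ref{thm:thinning-null}. I do not expect a genuine obstacle here: the lemma is a direct specialization of Berry--Esseen with an exact moment identity.
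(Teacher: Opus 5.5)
Your proposal is correct and is essentially the paper's own argument. Conditionally on the network and frame, the summands $(J_{ij}-\pi_n)W_{ij}h_{ij}$ are independent, centered and bounded with total variance $\pi_n(1-\pi_n)S_{2,R,n}^{\BO}$, and substituting the exact identity $\E|J-\pi_n|^3=\pi_n(1-\pi_n)\{(1-\pi_n)^2+\pi_n^2\}$ into the Lyapunov-form Berry--Esseen ratio gives the displayed bound, which vanishes on the order-$n$ event whenever $n\pi_n\to\infty$ with $\pi_n$ bounded away from one.
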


\begin{proof}
The conditional summands are independent, centered, and bounded.  For one
nonzero mark,
\[
 \E|J-\pi|^3
 =\pi(1-\pi)\{(1-\pi)^2+\pi^2\}.
\]
Substitution into the Berry--Esseen ratio proves the bound.  It vanishes when
$\pi_n\to0$, $n\pi_n\to\infty$, and $S_{2,R,n}^{\BO}$ is order $n$.
\end{proof}

\begin{proof}[Proof of Theorem~\ref{thm:thinning-null}]
Write $S_2=S_{2,R,n}^{\BO}$ and $K_R=K_{R,n}^{\BO}$.  On
$\mathcal A_n=\{S_2\ge c n\}$, decompose
\[
 Z_n=W_n+B_n+C_n,
\]
where
\[
 W_n=\frac{\sum(J_{ij}-\pi_n)W_{ij}h_{ij}}
 {\{\pi_n(1-\pi_n)S_2\}^{1/2}},
\]
\[
 B_n=\sqrt{\frac{\pi_n}{1-\pi_n}}
 \frac{K_R-\E_0K_R}{\sqrt{S_2}},
 \qquad
 C_n=\sqrt{\frac{\pi_n}{1-\pi_n}}
 \frac{\E_0K_R}{\sqrt{S_2}}.
\]
Theorem~\ref{thm:finite-sign} gives $C_n\le0$.
Lemma~\ref{lem:respondent-transfer} gives
\[
 \E\{|B_n|\ind(\mathcal A_n)\}
 \le C\sqrt{\frac{\pi_n}{1-\pi_n}}.
\]
Combining Gaussian cdf Lipschitz continuity with
Lemma~\ref{lem:mask-clt} therefore yields
\[
 \begin{split}
 \Pp_0\{Z_n>z_{1-\alpha}\}
 \le{}&\alpha+C\sqrt{\frac{\pi_n}{1-\pi_n}}\\
 &+C\{n\pi_n(1-\pi_n)\}^{-1/2}
 +\Pp_0(\mathcal A_n^c).
 \end{split}
\]
Let $C_{R,n}$ be the collision count on the same frame and
$T_{R,n}=C_{R,n}-S_2$.  Lemmas~\ref{lem:collision-mass}--\ref{lem:respondent-transfer}
give constants $c_0,C_1,C_2>0$ with
\[
 \E C_{R,n}\ge c_0n,
 \quad \Var(C_{R,n})\le C_1n,
 \quad \E T_{R,n}\le C_2\sqrt n.
\]
The event inclusion
\[
 \{S_2<c_0n/4\}
 \subseteq
 \{C_{R,n}<c_0n/2\}\cup\{T_{R,n}>c_0n/4\}
\]
and Chebyshev plus Markov inequalities give
$\Pp_0(\mathcal A_n^c)=O(n^{-1/2})$.  For
$\pi_n=n^{-\beta}$ this proves the displayed envelope in the theorem.  In
particular, $\beta=1/2$ gives the conservative limit and balances the two
mask-generated rates.
\end{proof}

\section{Full-Pair Null Inference}
\label{app:full-pair-proof}

The proof of Theorem~\ref{thm:full-pair-null} establishes the group-specific
fixed-cell and sampled-nuisance conditions, the variance scale, the observable
denominator, and the centered normal limit.

\subsection{Fixed-cell intensities and the null row kernel}

For $g\in\{G,H,O\}$, let
\[
N_{g,n}(J)=\#\{i\in g:\alpha_i\in J\}.
\]
Recall the intensity $\lambda_g$ from
Assumption~\ref{ass:full-pair-profile}.

\begin{lemma}[Fixed-cell kernel and row projection]
\label{lem:full-pair-kernel}
For each fixed regular full-pair profile, uniformly over node labels,
\[
\sup_a\left|
N_{g,n}((-\infty,a])
-n\int_{-\infty}^a\lambda_g(t)\,dt
\right|=O(1),
\qquad
N_{g,n}(J)\le C\{1+n|J|\}.
\]
Put
\[
q_g(a)=\int\Lambda'(\mu+a+t)\lambda_g(t)\,dt,\qquad
q_+(a)=q_G(a)+q_H(a)+q_O(a).
\]
The functions $q_g$ are piecewise Lipschitz and uniformly bounded above and
away from zero on their relevant compact supports. At every common-support
continuity point, the degree-conditioned covariance of the $G/H$ count
differences is
\[
\Gamma(a)
=2\left[
\operatorname{diag}\{q_G(a),q_H(a)\}
-\frac{(q_G(a),q_H(a))^\top(q_G(a),q_H(a))}{q_+(a)}
\right],
\]
and is nonsingular. For a natural pair with
$t=\sqrt n(\alpha_j-\alpha_i)$,
\[
\sqrt n\,\E_0h_{ij}\longrightarrow K_0(a,t)
\]
uniformly on bounded $t$, where
\[
K_0(a,t)
=\frac{2\arcsin\{\rho_{GH}(a)\}/\pi}
{\sqrt{4\pi q_+(a)}}e^{-q_+(a)t^2/4},
\]
\[
\rho_{GH}(a)
=-
\sqrt{\frac{q_G(a)q_H(a)}
{\{q_H(a)+q_O(a)\}\{q_G(a)+q_O(a)\}}}.
\]
Globally,
\[
|\E_0h_{ij}|
\le Cn^{-1/2}e^{-cn(\alpha_i-\alpha_j)^2},
\qquad
\sum_j(\E_0h_{ij})^2\le Cn^{-1/2}.
\]
If $b_i=\sum_jA_{ij}^{\BO}\E_0h_{ij}$, then there are bounded functions
$\beta_B,\beta_O$, Lipschitz between the finitely many support endpoints, such
that
\[
b_i=\beta_{r(i)}(\alpha_i)+r_{i,n},\qquad
\frac1n\sum_i r_{i,n}^2\longrightarrow0.
\]
\end{lemma}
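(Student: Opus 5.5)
The plan proceeds in four stages, following the order of the displays in Lemma~\ref{lem:full-pair-kernel}.

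\emph{Counting bounds.} Each activity cell contains exactly $c_g$ nodes of group $g$, and these nodes sit at $m_g(u_{k,n})$ with $u_{k,n}$ the midpoint grid. Hence $N_{g,n}((-\infty,a])=c_g\,\#\{k:u_{k,n}\le m_g^{-1}(a)\}$. This equals $c_g m_n m_g^{-1}(a)+O(1)$, which is $n\int_{-\infty}^a\lambda_g+O(1)$, because $\lambda_g$ is $\pi_g$ times the density of $m_g^{-1}$. The packing bound $N_{g,n}(J)\le C(1+n|J|)$ then follows from $m_g'\ge c_m$.

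\emph{The functions $q_g$.} By definition $q_g(a)=\int\Lambda'(\mu+a+t)\lambda_g(t)\,dt$. Boundedness of $q_g$ follows from the compact predictor set, and positivity on the support follows from $\Lambda'$ being bounded below there. $\lambda_g$ is Lipschitz on $I_g$ (since $m_g'$ is Lipschitz and bounded below) and has jumps only at the finitely many endpoints of $I_g$. Convolving with the smooth $\Lambda'$ therefore makes each $q_g$ Lipschitz.

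\emph{Local limit for $\E_0h_{ij}$.} I will use the exact logistic midpoint factorization from the proof of Lemma~\ref{lem:state-range}. After deleting the focal pair and tilting by $\eta=(\alpha_i-\alpha_j)/2$, the row difference $(S_G,S_H,S_O)$ has law $C_{ij}e^{\eta(x+z+o)}q_G(x)q_H(z)q_O(o)$. Each factor is a symmetric ternary convolution. Summing the per-alter variances $2\Lambda'(\mu+\bar\alpha+\alpha_k)$ with the counting bounds gives variance $2nq_g(a)(1+o(1))$ for group $g$. Lemma~\ref{lem:uniform-lattice} and Lemma~\ref{lem:marked-conditional}, with $d=1$ for the degree constraint and the $G/H$ differences treated as marks, then give two conclusions.
\begin{itemize}
\item The collision mass is $(4\pi nq_+)^{-1/2}e^{-q_+t^2/4}(1+o(1))$. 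The Gaussian exponent comes from the tilt: the factor $C_{ij}$ is asymptotic to $\exp\{-nq_+(a)(\alpha_i-\alpha_j)^2/4\}$ by a second-order expansion of $\log(1+e^u)$.
\item Given equal degree, the standardized $(G,H)$ differences are asymptotically Gaussian with the Schur complement $\Gamma(a)$. A direct computation shows that its correlation is $\rho_{GH}(a)$.
\end{itemize}
Since $\E\,\sgn N_1\sgn N_2=\tfrac2\pi\arcsin\rho$ for standard bivariate normals, this yields $K_0$. Nonsingularity of $\Gamma(a)$ holds because $q_O>0$ at common-support points, which forces $|\rho_{GH}|<1$. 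The $O(1)$ shared-edge shifts only perturb the result at order $n^{-1}$, by the first-difference bound in Lemma~\ref{lem:uniform-lattice}.

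\emph{Global bounds and row projection.} The global bound $|\E_0h_{ij}|\le Cn^{-1/2}e^{-cn(\alpha_i-\alpha_j)^2}$ is immediate from Lemma~\ref{lem:collision-mass}. Squaring it and applying the packing bound gives $\sum_j(\E_0h_{ij})^2\le Cn^{-1}\cdot\sqrt n$. For the row projection, write $b_i$ as a sum over partner roles and apply the Riemann-sum replacement of Lemma~\ref{lem:marked-profile-summation}, in its fixed-cell form. This gives the limit
\[
\beta_B(a)=\int K_0(a,t)\,\lambda_O(a)\,dt
\]
for $B$ rows, and analogously for $O$ rows with $\lambda_G+\lambda_H$ in place of $\lambda_O$. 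These maps are Lipschitz between support endpoints.

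\emph{Main obstacle.} The hardest step is the residual statement $\frac1n\sum_i r_{i,n}^2\to0$. The local limit holds only at continuity points and only for bounded $t$. My approach is to bound each residual by combining the tail control $e^{-cnh^2}$ (truncate at $|t|\le M$) with the local limit on $|t|\le M$, which is uniform away from a $\delta$-neighbourhood of the finitely many endpoints. Rows within $\delta$ of an endpoint number $O(\delta n)$ and have $|r_{i,n}|=O(1)$. Letting $n\to\infty$, then $M\to\infty$, then $\delta\to0$ gives the claim.
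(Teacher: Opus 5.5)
Your proposal is correct and follows essentially the paper's route: midpoint counting, the midpoint (activity-equalizing) tilt, the one-constraint marked local limit of Lemmas~\ref{lem:uniform-lattice}--\ref{lem:marked-conditional} giving the Schur complement $\Gamma(a)$ and the $\arcsin$ kernel, Gaussian envelopes from the tilted collision mass, and a Riemann-sum row projection with endpoint neighbourhoods excised. The differences are minor. You excise fixed $\delta$-neighbourhoods and take an iterated limit, whereas the paper excises root-$n$ neighbourhoods containing $O(\sqrt n)$ nodes. You should also state the $O/G/H$ anchors $(1,0,0),(1,1,0),(1,0,1)$ when invoking Lemma~\ref{lem:marked-conditional}. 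Finally, because Lemma~\ref{lem:collision-mass} is stated for the baseline array, the global envelope is better derived directly from your own factorization, namely $C_{ij}\le e^{-cn(\alpha_i-\alpha_j)^2}$ times an $O(n^{-1/2})$ symmetric local mass.
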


\begin{proof}
At $a=m_g(u)$, midpoint counting differs from $m_nu$ by at most one, which
gives the counting discrepancy and the interval bound after using
$m_g'\ge c_m$. Midpoint quadrature gives, uniformly in $a$,
\[
\frac1n\sum_{\ell\in g}\Lambda'(\mu+a+\alpha_\ell)
=q_g(a)+O(n^{-1}).
\]
After the activity-equalizing tilt, the normalized covariance of the degree
constraint and the two marks converges to
\[
2\begin{pmatrix}
q_+&q_G&q_H\\q_G&q_G&0\\q_H&0&q_H
\end{pmatrix}.
\]
The $O$, $G$, and $H$ blocks supply the unimodular directions
$(1,0,0),(1,1,0),(1,0,1)$. Lemmas
\ref{lem:uniform-lattice}--\ref{lem:marked-conditional} therefore apply after
the bounded focal deletions, and conditioning on degree takes the displayed
Schur complement. The bivariate Gaussian sign identity
$\E\{\sgn N_G\sgn N_H\}=2\arcsin(\rho_{GH})/\pi$ gives $K_0$.
The tilted local-mass and first-difference bounds give the global Gaussian
envelopes.

For $r=B,O$, define
\[
\beta_B(a)=\lambda_O(a)\int_{\R}K_0(a,t)\,dt,\qquad
\beta_O(a)=\{\lambda_G(a)+\lambda_H(a)\}
\int_{\R}K_0(a,t)\,dt.
\]
Away from a root-$n$ neighborhood of the finitely many support endpoints,
summation by parts converts the row sum into this local kernel integral.
The exceptional neighborhoods contain $O(\sqrt n)$ nodes, and the global
envelope controls the tails. The resulting error is $o(n)$ in squared
empirical norm. Focal deletion changes only lower-order terms, so the leading
kernel is common to focal $G$ and $H$ labels and pooling them as $B$ is valid.
\end{proof}

\subsection{Sampled row smoothing and the four-distinct remainder}

Let
\[
b_i^S=\E_0(s_i^S\mid S)
\]
and let $\rho_n=\Pp(R_j=1\mid R_i=1)$ for distinct nodes under the declared
respondent design. Then
\[
b_i^S=R_i\{\rho_nb_i+d_i^S\},\qquad
d_i^S=\sum_j(R_j-\rho_n)A_{ij}^{\BO}\E_0h_{ij}.
\]

\begin{lemma}[Sampled nuisance controls]
\label{lem:full-pair-nuisance}
There are graph-independent respondent-frame events
$\mathcal G_n^{(1)}$, with probability tending to one, such that uniformly
over $S\in\mathcal G_n^{(1)}$,
\[
\frac1n\|\widehat b^S-b^S\|_2^2
\longrightarrow_{P_0(\cdot\mid S)}0.
\]
Moreover, if
\[
V_{4,n}^S
=2\sum_{\substack{e<f\\e\cap f=\varnothing}}
\Cov_0(h_e,h_f\mid S),
\]
where the sum is over $\mathcal E_n^S$, then
\[
V_{4,n}^S=O_{P_S}(\sqrt n)=o_{P_S}(n).
\]
\end{lemma}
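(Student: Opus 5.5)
The lemma has two parts, and I would handle them separately.

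\emph{Smoother consistency.} Decompose
\[
\widehat b_i^S-b_i^S=(\widehat b_i^S-\bar b_i^S)+(\bar b_i^S-b_i^S),
\qquad
\bar b_i^S=\frac{1}{|\mathcal C_i^S|-1}\sum_{j\in\mathcal C_i^S\setminus\{i\}}b_j^S .
\]
The first term is a block average of the centered row sums $s_j^S-b_j^S$. The second is the bias of the block average relative to the smooth row mean.

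For the bias, use $b_j^S=R_j\{\rho_nb_j+d_j^S\}$ and Lemma~\ref{lem:full-pair-kernel}, which gives $b_j=\beta_{r(j)}(\alpha_j)+r_{j,n}$ with $\beta_B,\beta_O$ Lipschitz between finitely many endpoints.
\begin{enumerate}[label=(\roman*)]
\item The random-frame fluctuation $d_j^S$ has $\E\sum_j(d_j^S)^2\le C\sum_j\sum_k(\E_0h_{jk})^2=O(\sqrt n)$ by the global bound $\sum_k(\E_0h_{jk})^2\le Cn^{-1/2}$. This holds under Bernoulli sampling and, with an $O(n^{-1})$ covariance correction, under SRS. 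Markov's inequality then defines part of $\mathcal G_n^{(1)}$.
\item The term $r_{j,n}$ is handled by Jensen: averaging over blocks does not increase the empirical $L^2$ norm, up to a bounded block-size ratio.
\item For the main term, blocks are formed by ranking the observed degree $D_j$ within a role. Since $\E_0D_j=n\,\phi_{g}(\alpha_j)$ with $\phi$ strictly increasing in activity and common across $G,H$, and since $D_j-\E D_j=O_P(\sqrt{n\log n})$ uniformly by Bernstein and a union bound, the degree order agrees with the activity order up to displacements of $O(\sqrt{n\log n})$ positions. The condition $\sqrt{n\log n}/\ell_n\to0$ then means each block covers an activity interval of length $O(\ell_n/n)+o$, except for a vanishing fraction of boundary nodes. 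Lipschitz continuity of $\beta$ makes the within-block oscillation $o(1)$, apart from the $O(1)$ blocks touching support endpoints, which contribute $O(\ell_n)=o(n)$ to the squared norm.
\end{enumerate}
The step I expect to be hardest is the rank-versus-activity coupling. Degree depends on the network, so the blocks are random and correlated with the $h$'s. I would handle this with a uniform statement over all candidate contiguous activity windows rather than conditioning on the realized blocks.

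For the noise term, the leave-one-out structure removes the self term. I would then bound
\[
\E\Bigl(\ell_n^{-1}\sum_{j\in\mathcal C}(s_j-b_j)\Bigr)^2\le \ell_n^{-2}\Bigl\{\sum_j\Var s_j+\sum_{j\ne k}\Cov(s_j,s_k)\Bigr\}.
\]
Here $\Var s_j=O(\sqrt n)$ by the triple-collision Lemma~\ref{lem:triple-collision}. For the cross covariances, the pair terms sharing an endpoint and the four-distinct terms are bounded by Lemmas~\ref{lem:triple-collision} and~\ref{lem:state-range}, each $O(n^{-1})$ per pair of rows after kernel summation. The block noise is therefore $O(\sqrt n/\ell_n+\ell_n\cdot\text{small})=o(1)$ per node. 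To handle the random blocks, take a supremum over the $O(n^2)$ contiguous windows in the deterministic activity order and transfer to the realized degree blocks through the coupling above. Using $\ell_n\asymp n^{2/3}$, the total contribution is $o(n)$.

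\emph{Four-distinct remainder.} Conditionally on $S$, the frame is fixed, so
\[
|V_{4,n}^S|\le 2\sum_{e\cap f=\varnothing}|\Cov_0(h_e,h_f)|.
\]
Condition on the four cross dyads and use the range-covariance inequality with Lemma~\ref{lem:state-range}. Each term is at most $Cn^{-2}e^{-cn\{(\alpha_i-\alpha_j)^2+(\alpha_k-\alpha_l)^2\}}$. Summing with the fixed-cell packing bound, as in Lemma~\ref{lem:network-variance}, gives $O(n)$, which is too crude.

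To reach $O(\sqrt n)$, I would refine the covariance using the cross-dyad structure. The covariance is nonzero only through the four cross dyads, each contributing a first-order influence of order $n^{-1/2}$ on each kernel mean via the lattice first difference. Two disjoint pairs interact only if some cross dyad links them, and each dyad has probability bounded away from zero. The covariance is then the sum over cross dyads of products of influence derivatives. Each influence is a product of an $n^{-1}$ first difference and a Gaussian factor, giving $n^{-2}$ per term. I would then average over the frame: under Bernoulli sampling the expected sum factorizes, while under SRS the $O(n^{-1})$ inclusion correction adds $O(n)\cdot n^{-1}$. If this refinement does not yield $O(\sqrt n)$, I would settle for $o(n)$ by showing the influences have mean-zero sign symmetry under the null midpoint identity, which gives cancellation. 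Only $o_{P_S}(n)$ is needed downstream, and that bound would define the remaining frame events in $\mathcal G_n^{(1)}$.
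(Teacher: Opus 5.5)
Your smoother argument follows the paper's structure: the bias/noise split, the $O(\sqrt n)$ mean-square bound for $d^S$, the $O(\sqrt{n\log n})$ rank displacement, Lipschitz oscillation of $\beta_B,\beta_O$, and $O(\ell_n)$ endpoint blocks. The gap is the second assertion, $V_{4,n}^S=O_{P_S}(\sqrt n)$, which your proposal does not prove.

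\textbf{The four-distinct remainder.} Your singleton ANOVA terms are of size $n^{-1}\cdot n^{-1}$ times two Gaussian gap factors. Summing them in absolute value over disjoint close pairs returns $O(n)$, the same order as the range-covariance bound, so the refinement as written gains nothing. Settling for $o(n)$ would prove a weaker statement than the lemma. Moreover, the ``mean-zero sign symmetry'' is never attached to a specific quantity, so even $o(n)$ is not established.

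The missing idea is to regroup the support-one terms by the physical cross dyad $\{u,v\}$.
\begin{itemize}
\item Let $r^{uv}_{uw}$ be the toggle influence of $A_{uv}$ on the pair $\{u,w\}$. All singleton terms routed through $\{u,v\}$ factor, as in (B.6), into $p_{uv}(1-p_{uv})R^S_{u\mid v}R^S_{v\mid u}$ minus a common-partner correction, where $R^S_{u\mid v}=R_u\sum_wR_wr^{uv}_{uw}$.
\item The cancellation lives in the deterministic partner sum $R_{u\mid v}=\sum_wr^{uv}_{uw}$. Reflection symmetry of the null midpoint factorization makes the leading part of the influence odd in the scaled activity gap. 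Its sum against a Lipschitz activity density is therefore $O(n^{-1})$ rather than $O(n^{-1/2})$, except within $n^{-1/2}$ of the finitely many support endpoints; this is (B.6c). It yields $\sum_{u<v}R_{u\mid v}^2=O(\sqrt n)$.
\item Averaging over the frame is not enough, because $V_{4,n}^S$ is a signed sum and the realized frame breaks the exact cancellation. The paper writes $R^S_{u\mid v}=R_u\{\rho_nR_{u\mid v}+\xi^S_{u\mid v}\}$. It bounds $\E_S\{R_u(\xi^S_{u\mid v})^2\}\le C\sum_w(r^{uv}_{uw})^2\le Cn^{-3/2}$ and then applies Cauchy--Schwarz.
\item Supports two through four are $O(1)$ by the $n^{-(r+1)/2}$ coefficient envelope.
\end{itemize}

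\textbf{Smaller repairs in the smoother half.}
\begin{itemize}
\item With only the second-moment bound $\Var_0(R_C^S\mid S)\le C|C|$ for deterministic windows, a union bound over windows is not enough. You need either a M\'oricz-type maximal inequality for partial sums, or the paper's pathwise partition comparison. The latter charges the $O(\sqrt{n\log n})$ rank-displaced nodes to the $\sqrt{n\log n}/\ell_n$ term.
\item The graph-independent good frames must also control the sampled activity-counting discrepancy $Q_n(S)=O(\sqrt{n\log n})$ and keep sampled role sizes of order $n$. Otherwise a block of $\ell_n$ sampled nodes need not span an activity interval of length $O(\ell_n/n)$.
\item Lemmas~\ref{lem:state-range} and~\ref{lem:triple-collision} are proved for the common-activity baseline array. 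Under group-specific profiles the paper obtains the four-distinct envelope (B.3) from Lemma~\ref{lem:full-pair-complete-response} instead.
\end{itemize}
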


\begin{proof}
The envelope in Lemma~\ref{lem:full-pair-kernel} and the finite-population
second-moment formula give, for Bernoulli and fixed-size-SRS respondents,
\[
\E_S\left\{\frac1n\sum_iR_i(d_i^S)^2\right\}=O(n^{-1/2}).
\]
Let $Q_n(S)$ be the largest rolewise discrepancy between the sampled activity
counting process and its respondent-share multiple. On events of respondent
probability tending to one,
\[
Q_n(S)=O\{\sqrt{n\log n}\},
\]
all sampled role sizes are order $n$, and the preceding average squared mask
error is $o(1)$. Compact logits make expected full degree strictly increasing
in activity with slope of order $n$. Simultaneous degree concentration and
the interval-counting bound therefore give observed-degree rank displacement
$O_{P_0(\cdot\mid S)}\{\sqrt{n\log n}\}$.

For completeness, the localized covariance bound follows from an absolute
same-pair, shared-row, and four-distinct decomposition. For a deterministic
set $C$ contained in one role, put
\[
R_C^S=\sum_{i\in C}(s_i^S-b_i^S).
\]
The one-collision envelope gives
\[
\E_0h_{ij}^2
\le Cn^{-1/2}\exp\{-cn(\alpha_i-\alpha_j)^2\}.
\tag{B.1}
\]
Because the Gaussian partner sum is $O(\sqrt n)$ uniformly in its anchor,
the same-pair contribution to $\Var_0(R_C^S\mid S)$ is $O(|C|)$. For a
shared row,
\[
\E_0|h_{ij}h_{ik}|
\le Cn^{-1}\exp\{-cn[(\alpha_i-\alpha_j)^2
                         +(\alpha_i-\alpha_k)^2]\}.
\tag{B.2}
\]
If the shared endpoint belongs to $C$, direct rowwise summation is
$O(|C|)$. If the two leaves belong to $C$, define
\[
K_C(a)=\sum_{i\in C}\exp\{-cn(\alpha_i-a)^2\}.
\]
The fixed-profile packing bound gives
\[
\sup_aK_C(a)\le C\min(|C|,\sqrt n),
\qquad
\sum_{k=1}^nK_C(\alpha_k)\le C|C|\sqrt n,
\]
and hence $n^{-1}\sum_kK_C(\alpha_k)^2\le C|C|$. The
product-of-means subtraction is no larger.

For four distinct respondents, write $e=\{u,v\}$, $f=\{w,x\}$ and condition
on the four cross dyads
\[
\mathcal Z_{ef}=\{A_{uw},A_{ux},A_{vw},A_{vx}\}.
\]
For $B=e,f$, let $F_B(\mathcal Z_{ef})$ be the conditional pair contribution.
Its product-Bernoulli ANOVA expansion is
\[
F_B-\E F_B
=\sum_{\varnothing\ne R\subseteq\mathcal Z_{ef}}
\widehat F_B(R)\prod_{a\in R}(A_a-p_a),
\]
and orthogonality gives
\[
\Cov(F_e,F_f)
=\sum_{\varnothing\ne R\subseteq\mathcal Z_{ef}}
\widehat F_e(R)\widehat F_f(R)
\prod_{a\in R}p_a(1-p_a).
\]
With $\Delta_R=\prod_{a\in R}\Delta_a$, the coefficient is exactly
\[
\widehat F_B(R)
=\E_{\mathcal Z_{ef}\setminus R}
\{\Delta_RF_B(\mathcal Z_{ef}\setminus R)\}.
\tag{B.3a}
\]
After the focal cross edges are deleted, $F_B$ is a bounded complete pair
factor with one degree-difference constraint. Restoring the $r=|R|$ selected
cross edges gives $r$ coherent bounded lattice shifts. Applying
Lemma~\ref{lem:full-pair-complete-response} below with $d=1$ and $t=r$, and
using the ordered product rule in (B.23)--(B.24), gives, for
$B=\{y_B,z_B\}$,
\[
|\widehat F_B(R)|
\le C_rn^{-(r+1)/2}
\exp\{-cn(\alpha_{y_B}-\alpha_{z_B})^2\}.
\tag{B.3}
\]
Writing
\[
G_C=\sum_{\{y,z\}:\{y,z\}\cap C\ne\varnothing}
\exp\{-cn(\alpha_y-\alpha_z)^2\},
\]
we have $G_C\le C|C|\sqrt n$. The support-$r$ aggregate is at most
\[
C_rn^{-(r+1)}G_C^2\le C_r|C|^2n^{-r}.
\tag{B.4}
\]
For $r=1$, $|C|^2/n\le|C|$; every higher support is smaller. Conditional on
a fixed respondent frame, the mask only replaces pair weights by zero. Thus
\[
\Var_0(R_C^S\mid S)\le C|C|.
\tag{B.5}
\]
A broad block is split into its disjoint $G$ and $H$ pieces and bounded by
\[
\Var(X_G+X_H)\le2\Var(X_G)+2\Var(X_H),
\]
so no cross-role cancellation is used. The leave-one-out identity,
Lemma~\ref{lem:full-pair-kernel}, and the pathwise partition inequality yield
\[
\frac1n\|\widehat b^S-b^S\|_2^2
=O_{P_0(\cdot\mid S)}\left[
\ell_n^{-1}
+\frac{\sqrt{n\log n}}{\ell_n}
+\left\{\frac{\ell_n+Q_n}{n}\right\}^2
+\frac{\ell_n}{n}+\varepsilon_n
\right],
\]
where $\varepsilon_n\to0$. The block conditions make every term vanish.

For the global four-distinct remainder, the support orders after pair deletion
are
\[
\begin{array}{c|c|c}
r&\text{coefficient envelope}&\text{aggregate contribution}\\ \hline
1&\text{exact identity below}&O_{P_S}(\sqrt n)\\
2&O(n^{-3/2})&O(1)\\
3&O(n^{-2})&O(n^{-1})\\
4&O(n^{-5/2})&O(n^{-2}).
\end{array}
\tag{B.6a}
\]
The support-one terms are the four singleton subsets of
$\mathcal Z_{ef}$. For a graph edge $\{u,v\}$, let $r_{uw}^{uv}$ be the
change in the $\{u,w\}$ pair contribution when $A_{uv}$ is toggled from zero
to one, extended by zero outside its natural partner domain, and put
$R_{u\mid v}^S=R_u\sum_wR_wr_{uw}^{uv}$. Summing singleton terms gives
\[
\begin{split}
V_{4,n}^{S,(1)}
=2\sum_{u<v}p_{uv}(1-p_{uv})\bigg[
&R_{u\mid v}^SR_{v\mid u}^S\\
&-R_uR_v\sum_wR_wr_{uw}^{uv}r_{vw}^{uv}
\bigg].
\end{split}
\tag{B.6}
\]
Put
\[
R_{u\mid v}=\sum_wr_{uw}^{uv},
\qquad
\xi_{u\mid v}^S=\sum_w(R_w-\rho_n)r_{uw}^{uv},
\]
so that $R_{u\mid v}^S=R_u\{\rho_nR_{u\mid v}+\xi_{u\mid v}^S\}$.
For Bernoulli sampling, use the independent-mask variance formula. For
fixed-size SRS, condition on $R_u=1$ and use the centered finite-population
variance formula for the remaining sample. In both cases,
\[
\E_S\{R_u(\xi_{u\mid v}^S)^2\}
\le C\sum_w(r_{uw}^{uv})^2
\le Cn^{-3/2}.
\tag{B.6b}
\]
The deterministic reflection calculation gives
\[
|R_{u\mid v}|
\le Cn^{-1}
+Cn^{-1/2}\sum_{a\in\mathcal B}
e^{-cn d(\alpha_u,a)^2},
\tag{B.6c}
\]
where $\mathcal B$ is the finite endpoint set. The profile packing bound,
(B.6b), and Markov's inequality imply
\[
\sum_{u<v}\{R_{u\mid v}^2+R_{v\mid u}^2\}=O(\sqrt n),
\qquad
\sum_{u<v}\{(\xi_{u\mid v}^S)^2
 +(\xi_{v\mid u}^S)^2\}=O_{P_S}(\sqrt n).
\tag{B.6d}
\]
Expanding the first product in (B.6) and applying Cauchy--Schwarz to the three
combinations of deterministic and random terms gives $O_{P_S}(\sqrt n)$.
The subtraction is $O(1)$ absolutely by the
three-gap majorant. For supports two through four there are respectively
$6$, $4$, and $1$ subsets of the four-edge cube. The finite support
multiplicities and the coefficient envelopes in (B.6a) make their label sums
absolutely summable at the displayed orders. This proves the second assertion
without assuming reflected-cell cancellation after respondent sampling.
\end{proof}

\subsection{Variance scale and observable denominator}

\begin{lemma}[Directed-row variance restoration]
\label{lem:full-pair-variance-restoration}
For each fixed regular full-pair profile, there are graph-independent
respondent-frame events of probability tending to one on which
\[
\Var_0(U_n^S\mid S)\ge cn.
\]
The constant may depend on the fixed profile through an interior
broad/outside overlap margin.
\end{lemma}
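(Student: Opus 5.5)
The plan is to reduce the conditional variance to a sum of row variances, discard the four-distinct part as negligible, and lower-bound each row variance through the conditional mean of the row sum given the focal respondent's own counts. Since $2\sum_{e}\Var_0(h_e\mid S)+2\sum_{\text{shared row}}\Cov_0(h_e,h_f\mid S)=\sum_i\Var_0(s_i^S\mid S)$, we have the exact identity
\[
\Var_0(U_n^S\mid S)=\tfrac12\sum_i\Var_0(s_i^S\mid S)+V_{4,n}^S .
\]
Lemma~\ref{lem:full-pair-nuisance} gives $V_{4,n}^S=O_{P_S}(\sqrt n)$ on graph-independent frame events of probability tending to one. It therefore suffices to show $\sum_i\Var_0(s_i^S\mid S)\ge c'n$ on such events. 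I would intersect those events with the events $\mathcal G_n^{(1)}$ of Lemma~\ref{lem:full-pair-nuisance}, on which sampled role sizes are of order $n$ and the rolewise discrepancy satisfies $Q_n(S)=O(\sqrt{n\log n})$.

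For a sampled focal $i$, let $\mathcal F_i=\sigma(D_i,X_i,Z_i)$ and use
\[
\Var_0(s_i^S\mid S)\ge\Var_0\{\phi_i(D_i,X_i,Z_i)\mid S\},\qquad
\phi_i=\E_0(s_i^S\mid \mathcal F_i,S)=\sum_j R_jA_{ij}^{\BO}\E_0(h_{ij}\mid \mathcal F_i).
\]
To evaluate $\phi_i$, condition on the shared edge. The partner's external counts $(D_j,X_j,Z_j)$ are then independent of row $i$, so $\E_0(h_{ij}\mid\mathcal F_i)$ is an average over the partner's lattice law only. Lemma~\ref{lem:uniform-lattice} and the marked conditioning of Lemma~\ref{lem:marked-conditional} (with the triple covariance of Lemma~\ref{lem:full-pair-kernel}) give, for $t=\sqrt n(\alpha_j-\alpha_i)$ bounded and a standardized focal position $w_i=n^{-1/2}\{(D_i,X_i,Z_i)-\E\}$ in a compact set,
\[
\sqrt n\,\E_0(h_{ij}\mid\mathcal F_i)=\psi(a,t,w_i)+o(1).
\]
Here $\psi$ is the partner's Gaussian degree density at the focal degree times $\E\{\sgn(x-N_G)\sgn(z-N_H)\}$. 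In the latter, $(N_G,N_H)$ is the partner's degree-conditioned Gaussian mark vector, and $(x,z)$ are the focal mark offsets. The global envelope $Cn^{-1/2}e^{-cn(\alpha_i-\alpha_j)^2}$ controls the tails.

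Summing over the $O(\sqrt n)$ partners via Lemma~\ref{lem:marked-profile-summation}, with partner intensity $\rho\lambda_O$ for $i\in B$ and $\rho(\lambda_G+\lambda_H)$ for $i\in O$, gives
\[
\phi_i=\Phi_{r(i)}(\alpha_i,w_i)+o_P(1),
\]
where $\Phi_B(a,w)=\rho\lambda_O(a)\int\psi(a,t,w)\,dt$ and symmetrically for $O$. Because the focal row is itself asymptotically Gaussian with nondegenerate covariance (Lemma~\ref{lem:uniform-lattice}), $\Var_0\{\phi_i\mid S\}\to\Var\{\Phi_{r(i)}(\alpha_i,W)\}$ with $W$ Gaussian. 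Uniform boundedness of the envelope gives the uniform integrability needed for this convergence.

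The main obstacle is nondegeneracy: I must show $w\mapsto\Phi_B(a,w)$ is nonconstant on a set of $a$ of positive $\lambda_B$-mass. My first attempt is to fix the focal degree coordinate and move the $G$-offset $x$ with $z$ held fixed. The map $x\mapsto\E\{\sgn(x-N_G)\sgn(z-N_H)\}$ has derivative
\[
2\varphi_G(x)\,\E\{\sgn(z-N_H)\mid N_G=x\},
\]
which is nonzero for $z\ne$ the conditional median of $N_H$ given $N_G=x$. The degree density factor is strictly positive, so $\partial_x\Phi_B\ne0$ wherever $\lambda_B(a)\lambda_O(a)>0$. Positivity of $\Omega_{BO}$ supplies a compact set of interior common-support activities, avoiding the finitely many endpoints, of positive mass on which this holds uniformly. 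On that set, continuity of $\psi$ and compactness give $\Var\{\Phi_B(a,W)\}\ge\eta>0$. The number of sampled $B$ respondents with activity there is of order $n$ on the frame events, since $Q_n(S)=o(n)$. Summing yields $\sum_i\Var_0(s_i^S\mid S)\ge c'n$, and hence $\Var_0(U_n^S\mid S)\ge c'n/2-O_{P_S}(\sqrt n)\ge cn$. The constant depends on the profile only through this interior overlap margin.
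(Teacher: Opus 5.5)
Your opening ``exact identity'' is false, and the rest of the argument depends on it. Since $U_n^S=\tfrac12\sum_i s_i^S$, each pair variance $\Var_0(h_e\mid S)$ appears once in $\Var_0(U_n^S\mid S)$ and once in $\tfrac12\sum_i\Var_0(s_i^S\mid S)$. A shared-row covariance $\Cov_0(h_e,h_f\mid S)$ with $|e\cap f|=1$, however, appears twice in the former and only once in the latter. For example, with the two pairs $e=\{1,2\}$, $f=\{1,3\}$, your formula gives $\Var h_e+\Var h_f+\Cov(h_e,h_f)$ instead of $\Var h_e+\Var h_f+2\Cov(h_e,h_f)$. The correct identity is
\[
\Var_0(U_n^S\mid S)=\sum_i\Var_0(s_i^S\mid S)-\sum_{e\in\mathcal E_n^S}\Var_0(h_e\mid S)+V_{4,n}^S .
\]
This is also why $\widehat V_n^S=\sum_i\{(s_i^S)^2-(\widehat b_i^S)^2\}-A_n^S$ subtracts the pair squares. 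The term you dropped, $\sum_{|e\cap f|=1}\Cov_0(h_e,h_f\mid S)$, is of order $n$ and has no sign a priori. Put differently, under the correct identity your Jensen step $\Var_0(s_i^S\mid S)\ge\Var_0(\phi_i\mid S)$ discards $\E_0\{\Var_0(s_i^S\mid\mathcal F_i,S)\mid S\}$. That is exactly the quantity that must offset $-\sum_e\Var_0(h_e\mid S)$, which is itself of order $n$ by the collision mass. So a bound $\sum_i\Var_0(\phi_i\mid S)\ge c'n$ with an uncontrolled $c'$ does not suffice.

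The missing step is to show that the shared-row covariances sum to at least $-O(\sqrt n)$. This is in fact true, so your final inequality holds, but proving it needs a decomposition in which those covariances sit inside a nonnegative square. The paper gets this from a directed-row comparator with independent ordered coordinates. The rows are then independent, and the exact Hoeffding decomposition (B.7) gives
\[
\Var(\widetilde U_n^S\mid S)=\sum_e\E r_e^2+\sum_i\E\Big(\sum_{e\ni i}\phi_e^{(i)}\Big)^2 .
\]
Here $\Cov(\widetilde h_{ij},\widetilde h_{ik})=\E(\phi^{(i)}_{ij}\phi^{(i)}_{ik})$ is a cross term of the second square. The actual same-pair and shared-row classes are then matched to the comparator within $O(\sqrt n)$ by the restoration bounds (B.9)--(B.12). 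A repair of your route needs an analogous step, because in the undirected model the kernels in row $i$ are not conditionally independent given $\mathcal F_i$: the focal counts contain $A_{ij}$ and $A_{ik}$, and the partner rows share $A_{jk}$. Once that structure is in place, your row-projection nondegeneracy argument would lower-bound the second Hoeffding term and is a reasonable alternative. The paper instead lower-bounds the first, pair-residual term. That needs only $\E\widetilde h_{ij}^2\ge c_1n^{-1/2}$ on order $n^{3/2}$ close sampled $g_*/O$ pairs, together with projections of squared order $n^{-1}$. It therefore avoids any Gaussian analysis of $\Phi_B$.
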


\begin{proof}
Fix a respondent frame $S$. For ordered $i\ne\ell$, let
$Z_{i\ell}\sim\operatorname{Bernoulli}(p_{i\ell})$ independently over ordered
coordinates. For $e=\{i,j\}$, define $\widetilde h_e$ from the two independent
rows $i,j$ after setting $Z_{ij}=Z_{ji}=0$, and let
$\widetilde U_n^S=\sum_{e\in\mathcal E_n^S}\widetilde h_e$. Its
nonidentical-row Hoeffding decomposition is exact:
\[
\Var(\widetilde U_n^S\mid S)
=\sum_{e\in\mathcal E_n^S}\E r_e^2
+\sum_i\E\left(\sum_{e\ni i}\phi_e^{(i)}\right)^2.
\tag{B.7}
\]
Both terms are nonnegative.

We first compare the actual and directed-row same/shared contributions. For a
pair square, delete the mutual focal dyad; for a wedge
$e=\{i,j\},f=\{i,k\}$, delete every focal-to-focal coordinate among
$\{i,j,k\}$. After this complete focal deletion, the actual external row
stars and the comparator external rows have the same product law.

For $q=2$ in a pair square and $q=3$ in a wedge, restoring one actual or
directed focal coordinate changes a complete expectation by at most
\[
Cn^{-q/2}\exp\{-cnQ_q(\alpha)\},
\tag{B.8}
\]
where $Q_2\asymp(\alpha_i-\alpha_j)^2$ and
\[
Q_3\asymp(\alpha_i-\alpha_j)^2+(\alpha_i-\alpha_k)^2.
\]
Indeed, after focal deletion there are $q-1$ root-row degree contrasts. Every
queried group supplies a fixed unimodular anchor, so the covariance is order
$n$ in dimensions one and two. A bounded degree-target shift costs one
lattice difference, while a sign-threshold shift fixes one primitive group
contrast and convolves the remaining blocks. Both operations have order
$n^{-q/2}$. The activity-equalizing tilt supplies the Gaussian gap envelope;
telescoping over finitely many restored coordinates preserves the order.

Consequently,
\[
|\E_0h_{ij}^2-\E\widetilde h_{ij}^2|
\le Cn^{-1}e^{-cn(\alpha_i-\alpha_j)^2},
\tag{B.9}
\]
and
\[
|\E_0(h_{ij}h_{ik})-
\E(\widetilde h_{ij}\widetilde h_{ik})|
\le Cn^{-3/2}e^{-cnQ_3(\alpha)}.
\tag{B.10}
\]
The comparison remains valid after centering because
\[
|\E_0h_{ij}|+|\E\widetilde h_{ij}|
\le Cn^{-1/2}e^{-cn(\alpha_i-\alpha_j)^2},
\]
\[
|\E_0h_{ij}-\E\widetilde h_{ij}|
\le Cn^{-1}e^{-cn(\alpha_i-\alpha_j)^2},
\]
and
\[
|\mu_e\mu_f-\widetilde\mu_e\widetilde\mu_f|
\le|\mu_e-\widetilde\mu_e|\,|\mu_f|
+|\widetilde\mu_e|\,|\mu_f-\widetilde\mu_f|.
\tag{B.11}
\]
The fixed-profile packing sums
\[
\sum_{i,j}e^{-cn(\alpha_i-\alpha_j)^2}=O(n^{3/2}),
\qquad
\sum_{i,j,k}e^{-cnQ_3(\alpha)}=O(n^2)
\]
therefore give
\[
\left|V_{SS,n}^{S,\mathrm{actual}}
-\Var(\widetilde U_n^S\mid S)\right|=O(\sqrt n),
\tag{B.12}
\]
where $V_{SS,n}^{S,\mathrm{actual}}$ contains the actual same-pair and
shared-row covariance classes. Disjoint comparator kernels use disjoint
independent rows and have zero covariance.

It remains to lower-bound the first term in (B.7). Positive $\Omega_{BO}$
yields one role $g_*\in\{G,H\}$ and a compact interval $J$ away from the
profile breakpoints on which both $\lambda_{g_*}$ and $\lambda_O$ are bounded
below. For sufficiently small fixed $\delta>0$, every natural $g_*/O$ pair in
$J$ satisfying $|\alpha_i-\alpha_j|\le\delta n^{-1/2}$ obeys
\[
\Pp(D_i^-=D_j^-)\ge cn^{-1/2},
\tag{B.13}
\]
where $D_i^-$ is the comparator degree after bounded focal deletion. The two
events that additionally tie either queried group count have total probability
$O(n^{-1})$, so $\E\widetilde h_{ij}^2\ge c_1n^{-1/2}$. Conditional on either
complete comparator row, the partner degree is a sum of $n-O(1)$ independent
Bernoulli variables with probabilities in a common compact subinterval of
$(0,1)$, and hence
\[
\sup_d\Pp(D_j^-=d)\le Cn^{-1/2}.
\]
With $\theta_{ij}=\E\widetilde h_{ij}$ and
$\phi_{ij}^{(i)}=\E(\widetilde h_{ij}\mid Z_i)-\theta_{ij}$, write
\[
r_{ij}=\widetilde h_{ij}-\theta_{ij}
-\phi_{ij}^{(i)}-\phi_{ij}^{(j)}.
\]
The two row projections and the mean have squared order $O(n^{-1})$.
Orthogonality gives
\[
\E r_{ij}^2
=\E\widetilde h_{ij}^2-\theta_{ij}^2
-\E\{\phi_{ij}^{(i)}\}^2
-\E\{\phi_{ij}^{(j)}\}^2
\ge c_2n^{-1/2}
\tag{B.14}
\]
uniformly over these close pairs.

Midpoint quadrature gives $c_Jn^{3/2}\{1+o(1)\}$ such deterministic close
pairs. Bernoulli respondent sampling retains order $n^{3/2}$ of them because
the band graph has maximum degree $O(\sqrt n)$ and retained-count variance
$O(n^2)$. Fixed-size SRS has the same order: overlapping-edge terms are
$O(n^2)$, and the $O(n^{-1})$ covariance over $O(n^3)$ disjoint edge pairs is
also $O(n^2)$. On graph-independent events of probability tending to one,
(B.7) and (B.14) imply
\[
\Var(\widetilde U_n^S\mid S)\ge cn.
\tag{B.15}
\]
Finally,
\[
\Var_0(U_n^S\mid S)
=V_{SS,n}^{S,\mathrm{actual}}+V_{4,n}^S.
\]
Equations (B.12), (B.15), and
$V_{4,n}^S=O_{P_S}(\sqrt n)=o_{P_S}(n)$ prove the result.
\end{proof}

\begin{lemma}[Full-pair information and denominator]
\label{lem:full-pair-denominator}
Under Assumption~\ref{ass:full-pair-profile}, there are
graph-independent events $\mathcal G_n^{(2)}\subseteq\mathcal G_n^{(1)}$,
with probability tending to one, on which
\[
cn\le\sigma_{n,S}^2\le Cn
\]
and
\[
\widehat V_n^S-\sigma_{n,S}^2
=o_{P_0(\cdot\mid S)}(n)
\]
uniformly in $S$. Consequently,
$\widehat V_n^S/\sigma_{n,S}^2\to1$ conditionally and uniformly on these
frames.
\end{lemma}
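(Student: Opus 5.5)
The proof splits into two parts: the two-sided variance bounds for $\sigma_{n,S}^2$, and the consistency of the observable denominator.

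For the variance bounds, the lower bound $\sigma_{n,S}^2\ge cn$ is Lemma~\ref{lem:full-pair-variance-restoration} on its frame events. For the upper bound, decompose $\Var_0(U_n^S\mid S)$ into same-pair, shared-row and four-distinct classes, exactly as in Lemma~\ref{lem:network-variance}. The same-pair class is bounded by (B.1) and the packing sum $O(n^{3/2})\cdot n^{-1/2}$. The shared-row class is bounded by (B.2) and $\sum_{i,j,k}e^{-cnQ_3}=O(n^2)$. The four-distinct class is $V_{4,n}^S=o_{P_S}(n)$ by Lemma~\ref{lem:full-pair-nuisance}. Since these bounds are deterministic or hold on respondent events, I take $\mathcal G_n^{(2)}$ to be the intersection of $\mathcal G_n^{(1)}$, the restoration events, and the event on which $V_{4,n}^S\le \epsilon_n n$ for a slowly vanishing $\epsilon_n$.

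For the denominator, I write
\[
\sigma_{n,S}^2=\sum_e\Var(h_e\mid S)+2\sum_{e\cap f=\{i\}}\Cov(h_e,h_f\mid S)+V_{4,n}^S.
\]
Using $\sum_i (s_i^S)^2=A_n^S+2\sum_{e\cap f=\{i\}}h_eh_f$ (over unordered wedges counted once per center), we get
\[
A_n^S+T_n^S=\sum_i(s_i^S)^2-A_n^S=\sum_e h_e^2+2\sum_{\rm wedges}h_eh_f-\ldots .
\]
The bookkeeping must be checked so that $A_n^S+T_n^S$ is an unbiased estimator of the raw second moment $\sum_e\E h_e^2+2\sum_{\rm wedges}\E h_eh_f$. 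The centering terms $\sum_e\mu_e^2+2\sum_{\rm wedges}\mu_e\mu_f$ equal $\sum_i(\sum_{e\ni i}\mu_e)^2-\sum_e\mu_e^2=\sum_i(b_i^S)^2+O(\sum_e\mu_e^2)$, and $\sum_e\mu_e^2=O(\sqrt n)$ by the global envelope of Lemma~\ref{lem:full-pair-kernel}. So
\[
\E_0\Bigl(A_n^S+T_n^S-\sum_i(b_i^S)^2\Bigm| S\Bigr)=\sigma_{n,S}^2-V_{4,n}^S+O(\sqrt n),
\]
and $V_{4,n}^S$ is negligible. It then remains to prove two things:
\begin{enumerate}
\item The concentration bound $A_n^S+T_n^S-\E_0(\cdot\mid S)=o_P(n)$.
\item The replacement bound $\sum_i(\widehat b_i^S)^2-\sum_i(b_i^S)^2=o_P(n)$.
\end{enumerate}

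The replacement bound follows from Lemma~\ref{lem:full-pair-nuisance} and Cauchy--Schwarz:
\[
\Bigl|\|\widehat b\|^2-\|b\|^2\Bigr|\le\|\widehat b-b\|\bigl(\|\widehat b-b\|+2\|b\|\bigr),
\]
with $\|b\|_2^2=O(n)$ by Lemma~\ref{lem:full-pair-kernel}. One gap remains: $\widehat b$ is built from the same $s_j$ that enter $T_n^S$. The leave-one-out structure keeps the bias of $\sum_i\widehat b_i^2$ relative to $\sum_i b_i^2$ at the level of block-averaged variances $O(n/\ell_n)\cdot O(1)$. I expect this is already absorbed in the $\ell_n^{-1}$ term of the nuisance lemma.

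I expect the concentration bound to be the main obstacle, since it requires a fourth-order moment bound for a statistic built from squared row sums. My plan is to bound $\Var_0(\sum_i (s_i^S)^2\mid S)$ by $o(n^2)$, expanding it into sums over configurations of up to four pairs and classifying them by their endpoint overlap graph. For each configuration I would apply the lattice envelopes of Lemmas~\ref{lem:uniform-lattice} and~\ref{lem:triple-collision}, plus the ANOVA coefficient bound (B.3) for disconnected components, exactly as in (B.3)--(B.6a). Each connected component with $q$ distinct respondents contributes a collision factor $n^{-(q-1)/2}$ times a packing sum $O(n\cdot n^{(q-1)/2})$, so connected components are $O(n)$ each. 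Disconnected components then cost at least one extra $n^{-1/2}$ from cross-dyad covariances, and the total should be $O(n^{3/2})=o(n^2)$. The variance of $A_n^S$ is similar and easier. Graph-independent respondent masks only zero out weights, so these bounds hold uniformly on $\mathcal G_n^{(2)}$. Chebyshev's inequality then gives the concentration bound, and dividing by $\sigma_{n,S}^2\ge cn$ yields the ratio limit.
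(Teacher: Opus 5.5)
Your proposal is correct and follows the paper's proof essentially step for step: Lemma~\ref{lem:full-pair-variance-restoration} for the lower bound, collision/packing envelopes for the upper bound, and the exact bias identity $\sigma_{n,S}^2-\E_0[A_n^S+T_n^S-\sum_i(b_i^S)^2\mid S]=V_{4,n}^S+\sum_e\mu_e^2$ with $\sum_e\mu_e^2=O(\sqrt n)$, followed by concentration of $A_n^S,T_n^S$ and the Cauchy--Schwarz row-square replacement via Lemma~\ref{lem:full-pair-nuisance}. The paper's connected label-sum count gives the sharper $\Var_0(A_n^S\mid S),\Var_0(T_n^S\mid S)=O(n)$, with the higher-rank collision envelopes coming from Lemma~\ref{lem:full-pair-complete-response} rather than the $d\le2$ lattice lemmas; your $o(n^2)$ target already suffices, and the dependence between $\widehat b^S$ and $T_n^S$ that you flag is harmless because the replacement bound is pathwise.
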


\begin{proof}
Lemma~\ref{lem:full-pair-variance-restoration} gives the lower variance bound
on graph-independent good-frame events. The collision and packing envelopes
give the matching $O(n)$ upper bound.

Connected label sums obey
\[
\sum_{\text{labels}}e^{-cnQ(\alpha)}
=O\{n^{(v+c)/2}\}
\]
for every fixed shape with $v$ labels and $c$ components: choose one free
anchor per component and use the $O(\sqrt n)$ local packing sum for each
remaining label. The complete pair-square and wedge expansions therefore give
\[
\Var_0(A_n^S\mid S)=O(n),\qquad
\Var_0(T_n^S\mid S)=O(n).
\]
For $\mu_e=\E_0(h_e\mid S)$, direct expansion gives the exact bias identity
\[
\sigma_{n,S}^2
-\E_0\left[
A_n^S+T_n^S-\sum_i(b_i^S)^2\,\middle|\,S
\right]
=V_{4,n}^S+\sum_e\mu_e^2.
\]
The Gaussian pair envelope gives $\sum_e\mu_e^2=O(\sqrt n)$.
Lemma~\ref{lem:full-pair-nuisance}, concentration of $A_n^S,T_n^S$, and
Cauchy--Schwarz for the row-square replacement prove the additive $o_p(n)$
denominator error. The variance lower bound converts it to ratio
consistency.
\end{proof}

\begin{lemma}[Fixed-dimensional complete response]
\label{lem:full-pair-complete-response}
Fix a dimension $d$, response order $t$, and bounded lattice directions
$v_1,\ldots,v_t$. Let $X_{n,\ell}$ be independent $\mathbb Z^d$-valued
increments, with their number between positive constant multiples of $n$.
If, uniformly in $n,\ell$,
\[
\Pp(X_{n,\ell}=0)\ge\eta,
\qquad
\Pp(X_{n,\ell}=e_j)\ge\eta,
\quad j=1,\ldots,d,
\tag{B.16}
\]
then the pmf $q_n$ of their sum satisfies
\[
\|\Delta_{v_1}\cdots\Delta_{v_t}q_n\|_1
\le C_{d,t}n^{-t/2},
\qquad
\|\Delta_{v_1}\cdots\Delta_{v_t}q_n\|_\infty
\le C_{d,t}n^{-(d+t)/2}.
\tag{B.17}
\]
\end{lemma}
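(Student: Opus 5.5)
We prove (B.17) by Fourier analysis on the torus $[-\pi,\pi]^d$, following the same route as Lemma~\ref{lem:uniform-lattice}. Write $\varphi_\ell(s)=\E e^{is^\top X_{n,\ell}}$ and $\Phi_n(s)=\prod_\ell\varphi_\ell(s)$. Since $\Delta_v q(z)=q(z+v)-q(z)$ has Fourier transform $(e^{-is^\top v}-1)\widehat q(s)$, we get
\[
\Delta_{v_1}\cdots\Delta_{v_t}q_n(z)
=\frac1{(2\pi)^d}\int_{[-\pi,\pi]^d}
e^{-is^\top z}\prod_{r=1}^t(e^{-is^\top v_r}-1)\Phi_n(s)\,ds .
\]
Condition (B.16) gives, through the independent-copy identity, $|\varphi_\ell(s)|^2\le 1-2\eta^2\{1-\cos s_j\}$ for each coordinate $j$. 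Averaging this over $j$ and multiplying over the order-$n$ increments yields
\[
|\Phi_n(s)|\le\exp\{-cn\,\|s\|_{\mathbb T}^2\},
\]
where $\|\cdot\|_{\mathbb T}$ is distance to $2\pi\mathbb Z^d$. On the fundamental domain this is simply $\|s\|$. Each difference factor satisfies $|e^{-is^\top v_r}-1|\le|v_r|\,\|s\|$.

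The sup-norm bound follows directly. The integrand is dominated by $C\|s\|^te^{-cn\|s\|^2}$, and substituting $s=u/\sqrt n$ gives the bound $Cn^{-(d+t)/2}$. The increments need not be bounded, because (B.16) controls only the modulus of the characteristic function.

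The $\ell^1$ bound is the main obstacle, since Fourier inversion controls $\ell^\infty$ and $\ell^2$ but not $\ell^1$ directly. We split into a central region $\|z-\E S\|\le M\sqrt{n\log n}$ and its complement, but that requires tail control which unbounded increments do not give. We therefore instead use an unimodal-decomposition device. Each increment is written as a mixture: with probability $\eta$ it is drawn uniformly from $\{0,e_j\}$ for a random coordinate $j$, and otherwise it follows a residual law. Conditioning on the residual parts and on which indices are "lazy" reduces $q_n$ to a mixture of translates of $p_N=\mu^{*N}$, where $\mu$ is the fixed lazy law and $N\ge\eta n/2$ with exponentially high probability by Chernoff. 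Differences commute with mixtures and translations, and the $\ell^1$ norm is convex, so it suffices to prove
\[
\|\Delta_{v_1}\cdots\Delta_{v_t}\,p_N\|_1\le C N^{-t/2}.
\]
For this fixed i.i.d. law, write $p_N=\nu^{*N/2}*\nu^{*N/2}$ and apply Cauchy--Schwarz with weight $(1+\|z\|^2/N)^{k}$ for $k>d/2$. Plancherel converts the weighted $\ell^2$ norm into a Sobolev norm of $\prod_r(e^{-is^\top v_r}-1)\widehat\mu(s)^N$. Derivatives of $\widehat\mu^N$ cost a factor $N\|s\|$ or $N$ each, and these are offset by the $N^{-1}$ weights, which yields $CN^{-t/2}$.

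The exceptional event $N<\eta n/2$ contributes at most $2^t e^{-cn}$, because each $\Delta_v$ at most doubles the $\ell^1$ norm. The constants depend only on $d$, $t$, $\eta$, the $|v_r|$, and the ratio bounds on the number of increments, which gives the stated uniformity.
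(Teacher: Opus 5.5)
Your route is essentially the paper's. Both split off an i.i.d.\ anchor convolution $\mu^{*N}$ with $N\asymp n$ through a fixed minorization; your $\frac12\delta_0+\frac1{2d}\sum_j\delta_{e_j}$ plays the role of the paper's uniform law on $\{0,e_1,\ldots,e_d\}$. Both then put all differences on that factor, use a weighted Cauchy--Schwarz plus Parseval for the $\ell^1$ half, and absorb the selector lower tail at the end. The one difference is the $\ell^\infty$ half. You get it directly from the full product $\Phi_n$ via the independent-copy bound, so no selector averaging is needed there. The paper instead reads both halves off the anchor factor. Either works.

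One step fails as written: the weight must be centered at the mean of $\mu^{*N}$. Your anchor law has mean $m=(2d)^{-1}(1,\ldots,1)^\top\neq0$, so $\mu^{*N}$ concentrates within $O(\sqrt N)$ of $Nm$, at distance of order $N$ from the origin. The weight $(1+\|z\|^2/N)^k$ is then of order $N^k$ on the bulk of the mass. The resulting Cauchy--Schwarz bound is of order $N^{(k-t)/2}$, not $N^{-t/2}$. On the Fourier side, $\nabla\widehat\mu(0)=im\neq0$, so a first derivative of $\widehat\mu(s)^N$ is of order $N$ rather than $N\|s\|$ near $s=0$, and your derivative count breaks.

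The repair is what the paper does. Use the weight $w(z)=(1+\|z-Nm\|^2/N)^k$, which amounts to differentiating $a(s)\psi(s)^N$. Here $a(s)=\prod_r(e^{-is^\top v_r}-1)$ and $\psi(s)=e^{-im^\top s}\widehat\mu(s)$, whose gradient vanishes at $0$. Then, for $|\gamma|\le k$,
\[
N^{-|\gamma|/2}\bigl|\partial^\gamma\{a(s)\psi(s)^N\}\bigr|
\le CN^{-t/2}(1+\sqrt N\|s\|)^{C}e^{-cN\|s\|^2}.
\]
So the weighted $\ell^2$ norm is $O(N^{-d/4-t/2})$. Since $\|w^{-1/2}\|_2=O(N^{d/4})$ for $k>d/2$, you get $CN^{-t/2}$.

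Two cosmetic points. The split $p_N=\nu^{*N/2}*\nu^{*N/2}$ is not needed. The Chernoff threshold should be a fixed fraction of $\eta$ times the lower bound $c_1n$ on the number of increments, rather than $\eta n/2$.
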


\begin{proof}
Let $\nu_d$ be uniform on $\{0,e_1,\ldots,e_d\}$. Condition (B.16) gives a
fixed minorization
\[
\mathcal L(X_{n,\ell})=\theta\nu_d+(1-\theta)\rho_{n,\ell}.
\tag{B.18}
\]
Independent selectors produce, with exponentially high probability, an anchor
convolution factor $\nu_d^{*N}$ with $N\asymp n$. Convolution with the residual
sum contracts both norms after all finite differences are placed on this
factor.

The anchor characteristic function is
\[
\phi_d(u)=\frac{1+\sum_{j=1}^de^{iu_j}}{d+1},
\]
and strong aperiodicity gives
\[
|\phi_d(u)|\le
\exp\{-c_d\operatorname{dist}(u,2\pi\mathbb Z^d)^2\}.
\tag{B.19}
\]
For $a(u)=\prod_{j=1}^t(e^{iv_j^Tu}-1)$,
$|a(u)|\le C\|u\|^t$. Fourier inversion and (B.19) give the
$\ell^\infty$ half of (B.17). For $\ell^1$, choose an integer $L>d/2$ and
apply Cauchy--Schwarz with weight
$(1+\|x-N\E_{\nu_d}V\|^2/N)^L$. Parseval converts the weighted squared norm
to derivatives of
\[
e^{-iN(\E_{\nu_d}V)^Tu}a(u)\phi_d(u)^N.
\]
After scaling derivative order $|\gamma|$ by $N^{-|\gamma|/2}$, each term is
bounded by
\[
CN^{-t/2}
\{1+\sqrt N\operatorname{dist}(u,2\pi\mathbb Z^d)\}^{C_{L,t}}
e^{-cN\operatorname{dist}(u,2\pi\mathbb Z^d)^2}.
\]
Integration makes the weighted $\ell^2$ factor
$O(N^{-d/4-t/2})$, while the inverse weight has $\ell^2$ norm
$O(N^{d/4})$. This proves the $\ell^1$ half. Averaging over the selector count
and absorbing its exponential lower tail proves (B.17).
\end{proof}

\begin{lemma}[Fixed-order full-pair cumulants]
\label{lem:full-pair-fixed-cumulants}
For each fixed regular full-pair profile and every fixed $r\ge3$, uniformly on
the good graph-independent respondent frames,
\[
 |\kappa_r(U_n^S\mid S)|\le C_rn.
 \tag{B.20}
\]
\end{lemma}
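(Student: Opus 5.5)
We would bound the cumulant $\kappa_r(U_n^S\mid S)=\sum_{e_1,\ldots,e_r\in\mathcal E_n^S}\kappa(h_{e_1},\ldots,h_{e_r}\mid S)$ by grouping the $r$-tuples of sampled pairs according to the finite set of \emph{shapes}: the multigraph on the at most $2r$ distinct endpoint labels formed by the pairs. Since the respondent mask only zeroes pair weights, it suffices to bound, uniformly over frames, the sum over all labels of absolute joint cumulants. For each shape we would prove a mixed-cumulant envelope
\[
|\kappa(h_{e_1},\ldots,h_{e_r})|\le C\,n^{-\gamma(\text{shape})}\exp\{-cnQ(\alpha)\},
\]
where $Q$ is the sum of squared activity gaps across the pairs. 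We would then combine this with the connected label-sum count $\sum_{\text{labels}}e^{-cnQ}=O\{n^{(v+k)/2}\}$ from the proof of Lemma~\ref{lem:full-pair-denominator}, with $v$ labels and $k$ components of the pair graph, and check that $(v+k)/2-\gamma\le1$.

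The envelope would come from the product-Bernoulli ANOVA (Efron--Stein/Hoeffding) expansion already used for (B.3)--(B.3a). A joint cumulant of functions of independent dyads vanishes unless the pairs are \emph{linked} through shared dyads. Two pairs are linked either by sharing a focal row or by a cross dyad between their endpoints. We would express the cumulant through the moment--cumulant formula applied to conditional expectations given the linking dyads, and bound each Hoeffding coefficient by Lemma~\ref{lem:full-pair-complete-response}. A pair contributes the factor $n^{-1/2}e^{-cn(\alpha_y-\alpha_z)^2}$ from its one degree constraint, and each linking dyad restored on that pair costs a further lattice difference, i.e.\ a factor $n^{-1/2}$ by (B.17). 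For pairs sharing endpoints we would use the joint multi-row lattice bounds of Lemma~\ref{lem:triple-collision} type (dimension equal to the number of distinct degree contrasts), which gives one $n^{-1/2}$ per independent degree constraint.

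The counting step then has two cases. In a connected shape with $v$ labels and $r$ pairs, cross-dyad links cost at least $n^{-1/2}$ for each extra component they merge, and each independent collision constraint among the $v-1$ tree edges gives $n^{-1/2}$. Hence $\gamma\ge(v-1)/2+(k-1)/2$ and the total is $O\{n^{(v+k)/2-(v+k-2)/2}\}=O(n)$. A shape whose cumulant graph is disconnected has zero cumulant and needs no bound. Cycles in the shape, meaning extra pairs beyond a spanning tree, only add nonnegative exponents, so they are absorbed.

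The main obstacle is the envelope for mixed shapes, where some pairs share rows and others are linked only by cross dyads. Here the cumulant is not a single complete response and requires a cluster expansion. We would first reduce to the directed-row comparator of Lemma~\ref{lem:full-pair-variance-restoration}, in which shared rows are exactly independent blocks. Telescoping over the $O(r^2)$ restored focal dyads would then be controlled by (B.8)-type bounds, with the tree-graph inequality for cumulants organizing the combinatorics so that each linking dyad yields its own lattice difference.
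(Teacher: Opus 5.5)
Your architecture matches the paper's: a sum over shapes, a per-shape envelope $n^{-\gamma}e^{-cnQ}$, and the $O\{n^{(v+k)/2}\}$ label sum. Your target $\gamma=(v+k-2)/2$ also equals the paper's $(v-c)/2+(c-1)$ in (B.26). But your justification of that exponent does not hold, and the mixed-shape step you defer is where the proof actually lies.

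A shape with $v$ labels and $k$ respondent components has only $v-k$ independent degree-collision constraints, the rank of its incidence matrix, not $v-1$. For example, two disjoint close pairs have $\Pp(D_u=D_v,D_w=D_x)\asymp n^{-1}$, not $n^{-3/2}$. With the correct rank, charging $n^{-1/2}$ per cross-dyad merge gives $\gamma=(v-1)/2$ and a total of order $n^{(k+1)/2}$, which fails for every $k\ge2$. For $r=3$, a wedge plus a disjoint pair has $v=5$, $k=2$ and label sum $n^{7/2}$, so it needs $n^{-5/2}$; your mechanism supplies only $n^{-2}$. Your ``disconnected cumulant graph'' case is also empty: any two focal labels share a dyad, so every shape is linked and all the work is in pricing the links.

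The missing idea is that each bridge joining two respondent components must be differenced at \emph{both} endpoint components, so it costs $n^{-1}$, not $n^{-1/2}$. The paper gets this from the exact connected-diagram identity (B.24c), which works as follows. Conditional on the bridge dyads $Z_b$, the components are independent, and the conditional local cumulants are expanded in the product-Bernoulli ANOVA. M\"obius inversion then keeps only diagrams whose contracted bridge graph is connected. A bridge selected by a single atom carries the factor $\E(Z_b-\E Z_b)=0$, so at least $c-1$ bridges are each selected on both sides. This is (B.25c), $\sum_aq_a\ge2(c-1)$.

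Your route does not deliver this. Passing to the directed-row comparator (where multi-component cumulants vanish) and telescoping with (B.8)-type bounds prices each restoration by a single lattice difference. Likewise, ``each linking dyad yields its own lattice difference'' is one charge per bridge, not two. Your earlier remark that a restored dyad costs a difference on ``that pair'' points the right way, but nothing in your plan forces the difference on both sides. A law-of-total-cumulance expansion over the bridge dyads would repair this, and that is essentially the paper's argument.

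Within a component you also need the aggregation (B.25a)--(B.25d). The moment--cumulant expansion produces products of several complete factors. Only rank subadditivity $\sum_Fd(F)\ge v_a-1$, together with the energy comparison, recovers the component charge $n^{-(v_a-1)/2}e^{-cnQ_{H_a}}$; a single multi-row lattice bound does not cover such products.
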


\begin{proof}
Consider one complete local moment factor after deleting its finite focal
alter set. Choose one root in every respondent component and let $d$ be the
total root-contrast rank. Under one corner-independent product tilt, the
independent nonfocal group sums are $R_G,R_H,R_O\in\mathbb Z^d$. In the
unimodular coordinates
\[
 (S,M)=(R_G+R_H+R_O,R_G,R_H),
\]
their unconditioned joint pmf, evaluated on the constraint slice $S=s$, is
\[
 p_n(s,g,h)=q_G(g)q_H(h)q_O(s-g-h).
 \tag{B.21}
\]

Every nonfocal group has order-$n$ size. Compact tilted edge probabilities
and the configurations with all focal-row edges zero or exactly one nonroot
edge present give (B.16) in each group. A joint response of order $t$ splits,
by the ordered discrete product rule, into group response orders
$r_G+r_H+r_O=t$. Put one factor in $\ell^\infty$ and the other two in
$\ell^1$. Lemma~\ref{lem:full-pair-complete-response} yields
\[
 \sup_s\sum_{g,h}
 |\Delta^{r_G}q_G(g)\Delta^{r_H}q_H(h)
   \Delta^{r_O}q_O(s-g-h)|
 \le Cn^{-(d+t)/2}.
 \tag{B.22}
\]
This product is formed before conditioning; no conditional independence of
the three group sums is used.

For a fixed raw/ghost decoration, let $(S_n,M_n)$ be the outside lattice sum
after deleting the finite focal dyads. The coordinates of $M_n$ are the
primitive count contrasts: repeated sign factors using the same contrast
share one coordinate. If $z\in\{0,1\}^q$ is a bridge corner, write $Hz$ for
its degree-constraint translation and $Cz$ for its coherent primitive-mark
translation. For a bounded complete mark $w$,
\[
 F(z)=\E\left[
 w(M_n+Cz)\mathbf1\{S_n=b+Hz\}
 \right].
 \tag{B.23}
\]
If $p_n^{\mathrm{orig}}(s,m)=\Pp(S_n=s,M_n=m)$, the change of variables
$\ell=m+Cz$ gives the exact complete-translation identity
\[
 F(z)=\sum_{\ell}w(\ell)
 p_n^{\mathrm{orig}}(b+Hz,\ell-Cz),
 \qquad |w|\le1.
 \tag{B.23a}
\]
Writing $h_j,c_j$ for the columns of $H,C$ and
$\Delta_vf(x)=f(x+v)-f(x)$, every mixed Boolean response is therefore
\[
 \Delta_TF(0)
 =\sum_\ell w(\ell)
 \left\{\prod_{j\in T}\Delta_{(h_j,-c_j)}
 p_n^{\mathrm{orig}}\right\}(b,\ell).
 \tag{B.23b}
\]
All appearances of one primitive contrast share a single coordinate and
receive the same edge-induced translation. To return from the tilted law,
write the component tilt in node-potential coordinates. For a
respondent/ghost component $C$, the potential
$t_x=\bar\alpha_C-\alpha_x$ centers its outside root contrasts. A fixed
forest gauge gives
\[
 \|\lambda\|\le C\{Q_F(\alpha)^{1/2}+n^{-1}\},
\]
while strong convexity gives, for every bounded corner shift $u$,
\[
 L_\lambda(b+u)\le Ce^{-cnQ_F(\alpha)}.
\]
If $k$ responses hit this likelihood factor, then
\[
 \{Q_F^{1/2}+n^{-1}\}^ke^{-cnQ_F}
 \le C_kn^{-k/2}e^{-c'nQ_F}.
\]
Combining these charges with (B.22) gives the complete-factor response
\[
 |\Delta_TF|
 \le C_r n^{-\{d(F)+|T|\}/2}e^{-cQ_F(\alpha)n}.
 \tag{B.24}
\]
The sign product enters (B.23) as the bounded weight $w$.

The raw/ghost reduction entering this complete-factor argument is finite and
exact. After the bounded focal deletion, let $P_i$ denote
conditional expectation over endpoint row $i$. For one pair occurrence
$e=\{i,j\}$, its pair-Hoeffding remainder is
\[
 r_e=(I-P_i)(I-P_j)h_e.
\]
If $X_i',X_j'$ are independent ghost rows with the same activities as
$X_i,X_j$, then
\[
 r_e(X_i,X_j)
 =\E_{X_i',X_j'}\{h_e(X_i,X_j)-h_e(X_i',X_j)
                  -h_e(X_i,X_j')+h_e(X_i',X_j')\}.
\]
Consequently, a product of at most $r$ such occurrences is a signed linear
combination of at most $4^r$ raw/ghost decorations; a raw $h_e$ is simply the
all-raw decoration. In every decoration, each original occurrence edge is
preserved bijectively in a split graph $F$, and a map
$\pi:V(F)\to V(H)$ sends each ghost endpoint to its source respondent, so the
ghost activity is $\alpha_{\pi(x)}$. Retained copies of one physical bridge
continue to use the same Bernoulli coordinate rather than independent copies.
If every occurrence of a bridge is integrated out by ghost replacement, its
Boolean response is zero. The number of decorations, deletion patterns, and
corner states is therefore bounded by a constant depending only on fixed
$r$.

For pair-kernel occurrences $h_{e_1},\ldots,h_{e_r}$, form the respondent
multigraph $H$ whose labeled vertices are the distinct respondents and whose
edge multiset is $(e_1,\ldots,e_r)$. The joint cumulant is
\[
 \kappa(h_{e_1},\ldots,h_{e_r})
 =\sum_{\pi\in\Pi_r}(|\pi|-1)!(-1)^{|\pi|-1}
   \prod_{B\in\pi}\E\prod_{j\in B}h_{e_j}.
 \tag{B.24a}
\]
The cancellation that will be used below is an exact finite identity. Let
$I_a$ be the labeled occurrence slots in respondent component $a$, let $L_a$
collect its local randomness, and let the cross-component Bernoulli bridge
dyads be $Z_b$, mutually independent and independent of the $L_a$. For each
nonempty $C\subseteq I_a$, define the conditional local cumulant
\[
 K_C(Z)=\kappa_{L_a}(h_{e_s}:s\in C\mid Z).
\]
Pointwise moment--cumulant inversion and the product-Bernoulli ANOVA expansion
give
\[
 \E_{L_a}\left(\prod_{s\in A}h_{e_s}\middle|Z\right)
 =\sum_{\lambda\in\Pi(A)}\prod_{C\in\lambda}K_C(Z),
 \qquad
 K_C(Z)=\sum_{T\subseteq D_a}\widehat K_C(T)
             \prod_{b\in T}(Z_b-\E Z_b),
 \tag{B.24b}
\]
where $D_a$ is the set of bridges incident to component $a$ and
$\widehat K_C(T)$ is the product-measure average of $\Delta_TK_C$.

A complete diagram $d$ consists of a local-atom partition $\lambda$, one
support $T_C$ for each atom, and, for every physical bridge $b$, a partition
$\gamma_b$ of the atoms selecting $b$. A block $G\in\gamma_b$ contributes
the primitive Bernoulli cumulant $\kappa_{|G|}(Z_b-\E Z_b)$. Let $\rho(d)$
be the partition of the labeled slots generated by joining slots in the same
local atom and slots whose atoms lie in the same primitive bridge block, and
put
\[
 w(d)=\prod_{C\in\lambda}\widehat K_C(T_C)
       \prod_b\prod_{G\in\gamma_b}
       \kappa_{|G|}(Z_b-\E Z_b).
\]
For every raw moment partition $\pi$, the two inversions in (B.24b) give
\[
 \prod_{B\in\pi}\E\prod_{s\in B}h_{e_s}
 =\sum_{d:\rho(d)\le\pi}w(d).
\]
Substituting this into (B.24a) and using the partition-lattice identity
\[
 \sum_{\pi:\rho\le\pi\le\widehat1}
 (-1)^{|\pi|-1}(|\pi|-1)!
 =\mathbf1\{\rho=\widehat1\}
\]
proves the exact connected-diagram formula
\[
 \kappa(h_{e_1},\ldots,h_{e_r})
 =\sum_{d:\rho(d)=\widehat1}w(d).
 \tag{B.24c}
\]
For fixed $r$ this is a finite sum. Expand every component-local atom in
(B.24c) into complete moment factors $F_{C,j}$ before applying the ordered
product rule. Their response subsets partition the incident responses, and
their occurrence-edge multisets partition the original respondent component,
with ghost vertices carrying their source activities. For a finite
multigraph $F$, let $B_F$ be its incidence matrix and put
\[
 d(F)=\operatorname{rank}B_F,
 \quad
 Q_F=\sum_{C\in\operatorname{cc}(F)}
       \sum_{x\in C}(\alpha_x-\bar\alpha_C)^2,
 \quad
 D_F=\sum_{\{x,y\}\in E(F)}(\alpha_x-\alpha_y)^2,
\]
counting repeated edge occurrences. If $H_a$ has $v_a$ labels, rank
subadditivity gives
\[
 v_a-1=\operatorname{rank}B_{H_a}
 \le\sum_{C,j}\operatorname{rank}B_{F_{C,j}}.
 \tag{B.25a}
\]
Every factor has at most $r$ edges, its edge multiset partitions that of
$H_a$, and a spanning-tree path in $H_a$ has length at most $r$. Therefore
\[
 Q_{H_a}
 \le\frac{r(r+1)}2D_{H_a},
 \qquad
 D_{H_a}=\sum_{C,j}D_{F_{C,j}},
 \qquad
 D_{F_{C,j}}\le2rQ_{F_{C,j}}.
 \tag{B.25b}
\]
Equations (B.25a)--(B.25b) give explicitly
\[
 \sum_{C,j}d(F_{C,j})\ge v_a-1,
 \qquad
 \sum_{C,j}Q_{F_{C,j}}
 \ge\{r^2(r+1)\}^{-1}Q_{H_a}.
 \tag{B.25}
\]
Thus rank and energy are aggregated at component level; no individual atom is
asserted to have rank $v_a-1$.

Moreover, after weakening the fixed response constant if necessary, (B.25)
implies
\[
 \prod_{a,C,j}\exp\{-cnQ_{F_{C,j}}\}
 \le \exp\left\{-c_rn\sum_{a=1}^cQ_{H_a}\right\}
 =:\mathcal E_H(\boldsymbol i),
 \tag{B.25d}
\]
where $\boldsymbol i$ is the legal respondent-label assignment for the fixed
shape $H$.

For a global diagram with $v$ labels and $c$ respondent components, the
condition $\rho(d)=\widehat1$ in (B.24c) makes its contracted bridge graph
connected. It therefore contains at least $c-1$ distinct physical bridge
dyads. A singleton primitive bridge block has zero weight because
$\E(Z_b-\E Z_b)=0$; each selected tree bridge must consequently be selected
by at least one local atom at both physical endpoints. If $q_a$ counts the
distinct connecting bridge coordinates selected in component $a$, then
\[
 \sum_{a=1}^cq_a\ge2(c-1).
 \tag{B.25c}
\]
Thus connected M\"obius--ANOVA cancellation supplies at least $2(c-1)$
response charges. Equations (B.24)--(B.25c) bound one fixed shape by
\[
 C_rn^{-(v-c)/2-(c-1)}\mathcal E_H.
 \tag{B.26}
\]
Summing the envelope in (B.25d) over legal label assignments gives
$O_r\{n^{(v+c)/2}\}$: choose one free anchor per component and use the
$O(\sqrt n)$ local packing sum for every remaining label. Hence the total
exponent is
\[
 -\frac{v-c}{2}-(c-1)+\frac{v+c}{2}=1,
\]
which proves (B.20).
\end{proof}

\begin{lemma}[Centered full-pair normal limit]
\label{lem:full-pair-clt}
Uniformly on the good frames in
Lemma~\ref{lem:full-pair-denominator},
\[
\frac{U_n^S-\E_0(U_n^S\mid S)}{\sigma_{n,S}}
\Rightarrow N(0,1).
\]
\end{lemma}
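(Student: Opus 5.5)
We prove Lemma~\ref{lem:full-pair-clt} by the method of cumulants. Fix a good frame $S$ from Lemma~\ref{lem:full-pair-denominator}. Conditionally on $S$, $U_n^S$ is a bounded functional of the independent dyads, so all its conditional cumulants exist. The standardized variable
\[
\xi_n^S=\frac{U_n^S-\E_0(U_n^S\mid S)}{\sigma_{n,S}}
\]
has first cumulant $0$ and second cumulant $1$ by construction. For each fixed $r\ge3$, Lemma~\ref{lem:full-pair-fixed-cumulants} gives $|\kappa_r(U_n^S\mid S)|\le C_rn$. Lemma~\ref{lem:full-pair-denominator} gives $\sigma_{n,S}^2\ge cn$. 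Together these yield
\[
|\kappa_r(\xi_n^S\mid S)|\le C_rc^{-r/2}n^{1-r/2}\longrightarrow0.
\]
The constants $C_r$ and $c$ do not depend on $S$ within the good frames, so this convergence is uniform over those frames.

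The conclusion then follows from the standard cumulant criterion. All cumulants of $\xi_n^S$ converge to those of $N(0,1)$, and the normal law is determined by its moments. Moments are polynomials in cumulants, so every fixed moment converges to the corresponding Gaussian moment. By the method of moments, $\xi_n^S\Rightarrow N(0,1)$.

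Uniformity is stated as a supremum over frames, so we need a uniform version of this argument, and we obtain it by contradiction. Suppose some bounded Lipschitz test function $\psi$ has $\sup_{S}|\E\{\psi(\xi_n^S)\mid S\}-\E\psi(N)|$ not tending to zero. Then there is a subsequence of frames $S_n$ along which this discrepancy stays at least $\epsilon>0$. Along that subsequence the cumulant bounds above still hold, because they are uniform in $S$. So $\xi_n^{S_n}\Rightarrow N(0,1)$ by the previous paragraph, which contradicts the assumed gap.

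Nearly all the substantive work lies in the two cited lemmas: the connected-diagram cumulant envelope (B.20), and the variance lower bound obtained through directed-row restoration. The present step only needs to confirm that both bounds hold on the same good frame events and with constants that do not depend on $S$. The events are nested, since $\mathcal G_n^{(2)}\subseteq\mathcal G_n^{(1)}$, so one can intersect them with the frames required by Lemma~\ref{lem:full-pair-fixed-cumulants}. The intersection is a family of graph-independent events whose probability still tends to one, and we take these as the good frames.

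The main point to check is that (B.20) applies to the raw sum $U_n^S$, including the shared-row dependence. This is automatic: cumulants of order $r\ge2$ are invariant under centering. So the bound transfers directly to $\xi_n^S$, and no separate treatment of the row projection is needed here.
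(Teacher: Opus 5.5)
Your proposal is correct and follows essentially the same route as the paper's proof. Both combine the cumulant envelope (B.20) with the order-$n$ variance lower bound to get standardized cumulants of order $O(n^{1-r/2})$ for $r\ge3$, convert these to Gaussian moment limits via the moment--cumulant formula, apply the method of moments using moment-determinacy of $N(0,1)$, and then upgrade to uniformity over the good frames by a violating-subsequence argument.
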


\begin{proof}
Lemma~\ref{lem:full-pair-fixed-cumulants} and
Lemma~\ref{lem:full-pair-denominator} give, for every fixed $r\ge3$,
\[
 \left|
 \kappa_r\left(
 \frac{U_n^S-\E_0(U_n^S\mid S)}{\sigma_{n,S}}
 \middle|S\right)
 \right|
 \le C_rn^{1-r/2}\longrightarrow0.
\]
The normalized first two cumulants are exactly $0$ and $1$. The
moment--cumulant formula therefore converts the preceding fixed-order
cumulant limits into convergence of every fixed moment to the corresponding
standard Gaussian moment. The unit second moments make the sequence tight,
and the Gaussian law is moment-determinate; the method of moments gives the
centered conditional normal limit on every good-frame sequence. A
violating-frame subsequence argument upgrades this to the stated uniform
conditional convergence. Each moment calculation uses a fixed cumulant order
and a fixed-dimensional local limit.
\end{proof}

\begin{proof}[Proof of Theorem~\ref{thm:full-pair-null}]
Let $\mathcal G_n$ be the intersection of the good-frame events in
Lemmas~\ref{lem:full-pair-nuisance}--\ref{lem:full-pair-denominator}.
Its respondent-design probability tends to one. The exact finite sign
restriction gives
\[
\E_0(U_n^S\mid S)\le0
\]
for every such frame at each finite $n$.
Lemmas~\ref{lem:full-pair-denominator}--\ref{lem:full-pair-clt} give the
centered studentized normal limit. Since
$a_n\to0$, $b_n\to\infty$, and $\sigma_{n,S}^2\asymp n$, clipping is inactive
under the null with conditional probability tending to one uniformly over the
good frames. Replacing the unknown nonpositive center by zero can only reduce
the positive-tail rejection probability. This proves the conditional bound;
averaging over the respondent frame proves the joint-law statement.
\end{proof}

\section{Finite Bivariate Nonredundancy}
\label{app:finite-witness}

\begin{proof}[Proof of Proposition~\ref{prop:nonredundancy}]
Take focal roles $i\in G$ and $j\in O$, with
$\Pp(A_{ij}=1)=1/5$. Use five fixed core-alter slots and two fixed outside
slots. The outside probabilities are unchanged across constructions:
\[
 p_{iO}=(7/20,9/20),\qquad p_{jO}=(2/5,3/10).
\]
The row-by-slot core probabilities are
\[
\begin{array}{ccll}
\toprule
\text{construction}&\text{trait slots}&p_i&p_j\\
\midrule
\text{split-opposed}&G&(13/20,9/10)&(11/20,1/10)\\
&H&(1/10,1/10,1/20)&(19/20,9/10,4/5)\\
\text{broad}&G&(1/20,9/10)&(19/20,11/20)\\
&H&(13/20,1/10,1/10)&(9/10,1/10,4/5)\\
\bottomrule
\end{array}
\]
Each row therefore has exactly the same core-probability multiset in the two
constructions, although the fixed $G/H$ slots receive different probability
pairs.

The shared-edge and outside probabilities are also unchanged. Conditional edge
independence therefore makes the complete joint law of
$(A_{ij},C_i,C_j,D_i,D_j)$ identical, where $C_r=X_r+Z_r$.

Finite convolution of the seven external Bernoulli variables in each row,
followed by the common shared-edge mixture, gives the degree-collision
probability
\[
 \Pp(D_i=D_j)=\frac{17057358033563}{10^{14}}>0.
\]
The pair-sign numerators are
\[
 -\frac{108361640879371}{8\cdot10^{14}}
 \quad\text{and}\quad
 \frac{7319767236631}{8\cdot10^{14}},
\]
so the corresponding collision-normalized targets are approximately
$-0.7940975$ and $0.0536408$.  Their signs are opposite despite the identical
collapsed experiment.

Finally, let the two focal factor vectors be the coordinate vectors $e_1,e_2$
and choose intercept $\mu=\logit(1/5)$.  For each alter $k$, set
\[
 u_k=
 \begin{pmatrix}
 \logit(p_{ik})-\mu\\
 \logit(p_{jk})-\mu
 \end{pmatrix}.
\]
Then $\mu+e_1^\top u_k=\logit(p_{ik})$,
$\mu+e_2^\top u_k=\logit(p_{jk})$, and
$\mu+e_1^\top e_2=\logit(1/5)$.  Thus each construction extends to a
symmetric rank-two logistic-Gram probability surface.
\end{proof}

\section{Growing Bivariate Nonredundancy}
\label{app:growing-witness}

\begin{proof}[Proof of Proposition~\ref{prop:growing-nonredundancy}]
Let $n=40m_n$ and repeat the two declared loading allocations at every common
midpoint-quantile activity site. Both allocations contain 14 positive and six
negative loadings in $B$, and six positive and 14 negative loadings in $O$.
They differ only in the allocation of the broad-role loadings between $G$ and
$H$.

Keep the $G/H/O$ trait slots fixed and reallocate the latent signs among the
$G/H$ slots. There is a population bijection between the two arrays that
preserves $B$ versus $O$, activity, and loading sign, although it need not
preserve the finer $G/H$ label. Match each broad/outside focal pair through
this bijection. After deleting the focal nodes, the matched arrays retain the
same multiset of $(\alpha_k,x_k)$ in $B$ and the same multiset in $O$. For
every remaining alter $k$, the joint pair of incident probabilities is
\[
\left(
\Lambda\!\left\{\mu+\alpha_i+\alpha_k+
\frac{c}{\sqrt n}x_ix_k\right\},
\Lambda\!\left\{\mu+\alpha_j+\alpha_k+
\frac{c}{\sqrt n}x_jx_k\right\}
\right).
\]
The multisets of these probability pairs agree separately in $B$ and $O$;
the shared focal-dyad probability also agrees. Multiplication of the Bernoulli
probability-generating functions proves equality of the full collapsed law in
the proposition for every matched focal pair and every $n$. Summing the equal
collision probabilities gives exact equality of the two target denominators;
Lemma~\ref{lem:collision-mass} makes their common order $\Theta(n)$.

The retained $G/H$ allocation changes the conditional bivariate response.
For a standard Gaussian pair $(N_1,N_2)$ of correlation $\rho$, write
\[
\mathfrak g_\rho(s,t)
=\E\{\sgn(N_1+s)\sgn(N_2+t)\}.
\]
Both arrays have opposite- and same-sign focal-pair weights $0.58$ and $0.42$.
Lemma~\ref{lem:marked-conditional}, uniformly over collision midpoints and
focal states, gives the broad response
\[
R_b(6,a)
=0.58\mathfrak g_{-1/3}\!\left(
2.4\sqrt{2v(a)/3},2.4\sqrt{2v(a)/3}\right)
+0.42\mathfrak g_{-1/3}(0,0)>0
\]
by Assumption~\ref{ass:response-margin}. For the split allocation, with
$q(a)=\sqrt{2v(a)/3}$,
\[
R_s(6,a)
=0.58\mathfrak g_{-1/3}\{6q(a),-1.2q(a)\}
+0.42\mathfrak g_{-1/3}(0,0)<0.
\]
The second inequality follows because a negatively correlated Gaussian pair
is negatively quadrant dependent: the two sign maps are increasing while
their marginal means under opposite shifts have opposite signs. Compactness
and $\inf_a v(a)>0$ make both margins uniform.

The exact plus/minus generating-polynomial identity makes the leading
degree-collision kernel common across focal sign types. The two-constraint
bound in Lemma~\ref{lem:collision-mass} also gives, in either design,
\[
\E(M_n^{\BO}-S_{2,n}^{\BO})=O(\sqrt n)=o(\E M_n^{\BO}).
\]
Thus collision normalization and nonzero-mark normalization have the same
signed limits. The two nonadditive arrays have the same collapsed law and
opposite bivariate responses, establishing nonredundancy of the separate group
queries.
\end{proof}

\section{Activity-Profiled Local Response}
\label{app:local-response}

Use a deterministic triangular array
$(\alpha_{i,n},x_{i,n},g_{i,n})$, with $g_{i,n}\in\{G,H,O\}$,
activities in a common compact interval $I$, and $|x_{i,n}|\le L$.
For $A\in\{G,H,O\}$ and $r\in\{0,1,2\}$, define
\[
F_{A,r,n}(a)
=\sum_{i:g_{i,n}=A}x_{i,n}^r\ind\{\alpha_{i,n}\le a\}.
\]

\begin{assumption}[Deterministic marked profile]
\label{ass:marked-profile}
There are bounded piecewise-Lipschitz functions $\nu_{A,r}$, with a common
finite breakpoint set, such that
\[
\Delta_n
=\max_{A,r}\sup_{a\in I}
\left|F_{A,r,n}(a)
-n\int_{I_-}^{a}\nu_{A,r}(t)\,dt\right|
=o(\sqrt n).
\]
Write $\lambda_A=\nu_{A,0}$. Each role has positive limiting share,
$\lambda_A$ is bounded away from zero on the interior of every declared
positive-support interval, and
\[
\nu_{A,2}(a)\lambda_A(a)-\nu_{A,1}(a)^2\ge0
\]
at continuity points with $\lambda_A(a)>0$. Moreover, for every interval
$J\subset I$,
\[
N_{A,n}(J)
=\#\{i:g_{i,n}=A,\ \alpha_{i,n}\in J\}
\le C\{1+n|J|\}.
\]
For the local-response result, impose the common-activity specialization
$\lambda_A(a)=\pi_A f(a)$ for a common positive density $f$.
\end{assumption}

The repeated 40-node midpoint-quantile arrays used in the main text satisfy
Assumption~\ref{ass:marked-profile} with
\[
\nu_{A,r}(a)=\frac{c_{A,r}}{40}f(a),
\]
where $c_{A,r}$ is the within-cell sum of $x^r$ over role $A$; their cumulative
discrepancy is $O(1)$.

At regular points with $\lambda_A(a)>0$, define the deterministic local-moment
ratios
\[
m_A(a)=\frac{\nu_{A,1}(a)}{\lambda_A(a)},
\qquad
s_A(a)=\frac{\nu_{A,2}(a)}{\lambda_A(a)}.
\]
These local moments are determined by the specified node array. Fix a collision midpoint
$\bar a$. For an alter of activity $a$, define
\[
p_{\bar a}(a)=\Lambda(\mu+\bar a+a),
\qquad
v_{\bar a}(a)=p_{\bar a}(a)\{1-p_{\bar a}(a)\}.
\]
\[
V_A=\int_I v_{\bar a}(a)\lambda_A(a)\,da,
\qquad
M_A=\int_I v_{\bar a}(a)\nu_{A,1}(a)\,da,
\]
and
\[
\widetilde M_A=M_A-\frac{V_A}{V_T}M_T,
\qquad
V_T=\sum_A V_A,
\quad M_T=\sum_A M_A.
\]
This is the information projection induced by conditioning on the total count;
under the common-activity specialization it removes a loading profile common
to all three groups. For the queried coordinates define
\[
\Sigma
=2\left[
\operatorname{diag}(V_G,V_H)
-\frac1{V_T}
\begin{pmatrix}V_G\\V_H\end{pmatrix}
\begin{pmatrix}V_G&V_H\end{pmatrix}
\right],
\]
\[
b_A=\frac{\widetilde M_A}{\sqrt{\Sigma_{AA}}},
\qquad
\varrho=\frac{\Sigma_{GH}}
{\sqrt{\Sigma_{GG}\Sigma_{HH}}}.
\]
Let $\lambda_B=\lambda_G+\lambda_H$ and define the broad-group focal moments
\[
m_B(a)=\frac{\nu_{G,1}(a)+\nu_{H,1}(a)}{\lambda_B(a)},
\qquad
s_B(a)=\frac{\nu_{G,2}(a)+\nu_{H,2}(a)}{\lambda_B(a)}.
\]
The collision-midpoint focal contrast is
\[
\chi_{BO}(\bar a)
=s_B(\bar a)+s_O(\bar a)-2m_B(\bar a)m_O(\bar a)
\]
for independently sampled focal roles in the natural $B/O$ frame.

Consider the rank-one local sequence
\[
\logit(p_{ij,c})
=\mu+\alpha_i+\alpha_j+\frac{c}{\sqrt n}x_ix_j.
\]
Let $R(c)$ denote the leading collision-conditioned Gaussian expectation of
the pair-sign kernel in the natural $B/O$ frame. Central symmetry of the
centered Gaussian limit makes every fixed focal-contrast response jointly even
in $c$, so $R'(0)=0$; no symmetry of $x_I-x_J$ is required.

\begin{proof}[Proof of Proposition~\ref{prop:curvature}]
Lemma~\ref{lem:marked-profile-summation} supplies the deterministic local
loading moments. Lemma~\ref{lem:marked-conditional} supplies the Gaussian
limit conditional on the degree collision. At one regular collision midpoint,
the three row-count differences have independent information masses
$2V_G,2V_H,2V_O$ before conditioning. Conditioning on their sum being zero takes the Schur
complement of the total-count direction.  The covariance of the queried
coordinates is therefore
\[
 \Sigma
 =2\left[
 \operatorname{diag}(V_G,V_H)
 -\frac1{V_T}
 \begin{pmatrix}V_G\\V_H\end{pmatrix}
 \begin{pmatrix}V_G&V_H\end{pmatrix}
 \right].
\]
The same conditioning projects the information-weighted loading masses to
\[
 \widetilde M_A=M_A-\frac{V_A}{V_T}M_T.
\]
Consequently, for one focal loading contrast $x_I-x_J$, the standardized
mean shifts in the two queried coordinates are proportional to
$b_G(x_I-x_J)$ and $b_H(x_I-x_J)$.

Let $(N_G,N_H)$ be standard bivariate normal with correlation $\varrho$.
Direct
differentiation at the origin of
\[
 (s,t)\longmapsto
 \E\{\sgn(N_G+s)\sgn(N_H+t)\}
\]
shows that the second directional derivative along $(b_G,b_H)$ is
\[
 \frac{2}{\pi\sqrt{1-\varrho^2}}
 \{2b_Gb_H-\varrho(b_G^2+b_H^2)\}.
\]
Central symmetry of $(N_G,N_H)$ makes the response for every fixed focal
contrast an even function of $c$, so the first derivative vanishes without a
symmetry assumption on the focal-contrast distribution. Averaging its square
at the collision midpoint gives the multiplier $\chi_{BO}(\bar a)$ and hence
the stated $\kappa$.

If the same bounded loading function is added to every group, the
common-activity specialization makes its change in $(M_G,M_H,M_O)$ proportional to
$(V_G,V_H,V_O)$, so the projection removes it exactly.  For the declared
broad profile, $(m_G,m_H,m_O)=(0.4,0.4,-0.4)$; for the declared split profile,
$(m_G,m_H,m_O)=(1,-0.2,-0.4)$.  To see the signs directly, write
$q=\int_I v_{\bar a}(a)f(a)\,da>0$.  The $10{:}10{:}20$ shares give
\[
 (V_G,V_H,V_O)=(q/4,q/4,q/2),
 \qquad \varrho=-1/3,
 \qquad \chi_{BO}(\bar a)=58/25.
\]
Both declared profiles have $M_T=0$.  Substitution into the quadratic form
therefore gives
\[
 \kappa_{\rm broad}=\frac{928q}{1875\pi\sqrt2}>0,
 \qquad
 \kappa_{\rm split}=-\frac{116q}{1875\pi\sqrt2}<0.
\]
These exact signs hold at every regular collision midpoint under the
common-activity specialization in Assumption~\ref{ass:marked-profile}.
Continuity of the curvature gives open sign-preserving classes around these
profiles. The expansion is taken at $c=0$; the broad-profile response at $c=6$ is evaluated in
Lemma~\ref{lem:positive-response}.
\end{proof}

\begin{proof}[Proof of Remark~\ref{rem:curvature-information}]
Let $u=(b_G+b_H)/\sqrt2$ and $v=(b_G-b_H)/\sqrt2$. Then
\[
I_{\mathrm{mean}}
=\frac{u^2}{1+\varrho}+\frac{v^2}{1-\varrho},
\]
and
\[
2b_Gb_H-\varrho(b_G^2+b_H^2)
=(1-\varrho^2)
\left\{\frac{u^2}{1+\varrho}-\frac{v^2}{1-\varrho}\right\}.
\]
Both denominators are positive. Taking absolute values and substituting into
Proposition~\ref{prop:curvature} proves the bound, since
$\chi_{BO}(\bar a)\ge0$. Equality holds on either mode axis; the sum axis
has nonnegative curvature and the difference axis has nonpositive curvature.
\end{proof}

\section{Primitive Open-Class Response and Power}
\label{app:full-pair-power}

\begin{lemma}[Positive natural-frame response at $c=6$]
\label{lem:positive-response}
Under the declared broad design,
\[
 \Pp(x=+1\mid B)=0.7,
 \qquad
 \Pp(x=+1\mid O)=0.3.
\]
The latent decomposition of every observable $B/O$ pair therefore has limiting
opposite-sign and same-sign weights $0.58$ and $0.42$.  If
\[
 v(a)=\lim_n\frac1n\sum_l\Lambda'(\mu+a+\alpha_l),
\]
then its collision-conditioned response obeys
\[
 R_{\rm broad}^{\BO}(c,a)
 =0.58g_{-1/3}\{0.4c\sqrt{2v(a)/3}\}
  +0.42g_{-1/3}(0).
\]
Under Assumption~\ref{ass:response-margin},
\[
 \inf_aR_{\rm broad}^{\BO}(6,a)>0.
\]
\end{lemma}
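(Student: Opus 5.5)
The argument has three parts. First compute the latent mixing weights. Then identify the conditional Gaussian marks for each focal sign type. Finally reduce the infimum to Assumption~\ref{ass:response-margin}.

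\emph{Mixing weights.} In each 40-node cell the broad allocation $(7,3,7,3,6,14)$ gives $14$ positive loadings out of $20$ in $B$ and $6$ out of $20$ in $O$. Midpoint-quantile repetition gives $\Pp(x=+1\mid B)=0.7$ and $\Pp(x=+1\mid O)=0.3$ at every activity site. The limiting role intensities are common multiples of $f$. An observable $B/O$ pair near a collision midpoint therefore has independent latent signs with these probabilities. The opposite-sign weight is $0.7\cdot0.7+0.3\cdot0.3=0.58$ and the same-sign weight is $0.42$. Lemma~\ref{lem:collision-mass} and the exact plus/minus generating-polynomial identity used in Appendix~\ref{app:growing-witness} make the leading collision kernel common to both sign types. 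Hence the collision-normalized response is the $0.58/0.42$ mixture of the two conditional responses.

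\emph{Conditional marks.} Fix a focal pair with loadings $x_I,x_J$ and collision midpoint $a$. I will apply Lemma~\ref{lem:marked-conditional}, taking the degree difference as the one-dimensional lattice constraint and the $G/H$ count differences as marks. The anchors are $(1,0,0)$, $(1,1,0)$, $(1,0,1)$ as verified after that lemma. At $c=0$ the unconditioned covariance is $2\operatorname{diag}(V_G,V_H,V_O)$ with $(V_G,V_H,V_O)=(v/4,v/4,v/2)$. Conditioning on the total gives the Schur complement, and a short computation gives correlation $-1/3$ and conditional variances $\Sigma_{GG}=\Sigma_{HH}=2\cdot\tfrac{v}{4}\cdot\tfrac34=3v/8$. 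The rank-one term shifts the mean of the $G$ difference by $c(x_I-x_J)\,n^{-1}\sum_{k\in G}\Lambda'(\cdot)x_k/\sqrt n\cdot\sqrt n$, giving a limiting mean $c(x_I-x_J)M_G$. Here $M_G=\int v_{\bar a}\nu_{G,1}=0.4\cdot v/4=v/10$, and similarly $M_H=v/10$ and $M_O=-0.4\cdot v/2=-v/5$. Thus $M_T=0$, so the projection in Appendix~\ref{app:local-response} leaves these unchanged.

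For an opposite-sign pair, $|x_I-x_J|=2$. The standardized shift is $2c\cdot(v/10)/\sqrt{3v/8}=0.4c\sqrt{2v/3}$, with the same sign in both coordinates. By the central symmetry $g_\varrho(-\theta)=g_\varrho(\theta)$, the sign of $x_I-x_J$ is irrelevant. Same-sign pairs have zero shift and contribute $g_{-1/3}(0)$. Weak convergence of the conditional law transfers to the bounded sign kernel, since the Gaussian limit puts no mass on the coordinate axes. The $O(\sqrt n)$ tie defect from Lemma~\ref{lem:collision-mass} then gives the displayed formula for $R^{\BO}_{\rm broad}(c,a)$.

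\emph{Positivity and main obstacle.} At $c=6$ the shift is $2.4\sqrt{2v(a)/3}$. The map $\theta\mapsto g_{-1/3}(\theta)$ is increasing for $\theta\ge0$, because both sign maps are increasing in the common shift and the quadrant probabilities move monotonically. Therefore $R^{\BO}_{\rm broad}(6,a)$ is bounded below by its value at $v_*$. That value is positive by Assumption~\ref{ass:response-margin}, uniformly in $a$. Here $v(a)$ coincides with the $v$ of the assumption under the midpoint array.

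I expect the main obstacle to be this monotonicity of $g_{-1/3}$ in $\theta\ge0$. I would verify it by differentiation. The derivative equals $2\{\phi(\theta)(2\Phi_{\text{cond}}-1)\}$ summed over the two coordinates, where $\Phi_{\text{cond}}$ is the conditional normal cdf of one coordinate given the other equals $-\theta$. This is $2\Phi\{\theta(1+\varrho)/\sqrt{1-\varrho^2}\}-1\ge0$ for $\varrho>-1$. If this calculation fails, I would instead substitute $\theta\ge2.4\sqrt{2v_*/3}$ using monotonicity of a direct closed form via Sheppard's formula.
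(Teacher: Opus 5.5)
Your proposal is correct and follows essentially the paper's argument. It uses the $0.58/0.42$ latent mixture with a leading collision kernel common to both focal sign types, zero shift for same-sign pairs, and, via Lemma~\ref{lem:marked-conditional} and the total-count Schur complement, the common standardized shift $0.4c\sqrt{2v(a)/3}$ for opposite-sign pairs. That projection also removes the additive activity gap, which corresponds to the paper's midpoint-factorization step. Your explicit monotonicity of $g_{-1/3}$ on $[0,\infty)$ fills in a step the paper leaves implicit when passing from $v_*$ to every $a$. One correction: the conditional probability in your derivative should be $\Phi\{\theta(1-\varrho)/\sqrt{1-\varrho^2}\}$ rather than $\Phi\{\theta(1+\varrho)/\sqrt{1-\varrho^2}\}$, but this does not change the sign conclusion.
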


\begin{proof}
The opposite-sign weight is
$0.7(1-0.3)+(1-0.7)0.3=0.58$; the remaining weight is $0.42$.  For same-sign
focal pairs, the residual exponential tilt cancels from the row difference,
so the response tends to $g_{-1/3}(0)$.  For opposite-sign focal pairs, exact
midpoint factorization cancels the additive activity gap on the equal-degree
slice.  The broad loading profile then gives the common standardized shift in
the display.  The cellwise $20/20$ loading balance makes the leading collision
mass common across focal sign states, and Lemma~\ref{lem:marked-conditional}
transfers the Gaussian mixture uniformly to the lattice array. The explicit margin
assumption gives strict positivity at $c=6$. The mixture decomposes the
response by latent sign; the analyst uses every declared $B/O$ pair and never observes
the latent signs.
\end{proof}

\subsection{Group-specific marked kernels}

For $\vartheta\in\Theta_\varepsilon$, a local pair with slot types
$r\in B$, $s\in O$, activity anchor $a$, and scaled gap
$t=\sqrt n(\alpha_s-\alpha_r)$ has loading contrast
$d_{rs}=x_r-x_s$. Define
\[
Q_g(\vartheta,a)
=\frac1L\sum_{u:g(u)=g}
\int_0^1\Lambda'\{\mu+a+m_g(v)\}\,dv,
\]
\[
W_g(\vartheta,a)
=\frac1L\sum_{u:g(u)=g}x_u
\int_0^1\Lambda'\{\mu+a+m_g(v)\}\,dv,
\]
and write $Q=\sum_gQ_g$, $W=\sum_gW_g$.

\begin{lemma}[Primitive marked kernels]
\label{lem:primitive-kernels}
On a sufficiently small closed primitive neighborhood of $\vartheta_0$, the
degree constraint is recentered by
\[
\xi_{n,\vartheta}
=-\frac{c\,d_{rs}W}{2Q\sqrt n}+O(n^{-1}).
\]
After this correction, the limiting conditional mark mean and covariance are
\[
m_{GH}(\vartheta,a;r,s)
=c\,d_{rs}\left[
(W_G,W_H)-\frac WQ(Q_G,Q_H)
\right],
\]
\[
\Gamma(\vartheta,a)
=2\left[
\operatorname{diag}(Q_G,Q_H)
-\frac{(Q_G,Q_H)^\top(Q_G,Q_H)}Q
\right].
\]
If $\Psi(m,\Gamma)$ denotes the Gaussian sign-product expectation for a
bivariate normal vector with mean $m$ and covariance $\Gamma$, then
\[
J_\vartheta(r,s,a,t)
=\frac1{\sqrt{4\pi Q}}
\exp\left\{-\frac{(c\,d_{rs}W-tQ)^2}{4Q}\right\},
\]
\[
H_\vartheta(r,s,a,t)
=J_\vartheta(r,s,a,t)
\Psi\{m_{GH}(\vartheta,a;r,s),\Gamma(\vartheta,a)\}.
\]
The two kernels are jointly continuous in
$(\vartheta,r,s,a,t)$, with moving support endpoints interpreted through the
slot intensities, and satisfy
\[
|H_\vartheta(r,s,a,t)|+J_\vartheta(r,s,a,t)
\le Ce^{-c_0t^2}.
\]
\end{lemma}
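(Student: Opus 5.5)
We prove Lemma~\ref{lem:primitive-kernels} by running the midpoint-tilt and marked-conditioning argument of Lemmas~\ref{lem:full-pair-kernel} and~\ref{lem:positive-response}. This time the activity maps are group-specific, the loadings are nonbinary, and there is an extra $n^{-1/2}$ rank-one term. All expansions are made uniform over a small closed neighborhood of $\vartheta_0$.

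\emph{Step 1: centering the degree difference.} Fix a $B/O$ pair $(i,j)$ with slot types $r,s$, anchor $a$ and scaled gap $t$, and delete the focal nodes. For each alter $\ell$, expand the incident logits about the common midpoint $\mu+a+\alpha_\ell$. To first order the Bernoulli mean difference $p_{i\ell}-p_{j\ell}$ is
\[
\Lambda'(\mu+a+\alpha_\ell)\bigl\{-t/\sqrt n+c\,d_{rs}x_\ell/\sqrt n\bigr\},
\]
with $O(n^{-1})$ remainders that are uniform by the compact predictors and the $C^{1,1}$ ball. Midpoint quadrature, applied separately over each group's cells and slots, turns the sums over alters into $nQ_g$ and $nW_g$ plus $O(1)$. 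So the mean of $D_i-D_j$ is $\sqrt n(c\,d_{rs}W-tQ)+O(1)$, and its variance is $2nQ\{1+o(1)\}$, since each row contributes $nQ$.

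\emph{Step 2: local limit and the kernel $J_\vartheta$.} The unimodular anchors for the degree difference are the same as in Lemma~\ref{lem:full-pair-kernel}, and their probabilities are uniformly positive on the neighborhood. Lemma~\ref{lem:uniform-lattice} then gives
\[
\sqrt n\,\Pp(D_i=D_j)\to(4\pi Q)^{-1/2}\exp\{-(c\,d_{rs}W-tQ)^2/(4Q)\},
\]
which is $J_\vartheta$. The tie defect in Lemma~\ref{lem:collision-mass} is $O(n^{-1})$, so the same limit holds for $\sqrt n\,\E h_{ij}^2$. The recentering $\xi_{n,\vartheta}$ is the $O(n^{-1/2})$ constraint tilt that moves this mean onto the equal-degree slice; it is obtained by solving the linearized tilt equation with the order-$n$ degree Hessian, as in the recentering paragraph after Lemma~\ref{lem:marked-conditional}. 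The Gaussian envelope $Ce^{-c_0t^2}$ follows because $Q$ is bounded below, $|c\,d_{rs}W|$ is bounded, and the global tilted bounds of Lemma~\ref{lem:uniform-lattice} apply.

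\emph{Step 3: the marks and the kernel $H_\vartheta$.} Treat the $G$ and $H$ count differences as marks. Before conditioning they have scaled means $\sqrt n(c\,d_{rs}W_g-tQ_g)$ and joint covariance $2$ times the matrix with rows $(Q,Q_G,Q_H)$, $(Q_G,Q_G,0)$, $(Q_H,0,Q_H)$. Conditioning on the degree slice via the Schur complement of Lemma~\ref{lem:marked-conditional} gives covariance $\Gamma(\vartheta,a)$. It gives conditional mean
\[
c\,d_{rs}W_g-tQ_g-\frac{Q_g}{Q}(c\,d_{rs}W-tQ),
\]
and the $t$ terms cancel exactly, leaving $m_{GH}$. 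Note that these marks are $O(1)$ on the unscaled count scale, not $O(\sqrt n)$, because of the $c/\sqrt n$ scaling. The sign product is a bounded discontinuous function of the marks, but $\Gamma$ is nonsingular by Lemma~\ref{lem:full-pair-kernel}, so weak convergence plus a Gaussian-boundary argument gives $\E(\text{sign product}\mid D_i=D_j)\to\Psi(m_{GH},\Gamma)$. Multiplying by the collision mass from Step 2 gives $H_\vartheta=J_\vartheta\Psi$.

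\emph{Step 4: continuity.} $Q_g$ and $W_g$ are continuous in $(m_g,x,\mu)$ under $C^1$ perturbations, and $\Psi$ is continuous in $(m,\Gamma)$ on the nonsingular set. The slot intensities carry the moving support endpoints.

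The main obstacle is uniformity. We need the lattice local limit with uniformly positive anchors, the complex-frequency minor-arc bounds, and the $O(n^{-1})$ quadrature error to hold simultaneously over the whole primitive neighborhood, including near the support endpoints of $m_g([0,1])$, where cells of different groups stop overlapping. The plan there is to rely on the uniformity clauses of Lemmas~\ref{lem:uniform-lattice} and~\ref{lem:marked-conditional} and to control root-$n$ endpoint neighborhoods by the interval-counting bound, as in Lemma~\ref{lem:full-pair-kernel}.
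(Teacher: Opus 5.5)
Your proposal is correct and follows essentially the paper's route: a uniform lattice local limit for the degree difference, Schur-complement conditioning of the $G/H$ marks, and continuity of the explicit kernels through $Q_g$, $W_g$, $\Psi$. The one real difference is that you let the activity gap $t$ cancel inside the Gaussian conditioning formula, whereas the paper removes it first by the exact midpoint factorization; this is harmless for the kernels, but two small corrections apply. First, the displayed $\xi_{n,\vartheta}$ is the recentering relative to that midpoint law, since a tilt computed from your uncentered mean $\sqrt n(c\,d_{rs}W-tQ)$ would carry an extra $t/(2\sqrt n)$. Second, the mark mean shifts are $O(\sqrt n)$ on the raw count scale and become $O(1)$ only after $\sqrt n$-standardization, not the reverse as you wrote.
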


\begin{proof}
Under group-specific profiles, the unweighted gauge
$\one^\top x=0$ need not imply $W=0$. The external residual tilt for one
increment is $cd_{rs}x_\ell/(2\sqrt n)$, so its first-order contribution to
the degree constraint is $cd_{rs}W\sqrt n/2$. The constraint Hessian is
$Qn+O(1)$, which gives the displayed information-weighted correction. The
remaining $O(n^{-1})$ adjustment restores the bounded focal-edge and lattice
offset exactly.

The normalized covariance of the degree constraint and the two group marks is
\[
2\begin{pmatrix}
Q&Q_G&Q_H\\Q_G&Q_G&0\\Q_H&0&Q_H
\end{pmatrix}.
\]
Its determinant is $8Q_GQ_HQ_O$, uniformly positive on the compact
neighborhood, and conditioning gives $\Gamma$. The same $O$, $G$, and $H$
unimodular anchors as in Lemma~\ref{lem:full-pair-kernel} control the central
and minor Fourier arcs. A bounded real mark tilt of order $n^{-1/2}$ changes
the constraint mean by $O(\sqrt n)$ and is recentered by another
$O(n^{-1/2})$ constraint correction without losing the anchors or covariance
bounds. Lemmas~\ref{lem:uniform-lattice}--\ref{lem:marked-conditional} then
give the two kernel formulas, including moment convergence.

Focal deletion and internal-edge conditioning alter means and covariance by
$O(1)$. The fixed $C^{1,1}$ envelope gives uniform Riemann-sum convergence and
$L^1$ continuity of the zero-extended slot intensities at moving endpoints.
The local change-of-measure affinity supplies the Gaussian envelope, so
dominated convergence gives joint kernel continuity.
\end{proof}

\begin{lemma}[Primitive positive-margin neighborhood]
\label{lem:primitive-margin}
There exist $\varepsilon_0>0$ and $\gamma_0>0$ such that
\[
\inf_{\vartheta\in\Theta_{\varepsilon_0}}
\frac{\mathcal M(\vartheta)}{\mathcal C(\vartheta)}
\ge\gamma_0>0,
\]
and $\inf_{\vartheta\in\Theta_{\varepsilon_0}}\mathcal C(\vartheta)>0$.
\end{lemma}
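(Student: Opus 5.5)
The argument is continuity plus compactness around the canonical point $\vartheta_0$.

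\emph{Step 1: the value at $\vartheta_0$.} At $\vartheta_0$ the three profiles coincide with $m_0=F^{-1}$, so every slot intensity reduces to $\lambda_{r,\vartheta_0}(a)=f(a)/L$ on $I$. The loading mass then vanishes: $W=0$, because the loadings are binary with $20$ positive and $20$ negative per cell and the common profile makes every slot's integral equal. Hence the recentering $\xi_{n,\vartheta}$ in Lemma~\ref{lem:primitive-kernels} reduces to $O(n^{-1})$, and $J_{\vartheta_0}(r,s,a,t)=(4\pi Q)^{-1/2}e^{-t^2Q/4}$. Integrating over $t$ gives
\[
\int J_{\vartheta_0}\,dt=Q(a)^{-1}.
\]
From the $10{:}10{:}20$ shares one gets $\Gamma$ with correlation $-1/3$. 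The mark mean $m_{GH}$ is the projected loading shift, which for opposite-sign pairs has the form used in Lemma~\ref{lem:positive-response}. Summing over slot pairs $r\in B$, $s\in O$ reproduces the $0.58/0.42$ mixture. This yields
\[
\mathcal M(\vartheta_0)=\int \frac{f(a)^2}{Q(a)}\,\frac{|B||O|}{L^2}\,R^{\BO}_{\rm broad}(6,a)\,da,
\qquad
\mathcal C(\vartheta_0)=\int \frac{f(a)^2}{Q(a)}\,\frac{|B||O|}{L^2}\,da.
\]
By Lemma~\ref{lem:positive-response} and Assumption~\ref{ass:response-margin}, $\inf_a R^{\BO}_{\rm broad}(6,a)=:\gamma_*>0$. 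Since $f$ is bounded above and below and $Q$ is bounded above and below on compact predictors, both integrals are positive and finite, and $\mathcal M(\vartheta_0)/\mathcal C(\vartheta_0)\ge\gamma_*$.

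\emph{Step 2: continuity of the constants.} I show that $\vartheta\mapsto\mathcal M(\vartheta)$ and $\vartheta\mapsto\mathcal C(\vartheta)$ are continuous at $\vartheta_0$ in the primitive topology: $C^1$ norm on the maps and sup norm on $(x,c,\mu)$.
\begin{itemize}
\item \emph{Kernels.} By Lemma~\ref{lem:primitive-kernels}, $H_\vartheta$ and $J_\vartheta$ are jointly continuous and dominated by $Ce^{-c_0t^2}$. The constants $Q_g$ and $W_g$ are continuous in $(m_g,x,\mu)$ by dominated convergence, since $\Lambda'$ is smooth.
\item \emph{Mark mean and sign map.} The mark mean $m_{GH}$ is a continuous function of these constants. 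The map $\Psi(m,\Gamma)$ is continuous wherever $\Gamma$ is nonsingular, and nonsingularity holds uniformly because $\det$ of the full covariance equals $8Q_GQ_HQ_O>0$.
\item \emph{Slot intensities.} $\lambda_{r,\vartheta}$ depends on $m_g'\circ m_g^{-1}$ together with the moving support $m_g([0,1])$. $C^1$-closeness with $m_g'\ge c_m$ makes $m_g^{-1}$ and $m_g'\circ m_g^{-1}$ converge uniformly on the common interior. The endpoint strips have Lebesgue measure $O(\varepsilon)$, which gives $L^1$ continuity of $\lambda_r\lambda_s$ on the bounded region.
\end{itemize}
Writing
\[
\mathcal M(\vartheta)-\mathcal M(\vartheta_0)=\int\!\!\int(\lambda_r\lambda_s-\lambda_r^0\lambda_s^0)H_\vartheta+\int\!\!\int\lambda_r^0\lambda_s^0(H_\vartheta-H_{\vartheta_0}),
\]
the first term is bounded by $L^1$ continuity times the sup of $\int|H|\,dt$. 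The second term tends to zero by dominated convergence, with the domination $Ce^{-c_0t^2}$ times the bounded intensities. The same decomposition applies to $\mathcal C$.

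\emph{Step 3: the neighborhood.} Choose $\varepsilon_0$ so that $|\mathcal M(\vartheta)-\mathcal M(\vartheta_0)|$ and $|\mathcal C(\vartheta)-\mathcal C(\vartheta_0)|$ are each below a small fraction of $\mathcal C(\vartheta_0)\gamma_*$ on $\Theta_{\varepsilon_0}$. This gives $\inf\mathcal C\ge\mathcal C(\vartheta_0)/2>0$ and a ratio bounded below by, for instance, $\gamma_0=\gamma_*/4$.

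\emph{Main obstacle.} The hard part is the uniformity of the continuity argument on an open rather than compact set. The neighborhood $\Theta_{\varepsilon_0}$ is infinite-dimensional and is not compact in $C^1$. I would therefore avoid any compactness argument and bound the modulus of continuity directly, in terms of $\max_g\|m_g-m_0\|_{C^1}$, $\|x-x^0\|_\infty$, $|c-6|$ and $|\mu|$.

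For this, note that each ingredient is controlled by an explicit inequality:
\begin{itemize}
\item $Q_g$ and $W_g$ are Lipschitz in these distances through $\|\Lambda''\|_\infty$.
\item $\Psi$ is Lipschitz on a compact set of means and of covariances with eigenvalues bounded away from zero.
\item The intensity error is $O(\varepsilon)$ in $L^1$, with constants fixed by $c_m$, $C_m$ and $L_m$.
\end{itemize}
The remaining delicate point is the moving endpoints. Near these, $\lambda_r$ may jump, so I would treat the endpoint strips separately using only the bound $Ce^{-c_0t^2}$ and their $O(\varepsilon)$ width.
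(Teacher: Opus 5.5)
Your proposal is correct and follows the paper's own route: the ratio $\mathcal M(\vartheta_0)/\mathcal C(\vartheta_0)$ is positive because the collision kernel does not depend on focal sign type (your $W=0$ computation) and because of the $0.58/0.42$ mixture in Lemma~\ref{lem:positive-response}, and $\mathcal M$ and $\mathcal C$ are then continuous in the primitive topology by Lemma~\ref{lem:primitive-kernels}, the Gaussian envelope, and $L^1$ continuity of the moving-support intensities. The compactness concern in your last paragraph is not a real obstacle, since continuity at the single point $\vartheta_0$ already gives uniform closeness on a small ball, so the explicit modulus of continuity is optional.
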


\begin{proof}
At $\vartheta_0$, the exact plus/minus generating-polynomial identity makes
the leading collision kernel common across focal sign types. Lemma
\ref{lem:positive-response} and
Assumption~\ref{ass:response-margin} therefore imply
\[
\mathcal M(\vartheta_0)/\mathcal C(\vartheta_0)>0.
\]
The $0.58/0.42$ mixture is used only at $\vartheta_0$.
Lemma~\ref{lem:primitive-kernels}, the Gaussian envelope, the
finite slot family, and $L^1$ continuity of the moving-support intensities
make $\mathcal M$ and $\mathcal C$ continuous in the primitive $C^1$ topology.
Positive broad/outside overlap makes $\mathcal C(\vartheta_0)>0$ and preserves
a positive lower bound on a sufficiently small closed neighborhood.
Continuity supplies the stated smaller relative-open neighborhood in the
primitive parameter coordinates and its positive response margin.
\end{proof}

\subsection{Uniform mean, variance, and clipped power}

\begin{lemma}[Open-class numerator bounds]
\label{lem:open-numerator}
Under either joint respondent-design/network law,
\[
\inf_{\vartheta\in\Theta_{\varepsilon_0}}
\frac1n\E_\vartheta U_n^{\mathsf S_n}\ge\kappa_0>0
\]
for all sufficiently large $n$, and
\[
\sup_{\vartheta\in\Theta_{\varepsilon_0}}
\Var_\vartheta(U_n^{\mathsf S_n})=O(n).
\]
\end{lemma}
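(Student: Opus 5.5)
For the mean bound, I will first condition on the respondent frame and write
\[
\E_\vartheta(U_n^{\mathsf S}\mid \mathsf S)=\sum_{i<j}R_iR_jA_{ij}^{\BO}\E_\vartheta h_{ij}.
\]
Graph independence makes the pair inclusion probability equal to a common factor $\rho_n^{(2)}\to\rho^2$ for every eligible pair, under either design. Hence
\[
\E_\vartheta U_n^{\mathsf S_n}=\rho_n^{(2)}\E_\vartheta K_n^{\BO}.
\]
So it suffices to show that $n^{-1}\E_\vartheta K_n^{\BO}$ converges, uniformly over $\Theta_{\varepsilon_0}$, to a positive multiple of $\mathcal M(\vartheta)$.

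I will group the pairs by slot types $(r,s)\in B\times O$ and by cell indices. Within each group I use $\sqrt n\,\E_\vartheta h_{ij}=H_\vartheta(r,s,a,t)+o(1)$, uniformly on bounded $t$, from Lemma~\ref{lem:primitive-kernels}. Outside bounded $t$, the global Gaussian envelope $|\E_\vartheta h_{ij}|\le Cn^{-1/2}e^{-cn(\alpha_i-\alpha_j)^2}$ controls the remaining terms. This envelope follows from Lemma~\ref{lem:uniform-lattice} with tilts, exactly as in Lemma~\ref{lem:collision-mass}, and is uniform over the compact predictor class.

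The cross-role double sum is then handled by the last display of Lemma~\ref{lem:marked-profile-summation}, in its midpoint-quantile form with the slot intensities $\lambda_{r,\vartheta}$. This turns
\[
n^{-3/2}\sum\sqrt n\,\E_\vartheta h_{ij}\quad\text{into}\quad \iint\lambda_r\lambda_s H_\vartheta\,dt\,da,
\]
so that $n^{-1}\E_\vartheta K_n^{\BO}\to\mathcal M(\vartheta)$. Uniformity in $\vartheta$ comes from three facts: the kernels are jointly continuous and dominated by $Ce^{-c_0t^2}$, the $C^{1,1}$ ball gives counting discrepancy $O(1)$ uniformly, and the closure of $\Theta_{\varepsilon_0}$ is compact in the relevant topology. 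On that compact set pointwise convergence upgrades to uniform convergence by equicontinuity. Lemma~\ref{lem:primitive-margin} then gives $\mathcal M\ge\gamma_0\,\mathcal C\ge\gamma_0\inf\mathcal C>0$, which yields $\kappa_0$.

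For the variance I will use the law of total variance over the frame. The first term is $\E\,\Var_\vartheta(U_n^{\mathsf S}\mid \mathsf S)$. I will bound it by the same three-class decomposition as in Lemma~\ref{lem:network-variance}:
\begin{itemize}[leftmargin=*]
\item same-pair terms contribute $O(n)$ through the one-collision envelope (B.1);
\item shared-row terms contribute $O(n)$ through the triple-collision bound of Lemma~\ref{lem:triple-collision} and the packing sum;
\item four-distinct terms are handled by conditioning on the four cross dyads, as in Lemma~\ref{lem:state-range}.
\end{itemize}
Each of these lemmas is stated uniformly for $c$ in a compact set, and its constants depend only on compact predictors, anchor probabilities, and packing. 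I will check that they transfer verbatim to group-specific $C^{1,1}$ profiles and nonbinary centered loadings. The one place this needs care is the bounded-local-$c$ step of Lemma~\ref{lem:state-range}, which used cellwise $20/20$ sign balance. There I would replace the balance by the information-weighted recentering $\xi_{n,\vartheta}$ of Lemma~\ref{lem:primitive-kernels}, which keeps the affinity cross term bounded. The second term, $\Var(\E_\vartheta(U_n^{\mathsf S}\mid\mathsf S))$, is bounded exactly as in Lemma~\ref{lem:respondent-transfer}. Under Bernoulli sampling it is controlled by $\sum_i(\sum_j|\E h_{ij}|)^2=O(n)$. Under SRS it is $O(n^{-1})\{\sum|\E h_{ij}|\}^2=O(n)$, using $\sum_{i<j}|\E h_{ij}|=O(n)$ from the envelope.

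The step I expect to be the main obstacle is the four-distinct covariance bound uniformly off the canonical design. The Lemma~\ref{lem:state-range} argument relied on exact midpoint factorization and cellwise loading balance, and neither holds for general $\vartheta\in\Theta_{\varepsilon_0}$. I would first try the ANOVA route of Lemma~\ref{lem:full-pair-nuisance} instead. That route represents each Boolean coefficient through Lemma~\ref{lem:full-pair-complete-response} and the tilt charges in (B.24), and it needs only the minorization (B.16) and strong convexity. Both hold uniformly on the compact class, so it gives support-$r$ coefficients of order $n^{-(r+1)/2}$ and hence an aggregate of $O(n)$ without any balance condition.
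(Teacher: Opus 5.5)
Your proposal is correct and follows essentially the paper's own argument. The mean reduces, through the common pair-inclusion factor, to a fixed-cell Riemann sum of the kernels of Lemma~\ref{lem:primitive-kernels}, which Lemma~\ref{lem:primitive-margin} bounds below; the variance splits into same-pair, shared-row and four-distinct network classes plus respondent-mask terms, and your law-of-total-variance split is the same identity the paper writes pairwise. Your ANOVA fallback for the four-distinct class, via Lemma~\ref{lem:full-pair-complete-response} and the tilt charges, just re-derives the $n^{-1}$ cross-state difference bound that the paper feeds into the range-covariance inequality, so it makes the paper's asserted uniformity explicit rather than taking a different route.
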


\begin{proof}
The fixed-cell Riemann sum and Lemma~\ref{lem:primitive-kernels} give
\[
\E_\vartheta U_n^{\mathsf S_n}
=\rho_{2,n}\{\mathcal M(\vartheta)+o(1)\}n
\]
uniformly on the primitive ball, where
$\rho_{2,n}=\rho_n^2$ for Bernoulli respondents and
$\rho_{2,n}=(r_n)_2/(n)_2$ for fixed-size SRS. Lemma
\ref{lem:primitive-margin} gives the mean lower bound.

For the variance, same-pair terms sum to $O(n)$ by the one-constraint
collision envelope. For pairs sharing one respondent, the two-constraint
bound
\[
\Pp_\vartheta(D_i-D_j=d_1,D_i-D_k=d_2)
\le Cn^{-1}e^{-c_0n\{(\alpha_i-\alpha_j)^2+
(\alpha_i-\alpha_k)^2\}}
\]
and two local packing sums give $O(n)$. For four distinct respondents,
conditioning on the finite cross-edge state and applying the lattice
first-difference bound gives
\[
|m_e(b)-m_e(b')|
\le Cn^{-1}e^{-c_0n(\alpha_i-\alpha_j)^2}.
\]
The range-covariance inequality and two Gaussian packing sums again give
$O(n)$. Compact local tilts and the unimodular anchors make these bounds
uniform over the primitive neighborhood.

Finally, for respondent-pair masks $S_e$,
\[
\Cov(S_eh_e,S_fh_f)
=\pi_{e\cup f}\Cov(h_e,h_f)
+(\pi_{e\cup f}-\pi_e\pi_f)\E h_e\E h_f.
\]
Disjoint Bernoulli inclusions factor; under fixed-size SRS their covariance is
$O(n^{-1})$. The one-pair envelope gives
$\sup_i\sum_j|\E h_{ij}|=O(1)$, so the added shared and disjoint mask terms
also sum to $O(n)$.
\end{proof}

\begin{proof}[Proof of Theorem~\ref{thm:full-pair-open-power}]
The response statement is Lemma~\ref{lem:primitive-margin}. By
Lemma~\ref{lem:open-numerator} and Chebyshev's inequality,
\[
\inf_{\vartheta\in\Theta_{\varepsilon_0}}
\Pp_\vartheta\{U_n^{\mathsf S_n}\ge\kappa_0n/2\}\longrightarrow1.
\]
The clipped denominator is deterministically at most $nb_n$. On this event,
\[
\mathcal T_{n,\dagger}^S
\ge\frac{\kappa_0}{2}\sqrt{\frac n{b_n}}
\longrightarrow\infty
\]
because $b_n=\log(n+e)=o(n)$. Thus the uniform mean and variance bounds,
together with the deterministic clipping cap, establish consistency under
the joint respondent-design/network law.
\end{proof}

\begin{proof}[Proof of Corollary~\ref{cor:thinning-power}]
Lemma~\ref{lem:positive-response} and the uniform local-limit transfer give a
constant $\gamma_6>0$ such that
\[
 \liminf_n
 \frac{\E_6K_{R,n}^{\BO}}{\E_6S_{2,R,n}^{\BO}}
 \ge\gamma_6.
\]
Lemma~\ref{lem:collision-mass} gives
\[
\E(M_n^{\BO}-S_{2,n}^{\BO})=O(\sqrt n),
\]
while both denominators are order $n$. Equal-probability respondent sampling
contributes the same inclusion factor to every eligible pair. Hence
\[
 \frac{\E K_{R,n}^{\BO}}{\E S_{2,R,n}^{\BO}}
 =\frac{\E K_n^{\BO}}{\E S_{2,n}^{\BO}},
 \qquad
 \frac{\E S_{2,n}^{\BO}}{\E M_n^{\BO}}\longrightarrow1,
\]
which proves the target statement after reducing $\gamma_6$ if necessary.

Lemmas~\ref{lem:collision-mass}, \ref{lem:network-variance}, and
\ref{lem:respondent-transfer} give an order-$n$ nonzero-mark mass,
$S_{2,R,n}^{\BO}/n$ bounded away from zero, and
$\Var_6(K_{R,n}^{\BO})=O(n)$.  Lemma~\ref{lem:mask-clt} supplies the
conditional mask normal approximation.  The standardized positive center is
\[
 \frac{\pi_n\E_6K_{R,n}^{\BO}}
 {\{\pi_n\E_6S_{2,R,n}^{\BO}\}^{1/2}}
 \asymp \sqrt{n\pi_n}=n^{1/4}\longrightarrow\infty,
\]
whereas the centered network fluctuation is $o_p(1)$ by the same variance
ratio used under the null.  The rejection probability therefore tends to
one.
\end{proof}

\bibliographystyle{apalike}
\bibliography{ARD}

\end{document}

%% file: pair_sign_minimal_t5_finite_evidence_table.tex
\begin{table}[t]
\centering
\caption{Finite-sample behavior of the full-pair and thinning procedures. Panel A reports ranges across Bernoulli and fixed-size-SRS respondent sampling. Panel B reports paired comparisons at the largest prespecified design.}
\label{tab:minimal-t5-finite}
\scriptsize
\setlength{\tabcolsep}{2.5pt}
\textbf{Panel A: Rejection progression (percent)}\par\smallskip
\begin{tabular}{rrcccccc}
\toprule
$n$ & Resp. & N0-C full & N0-G full & P1-C full & P1-G full & P1-C thin & P1-G thin \\
\midrule
80 & 0.35 & 0.7--2.0 & 0.7--2.3 & 12.7--12.7 & 6.7--9.7 & 4.7--5.7 & 7.3--7.7 \\
80 & 0.60 & 0.0--0.7 & 0.7--1.3 & 17.3--19.0 & 14.7--14.7 & 7.0--7.3 & 5.3--8.0 \\
160 & 0.35 & 1.0--1.7 & 0.3--0.7 & 14.0--16.3 & 10.0--13.7 & 7.3--8.3 & 7.7--8.0 \\
160 & 0.60 & 0.0--0.3 & 0.0--0.3 & 17.3--24.0 & 9.0--12.0 & 7.0--8.7 & 5.7--6.7 \\
320 & 0.35 & 0.3--1.3 & 0.0--0.3 & 14.3--17.7 & 12.3--16.7 & 7.3--9.3 & 6.3--8.0 \\
320 & 0.60 & 0.0--0.0 & 0.0--0.0 & 29.0--29.0 & 13.3--17.0 & 8.0--9.3 & 4.3--4.7 \\
640 & 0.35 & 0.0--0.0 & 0.3--0.3 & 19.3--22.3 & 14.0--15.3 & 4.3--5.3 & 5.3--6.7 \\
640 & 0.60 & 0.0--0.0 & 0.0--0.0 & 43.7--45.3 & 18.7--24.7 & 5.7--8.0 & 7.0--7.7 \\
\bottomrule
\end{tabular}
\par\medskip
\textbf{Panel B: Paired comparison at $n=640$ (percent or percentage points)}\par\smallskip
\begin{tabular}{llrccc}
\toprule
P1 design & Respondent sampling & Resp. & Full & Thin & Full $-$ thin [95\%] \\
\midrule
common & Bernoulli & 0.35 & 22.3 & 5.3 & +17.0 [+12.1, +21.9] \\
common & Bernoulli & 0.60 & 43.7 & 8.0 & +35.7 [+29.6, +41.8] \\
common & fixed SRS & 0.35 & 19.3 & 4.3 & +15.0 [+9.9, +20.1] \\
common & fixed SRS & 0.60 & 45.3 & 5.7 & +39.7 [+33.3, +46.1] \\
group-specific & Bernoulli & 0.35 & 14.0 & 5.3 & +8.7 [+4.5, +12.8] \\
group-specific & Bernoulli & 0.60 & 24.7 & 7.7 & +17.0 [+11.7, +22.3] \\
group-specific & fixed SRS & 0.35 & 15.3 & 6.7 & +8.7 [+4.0, +13.4] \\
group-specific & fixed SRS & 0.60 & 18.7 & 7.0 & +11.7 [+6.6, +16.8] \\
\bottomrule
\end{tabular}
\par\smallskip
\begin{minipage}{0.97\linewidth}
\footnotesize Notes: C and G denote the repeated-common and group-specific smooth-profile designs; the latter uses prespecified nonbinary loadings. Each cell uses 300 networks and a one-sided nominal level of 5\%. Panel A reports descriptive ranges across the two respondent-sampling schemes. Panel B uses paired network and respondent draws; its normal intervals are descriptive and unadjusted for multiple comparisons.
\end{minipage}
\end{table}

%% file: pair_sign_minimal_t5_availability_table.tex
\begin{table}[t]
\centering
\caption{Finite-sample behavior when broad and outside roles have separated activity supports. Entries are ranges across Bernoulli and fixed-size-SRS respondent sampling.}
\label{tab:minimal-t5-availability}
\scriptsize
\setlength{\tabcolsep}{5pt}
\begin{tabular}{rrccc}
\toprule
$n$ & Resp. & Full rejection (\%) & Lower floor (\%) & Nonzero pairs/$n$ \\
\midrule
80 & 0.35 & 1.0--3.3 & 91.3--93.0 & 0.0215--0.0228 \\
80 & 0.60 & 8.0--9.3 & 50.7--54.0 & 0.0692--0.0716 \\
160 & 0.35 & 0.0--0.3 & 93.7--96.3 & 0.0107--0.0109 \\
160 & 0.60 & 5.7--6.3 & 68.0--68.0 & 0.0352--0.0359 \\
320 & 0.35 & 0.0--0.3 & 99.0--99.7 & 0.0043--0.0045 \\
320 & 0.60 & 1.3--2.0 & 87.0--92.0 & 0.0124--0.0126 \\
640 & 0.35 & 0.0--0.0 & 100.0--100.0 & 0.0008--0.0010 \\
640 & 0.60 & 0.0--0.0 & 98.7--100.0 & 0.0025--0.0027 \\
\bottomrule
\end{tabular}
\par\smallskip
\begin{minipage}{0.92\linewidth}
\footnotesize Notes: Each cell uses 300 networks. Separated activity supports make the procedure unavailable despite retaining the broad loading proportions. Collision mass vanishes over the reported size sequence, activating the lower scale floor and driving full-pair rejection to zero.
\end{minipage}
\end{table}